\definecolor{darkgreen}{rgb}{0,0.5,0}
\definecolor{darkred}{cmyk}{0,1,1,0.4}
\definecolor{vertfonce}{rgb}{0.20, 0.46, 0.25}
\definecolor{rougefonce}{rgb}{0.64, 0.09, 0.20}
\theoremstyle{definition} \newtheorem{definition}{Definition}[section]
\theoremstyle{plain} \newtheorem{theorem}[definition]{Theorem}
\theoremstyle{plain} 
\theoremstyle{plain} \newtheorem{proposition}[definition]{Proposition}
\theoremstyle{plain} \newtheorem{lemma}[definition]{Lemma}
\theoremstyle{plain} 
\theoremstyle{definition} \newtheorem{remark}[definition]{Remark}
\theoremstyle{definition} 
\numberwithin{equation}{section}
\newcommand{\N}{\mathbb{N}}
\newcommand{\Z}{\mathbb{Z}}
\newcommand{\R}{\mathbb{R}}
\newcommand{\C}{\mathbb{C}}
\DeclareMathOperator{\supp}{\mathrm{supp}}
\newcommand{\bp}{\mathbf{p}}
\newcommand{\bx}{\mathbf{x}}
\newcommand{\br}{\mathbf{r}}
\newcommand{\by}{\mathbf{y}}
\newcommand{\bA}{\mathbf{A}}
\def\b{|}
\newcommand{\im}{\mathrm{i}}
\DeclarePairedDelimiterX\innerp[2]{\langle}{\rangle}{
	#1, #2
}
\titleformat{\section}{\centering\Large\bfseries}{\thesection \ --}{0.7em}{\Large\bfseries #1}
\titleformat{\subsection}{\centering\large\bfseries}{\thesubsection \ --}{0.4em}{\large\bfseries #1}
\titleformat{\subsubsection}{\centering\bfseries}{\thesubsubsection \ --}{0.4em}{\bfseries #1}
\theoremstyle{plain}
\newtheorem*{theorem*}{Theorem}
\theoremstyle{definition}
\theoremstyle{remark}
\numberwithin{equation}{section}
\DeclareMathAlphabet{\mathpzc}{OT1}{pzc}{m}{it}
\def\eps{\varepsilon}
\def\C {\mathbb{C}}
\def\N {\mathbb{N}}
\def\R {\mathbb{R}}
\def\Z {\mathbb{Z}}
\newcommand\1{{\ensuremath {\mathds 1} }}
\newcommand{\keyword}[1]{\textbf{#1}}
\newcommand{\0}{\mathbf{0}}
\newcommand{\bJ}{\mathbf{J}}
\newcommand{\bX}{\mathbf{X}}
\newcommand{\bz}{\mathbf{z}}
\newcommand{\cE}{\mathcal{E}}
\newcommand{\cH}{\mathcal{H}}
\newcommand{\sA}{\textup{A}} 
\newcommand{\sSU}{\textup{SU}}
\newcommand{\sx}{\textup{x}}
\newcommand{\CLGN}{C_{\rm LGN}}
\newcommand{\NLL}{\mathrm{NLL}}
\newcommand{\PsiTrial}{\Psi^{\mathrm{trial}}_N}
\newcommand{\mat}[1]{
\begin{bmatrix}
#1
\end{bmatrix}
}
\DeclareMathOperator{\infspec}{\mathrm{inf\, spec\,}}
\DeclareMathOperator{\Curl}{\mathrm{curl}}
\DeclareMathOperator{\diag}{\mathrm{diag}}
\newcommand{\inp}[1]{\left\langle#1\right\rangle}
\newcommand{\slot}{\cdot}
\newcommand{\bAR}{\mathbf{A}^{\!R}}
\newcommand{\sym}{\mathrm{sym}}
\newcommand{\asym}{\mathrm{asym}}
\newcommand{\loc}{\mathrm{loc}}
\newcommand{\bDelta}{{\mbox{$\triangle$}\hspace{-8.0pt}\scalebox{0.8}{$\triangle$}}}
\def\XXint#1#2#3{{\setbox0=\hbox{$#1{#2#3}{%
\int}$ }
\vcenter{\hbox{$#2#3$ }}\kern-.6\wd0}}
\patchcmd{\@setaddresses}{\indent}{\noindent}{}{}
\patchcmd{\@setaddresses}{\indent}{\noindent}{}{}
\patchcmd{\@setaddresses}{\indent}{\noindent}{}{}
\patchcmd{\@setaddresses}{\indent}{\noindent}{}{}
\title{Microscopic derivation of the stationary Chern--Simons--Schr\"odinger equation for almost-bosonic anyons}
\author[A. Ataei]{Alireza ATAEI}
\address{(A. Ataei) Department of Mathematics, Uppsala University, Box 480, SE-751 06, Uppsala, Sweden}
\email{\url{alireza.ataei@math.uu.se}}
\author[D. Lundholm]{Douglas LUNDHOLM}
\address{\noindent (D. Lundholm) Department of Mathematics, Uppsala University, Box 480, SE-751 06, Uppsala, Sweden}
\email{\url{douglas.lundholm@math.uu.se}}
\author[T. Girardot]{Th\'eotime GIRARDOT}
\address{(T. Girardot) Gran Sasso Science Institute GSSI, Viale Francesco Crispi, 7 Rectorate, Via Michele Iacobucci, 2, 67100 L'Aquila, Italy  }
\email{\url{theotime.girardot@gssi.it}}
\subjclass[2020]{81V27; 81V70, 35Q55, 47J30}
\keywords{quantum anyon gas, Hartree-Jastrow trial wave function, emergent energy density functional, Chern-Simons-Schr\"odinger equation}
\begin{document}

\maketitle
\begin{abstract}
In this work we consider the $N$-body Hamiltonian describing the microscopic structure of a quantum gas of almost-bosonic anyons. This description includes both extended magnetic flux and spin-orbit/soft-disk interaction between the particles which are confined in a scalar trapping potential. We study a physically well-motivated ansatz for a sequence of trial states, consisting of Jastrow repulsive short-range correlations and a condensate, with sufficient variational freedom to approximate the ground state (and possibly also low-energy excited states) of the gas. In the limit $N \to \infty$, while taking the relative size of the anyons to zero and the total magnetic flux $2\pi\beta$ to remain finite, we rigorously derive the stationary Chern--Simons--Schr\"odinger/average-field--Pauli effective energy density functional for the condensate wave function. This includes a scalar self-interaction parameter $\gamma$ which depends both on $\beta$, the diluteness of the gas, and the spin-orbit coupling strength $g$, but becomes independent of these microscopic details for a particular value of the coupling $g=2$ in which supersymmetry is exhibited (on all scales, both microscopic and mesoscopic) with $\gamma=2\pi|\beta|$.
Our findings confirm and clarify the predictions we have found in the physics literature.
\end{abstract}

\setcounter{tocdepth}{3}
\tableofcontents


\section[\qquad Introduction and main results]{Introduction and main results}

\subsection{Introduction}

In this work, we consider the mathematical description of a quantum gas consisting of a large number $N$ of 2D anyons, i.e.\ effective identical quasiparticles in two spatial dimensions with \emph{exchange} statistics properties that are intermediate to those of bosons and fermions. We model anyons in their magnetic gauge representation, as bosons with a magnetic flux of intensity $2\pi\alpha$ attached to each particle, so that exchanges, considered as loops in the configuration space, generate the appropriate phase, a multiple of $e^{i\pi\alpha}$ (here we assume that exchanges commute, i.e.\ \emph{abelian} anyons). It is a difficult mathematical problem to obtain the behavior of such a gas on the side of \emph{exclusion} statistics, which in the case of bosons and fermions is the most concrete and observable way to distinguish between them, and for which powerful effective models have been developed, including the Hartree equation, the nonlinear Schr\"odinger (NLS) / Gross--Pitaevskii (GP) equation, and the Thomas--Fermi (TF) equation, respectively their celebrated density functional theories. Further, to add to the difficulty, realistic anyons also have some size and additional internal structure, rather than being pointlike and ideal. We refer to \cite{LunQva-20,Girardot-21,Lundholm-24} for recent reviews of these matters. Here we focus on a microscopic model for anyons which admits some simple internal structure, determined by a radius/size parameter $R$ and a spin-orbit coupling parameter $g$; see \cite{Grundberg-etal-91,ComMasOuv-95,Mashkevich-96}. In the almost-bosonic limit, in which the total number of magnetic flux units $\beta := \alpha(N-1)$ is finite, while at the same time $R \to 0$ at an arbitrary rate with $N \to \infty$, we use a Hartree--Jastrow trial state wave function to obtain an effective mesoscopic description of the anyon gas in terms of an energy density functional with a Chern--Simons magnetic self-interaction. Similar effective models have been considered in the physics literature in the context of the fractional quantum Hall effect (FQHE); see 
\cite{ZhaHanKiv-89,JacWei-90,LopFra-91,IenLec-92,JacPi-92,Zhang-92,Dunne-95,Dunne-99,Khare-05,HorZha-09,CorDubLunRou-19}.
In the mathematics literature, the most relevant works are 
\cite{LieSeiSolYng-05,Tarantello-08,LunRou-15,LarLun-16,CorLunRou-17,CorOdd-18,MicNamOlg-19,Oddis-20,Girardot-20,RajSig-20,GirRou-21,LamLunRou-22,AtaLunNgu-24,GirLee-24}, and we will explain these connections below.

In the first part of the article, Section \ref{sec:Mainresult}, we give the mathematical framework by introducing the Hamiltonian and the parameters of the microscopic model. We define our trial state wave function $\PsiTrial$, see \eqref{eq:trial-f}--\eqref{def:trial}, and state our main theorem together with the most direct remarks about the result. Then follows a longer discussion to better explain and contextualize the choice of our model, something which we felt was presently missing in the literature, especially from a mathematically rigorous perspective. In the next part, Section \ref{sec:twobody}, we motivate our precise Jastrow factor by studying simplified versions of the two-body problem. Finally, we bring the proofs of our main results in Section \ref{sec:manybody}.
In the appendix we have gathered some previous knowledge of more technical character, that we use.

We would like to stress the fact that, despite Section~\ref{sec:manybody} being mainly computational, making the understanding obtained in Section~\ref{sec:twobody} rigorous in the actual many-body setting, it also clarifies the interplay between many approximation schemes that are well known in the physics literature. This is discussed in Remark~\ref{rem:approx} 
and in Section~\ref{sec:R>0}. We therefore believe that even this more technical part may be of interest not only to mathematicians but also to physicists interested in the many-anyon problem.

\subsection{Mathematical main result}\label{sec:Mainresult}

In this part, we define the precise microscopic model and state our main result, the derivation of the effective mesoscopic model. To facilitate the readership, we postpone the discussions on the physics motivation and background to later sections.

Let $\alpha, R \in \R_+ := [0,\infty)$, $g \in \R$, and $V \in C^{\infty}(\R^2;\R_+)$ be parameters of our microscopic model. 
Define the magnetic gauge potentials
$\sA^R = (\bAR_1,\ldots,\bAR_N)$, with
$\bAR_j\colon \R^{2N} \to \R^2$,
\begin{equation}\label{def:ARj}
    \bAR_j(\sx):=\sum_{\substack{k=1\\ k\neq j}}^N\frac{(\bx_j-\bx_k)^{\perp}}{|\bx_j -\bx_k|_R^2}
    \quad \text{where}\quad 
    |\bx|_R:=\max(R, |\bx|),
\end{equation}
for every $\sx = (\bx_1,\cdot \cdot \cdot, \bx_N) \in \R^{2N}, N \in \N$, where $(x,y)^{\perp}:=(-y ,x)$ for any $(x,y) = \bx \in \R^2$. 
To these potentials correspond the magnetic fields $B^R_j\colon \R^{2N} \to \R$,
\begin{equation} 
    B^R_j(\sx):=\Curl_{\bx_j} \bAR_j(\sx) = 
    \begin{cases} \displaystyle
    2\pi\sum_{\substack{k=1\\ k\neq j}}^N \frac{\1_{B(0,R)}(\bx_j -\bx_k)}{\pi R^2}, 
        & R>0,\\ \displaystyle
    2\pi\sum\limits_{\substack{k=1\\ k\neq j}}^N\delta(\bx_j-\bx_k), 
        & R=0,
    \end{cases}
\end{equation} 
where $\delta(\bx)$ is the unit Dirac mass centered at the origin. 
For every $N \in \N$, we consider the following Hamiltonian operator
\begin{equation}\label{def:HN}
 H_{N}:=\sum_{j=1}^{N}\left(-\im\nabla_{x_j}+\alpha \bAR_j(\sx)\right)^{2}
  +  \frac{g\alpha}{2} \sum_{j=1}^{N}B^R_j(\sx)
  +\sum_{j=1}^{N}V(\bx_{j}),
\end{equation}
acting as a self-adjoint operator in the Hilbert space $L^2_{\sym}(\R^{2n};\C)$. We pick $u\in C_c^{\infty}(\R^2;\C)$ and respectively define the condensate function together with the Jastrow product 
\begin{equation}
    \Phi(\bx_1,\cdot \cdot \cdot,\bx_N):=\prod_{j=1}^N u(\bx_j) \quad\text{and}\quad F(\bx_1,\cdot \cdot \cdot,\bx_N):=\prod_{1\leq j<k\leq N}f(\bx_j-\bx_k),
\end{equation}
for every $(\bx_1,\cdot \cdot \cdot,\bx_N) \in \R^{2N}$, where
\begin{equation}
f(\bx) = \begin{cases}\label{eq:trial-f}
 \frac{4 \left(\frac{R}{b}\right)^{\alpha} }{2\left(1+ \left(\frac{R}{b}\right)^{2\alpha}\right) + g\left(1-\left(\frac{R}{b}\right)^{2\alpha}\right)} , \quad &\textup{if } \bx \in B(0,R),\\\\
\frac{(2+g) \left(\frac{|\bx|}{b}\right)^{\alpha} + (2-g) \left(\frac{R}{b}\right)^{2\alpha} \left(\frac{b}{|\bx|}\right)^{\alpha} }{2\left(1+ \left(\frac{R}{b}\right)^{2\alpha}\right) + g\left(1-\left(\frac{R}{b}\right)^{2\alpha}\right)}, \quad &\textup{if } \bx \in A(R,b),\\\\
1, \quad &\textup{if } \bx \in \R^2 \setminus B(0,b),
\end{cases}
\end{equation}
depending on the parameter $b >R$.
Here and in the following $B$ denotes balls/disks and $A$ annuli, centered at 0. Now, define the trial state function
\begin{equation}\label{def:trial}
    \Psi_N^{\mathrm{trial}}:=F\Phi 
        \in L^2_\sym(\R^{2N}),
\end{equation}
and the quantity
\begin{equation}\label{def:G}
    G(s ,g):= \frac{1+\frac{g}{2} - (1-\frac{g}{2}) e^{-s}}{1+\frac{g}{2} + (1-\frac{g}{2}) e^{-s}}, 
\end{equation}
defined for every $s \geq 0$ and $g>-2.$ Finally, consider the one-body density associated with $\Psi_N^{\mathrm{trial}}$ by 
\begin{equation}\label{eq:one-body-density}
    \varrho_{\Psi_N^{\mathrm{trial}}}(\bx) := N \int_{\R^{2(N-1)}} \left|\Psi_N^{\mathrm{trial}}(\bx,\bx_2,\ldots,\bx_N)\right|^2 \,\dd\bx_2 \ldots \dd\bx_N,
\end{equation}
and the self-generated gauge potential
\begin{equation} 
    \bA[\varrho]:=\frac{\bx^{\perp}}{|\bx|^2}\ast \varrho, 
    \quad \textup{in } \R^2,
\end{equation}
satisfying $\Curl \bA[\varrho] = 2 \pi \varrho$, proportional to the probability density $\varrho := |u|^2 \in L^1(\R^2;\R_+)$. 
Our main 
result is as follows:
\begin{theorem}\label{thm:E}
        For each integer $N \ge 2$,
        let the Hamiltonian $H_N$ be defined as in \eqref{def:HN} with the 
        $N$-dependent parameters
        $\alpha=\alpha_N \ge 0$, $R=R_N \ge 0$, and $g=g_N \geq 0$. Consider $ \Psi_N^{\mathrm{trial}}$ as the trial state given in \eqref{def:trial} with
        $b_N>R_N$ and
        $u\in C^{\infty}_c(B(0,R_1))$ for some fix $R_1>0$ and where $\int_{\R^2}|u|^2=1$. 
        Let 
        \begin{align*}
            \beta_N := (N-1)\alpha_N\quad\textup{and}\quad\omega_N := -\frac{\log R_N}{N},
        \end{align*} and assume that 
        \begin{equation}
        \begin{aligned}
        \label{eq:assumptions}
         &\lim_{N \to \infty} \beta_N = \beta \in \R_+,\qquad
         \lim_{N \to \infty} \omega_N = \omega \in [0,+\infty],\\
         &\lim_{N \to \infty} N^2 b_N =0,  \,\,\,\qquad\quad  \lim_{N \to \infty} \frac{\log b_N}{N} = 0. \nonumber
        \end{aligned}  
        \end{equation}
    Then,
    \begin{equation}\label{eq:effectivedensityfunction}
       \lim_{N \to \infty}     \frac{\bra{ \Psi_N^{\mathrm{trial}}}\ket{H_N  \Psi_N^{\mathrm{trial}}}}{N\norm{ \Psi_N^{\mathrm{trial}}}^2}=\int_{\mathbb{R}^{2}}\left|(-\im\nabla+\beta \bA[|u|^{2}])u\right|^{2}+\int_{\R^2}V|u|^{2}+ 2\pi \beta\, G( 2\beta\omega ,g)\int_{\R^2}|u|^4, \nonumber
    \end{equation}
    together with 
    \begin{equation}\label{eq:effectivedensity}
        \lim_{N \to \infty} \norm{ N^{-1}\varrho_{\Psi_N^{\mathrm{trial}}} - |u|^2 }_{L^1(\R^2)} = 0.
    \end{equation}
\end{theorem}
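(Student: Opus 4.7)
The overall strategy is to evaluate $\langle\Psi_N^{\mathrm{trial}},H_N\Psi_N^{\mathrm{trial}}\rangle$ directly by exploiting the Hartree--Jastrow factorization $\Psi_N^{\mathrm{trial}}=F\Phi$ and isolating three separate asymptotic contributions. Since $F$ is real, the product rule gives $(-\im\nabla_j+\alpha\bAR_j)(F\Phi)=F(-\im\nabla_j+\alpha\bAR_j)\Phi-\im(\nabla_jF)\Phi$, and the identity $|\nabla_jF|^2|\Phi|^2+\tfrac12\nabla_jF^2\cdot\nabla_j|\Phi|^2=\nabla_jF\cdot\nabla_j(F|\Phi|^2)$ followed by one integration by parts (boundary terms vanishing since $\Phi$ has compact support) yields
\begin{equation*}
\sum_j\int|(-\im\nabla_j+\alpha\bAR_j)(F\Phi)|^2=\sum_j\int F^2|(-\im\nabla_j+\alpha\bAR_j)\Phi|^2-\sum_j\int F(\Delta_jF)|\Phi|^2.
\end{equation*}
Adding the external potential and the spin--orbit term, the expectation decomposes into a \emph{dressed mean-field} block $\sum_j\int F^2[|(-\im\nabla_j+\alpha\bAR_j)\Phi|^2+V(\bx_j)|\Phi|^2]$ on the condensate and a \emph{short-range} block $\sum_j\int[-F^{-1}\Delta_jF+\tfrac{g\alpha}{2}B^R_j]F^2|\Phi|^2$ encoding the Jastrow correlations together with the soft-disk contact interaction.

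Normalization $\|F\Phi\|^2\to 1$ and the density convergence \eqref{eq:effectivedensity} follow by a pair expansion of $F^2=\prod_{j<k}f^2(\bx_j-\bx_k)$: since $f\equiv1$ outside $B(0,b_N)$ and $|f|$ is uniformly bounded, each pair coincidence carries probability $\lesssim b_N^2\|u\|_\infty^2$ under the reference product measure $\prod|u|^2$, and summing over the $O(N^2)$ pairs yields errors $O(N^2 b_N^2)\to 0$ under $N^2 b_N\to 0$. The dressed mean-field block is treated by law of large numbers: writing $\alpha_N\bAR_j=\tfrac{\beta_N}{N-1}\sum_{k\neq j}(\bx_j-\bx_k)^\perp/|\bx_j-\bx_k|_{R_N}^2$ as an empirical average of $N-1$ iid-like terms, it converges in $L^2$-mean to $\beta\bA[|u|^2](\bx_j)$; combined with $F^2\to 1$ in the bulk this gives $N^{-1}\sum_j\int F^2|(-\im\nabla_j+\alpha\bAR_j)\Phi|^2\to\int|(-\im\nabla+\beta\bA[|u|^2])u|^2$ modulo the pair-diagonal part of $\alpha^2|\bAR_j|^2$ (which carries a $\log(1/R_N)$ divergence and must be transferred to the self-interaction), and analogously $N^{-1}\sum_j\int F^2V(\bx_j)|\Phi|^2\to\int V|u|^2$.

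The main obstacle is the identification of the exact self-interaction coefficient $2\pi\beta\,G(2\beta\omega,g)$. Writing
\begin{equation*}
F^{-1}\Delta_jF=\sum_{k\neq j}\frac{\Delta f_{jk}}{f_{jk}}+\sum_{k\neq k';\,k,k'\neq j}\nabla\log f_{jk}\cdot\nabla\log f_{jk'},
\end{equation*}
and splitting the $\alpha^2|\bAR_j|^2$ and $B^R_j$ sums analogously, the three-body off-diagonal pieces are bounded by Cauchy--Schwarz and vanish under the hypotheses. The pair-diagonal contributions, combined with the transferred diagonal of $\alpha^2|\bAR_j|^2$, collect per ordered pair $(j,k)$ into a two-body radial quadratic form in $\bz=\bx_j-\bx_k$ integrated against $|u(\bx_j)|^2|u(\bx_k)|^2$; localization on $|\bz|\lesssim b_N\to 0$ and continuity of $|u|^2$ collapse this pair density to $\int|u|^4$. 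Direct radial integration using the explicit form \eqref{eq:trial-f} --- whose outer part $c_1|\bz|^\alpha+c_2|\bz|^{-\alpha}$ solves the free radial magnetic equation on $(R_N,b_N)$ and whose inner constant is matched to the soft-disk spin--orbit contribution via the derivative jump at $|\bz|=R_N$ --- evaluates the per-pair integrated contribution as $2\pi\alpha_N\,G(2\alpha_N\log(b_N/R_N),g)+o(\alpha_N)$, the algebraic structure of $G$ being the main output of the two-body analysis of Section~\ref{sec:twobody}. Summing over the $N(N-1)$ ordered pairs arising from $\sum_j\sum_{k\neq j}$, dividing by $N$, and using $(N-1)\alpha_N\to\beta$ together with $2\alpha_N\log(b_N/R_N)\to 2\beta\omega$ (from $\log b_N/N\to0$ and $-\log R_N/N\to\omega$) produces $2\pi\beta\,G(2\beta\omega,g)\int|u|^4$. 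The critical cancellation --- between the logarithmic $R_N$-divergence of the pair-diagonal $\alpha^2|\bAR_j|^2$ and a corresponding logarithmic term in the Jastrow-gradient integral --- is precisely the algebraic content of the rational function $G$, which is why the form of $f$ in Section~\ref{sec:twobody} is essential to close the computation.
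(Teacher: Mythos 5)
Your strategy matches the paper's in essentials: split the energy into a dressed mean-field block and a short-range block, treat the former by a law of large numbers, and extract the $G$-factor from the two-body radial scattering energy (the paper's Section~\ref{sec:twobody}). The algebraic variant you take---integrating by parts to convert $|\nabla F|^2$ into $-F\Delta F$, which automatically kills the $\nabla F$--$\alpha\bAR_j$ cross term---is clean, whereas the paper keeps $\nabla F/F = \alpha K_{R,b}$ explicit and verifies that the remaining cross terms vanish (its $\mathcal{J}_b$ and $S^{3\mathrm{body}}_c$). The price of your route is that $\Delta_j F$ is a distribution with Dirac layers at the kinks $|\bz|=R$ and $|\bz|=b$ (where $f$ is continuous but $f'$ jumps); the $G$-factor must then emerge from the boundary contributions of a second, pairwise integration by parts, rather than from the paper's pointwise cancellation of the $\pm 2\lambda_1\lambda_2\alpha^2/r^2$ cross terms in $|\nabla f|^2 + \alpha^2 r^{-2}f^2$ leading to \eqref{eq:averageenergytwobody}. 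Both yield the same $G$, but you should make the boundary-term bookkeeping explicit since that is where the theorem's content actually lives.

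Two points are genuinely underspecified. (i) Your pair-coincidence bound $O(N^2 b_N^2)$ misses the extra factor $\alpha_N\sim 1/N$ coming from $\norm{1-f^2}_{L^1}\lesssim (1+g^2)b_N^2\alpha_N$ (Lemma~\ref{lem:propjastrow}); the sharp rate is $O(N b_N^2\beta)$. Your coarser bound still closes under $N^2 b_N\to 0$, but the improved rate is what lets the paper balance errors in all the blocks. (ii) Replacing $F^2$ by $1$ in the three-body magnetic cross term that produces $\beta^2\int|u|^2|\bA[|u|^2]|^2$ is not a consequence of Cauchy--Schwarz alone: the naive bound $|F^2-1|\le 1$ applied to the non-negative cyclic sum $\sum_{\{1,2,3\}}(\bx_1-\bx_2)^{-\perp}_R\cdot(\bx_1-\bx_3)^{-\perp}_R$ just returns the main term, not a small error. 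The paper (Lemma~\ref{lem:s3b}) peels off the triplet Jastrow $F_3:=f_{12}f_{13}f_{23}$ and invokes the many-anyon Hardy bound $0\le\sum_{\{1,2,3\}}(\bx_1-\bx_2)^{-\perp}_R\cdot(\bx_1-\bx_3)^{-\perp}_R\lesssim-\Delta_{\bx_1}$ (Lemma~\ref{lem:3body}) to transfer the singularity onto derivatives of the Jastrow. A more elementary route absorbs $1-f^2$ against a single singular factor via $\int(1-f^2(\bz))|\bz|_R^{-1}\,\dd\bz\lesssim b_N$ after a case split, but either way this step needs a written argument rather than a hand-wave.
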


It is worth noticing some specific values of $G( 2\beta\omega ,g)$ in the different regimes of the coupling constant $g$ and rate of convergence of the radius $R_N = e^{-N\omega_N} \to 0$:
\begin{align}
2\pi\beta  G( 2\beta\omega ,g)=  \begin{cases}
        g\pi \beta\quad\text{in the sub-critical case}\quad \omega=0,\\
        2\pi\beta \quad\text{in the super-critical case}\quad \omega=+\infty,\\ 
        2\pi\beta \quad\text{in the supersymmetric case}\quad g=2,\\
        2\pi\beta \coth{\beta \omega}\quad\text{in the hard-disk case}\quad g=+\infty,\\
        2\pi\beta \tanh{\beta \omega}\quad\text{in the non-interacting/spinless case}\quad g=0.
    \end{cases}
    \label{eq:DeltaEcases}
\end{align}

We also have the following immediate remarks concerning the result:

\begin{remark}[Sub-critical/polynomial limit]\label{rem:omega=0}
    For $\omega=0$ we may take $f=1$. In this case $F=1$ and thus $b$ is irrelevant. This allows any polynomial rate $R_N \to 0$ for which \eqref{eq:effectivedensityfunction} remains valid with $G(0,g) = g/2$.
    This case has been proved already in \cite{LunRou-15} for $g=0$ and carries over immediately to any $g \in \R$.
    For $\omega>0$, or a non-polynomial rate, the energy 
    (l.h.s. of \eqref{eq:effectivedensityfunction})
    diverges for $f=1$ due to the singular two-body interaction.
    Thus, the extensions to the critical and the super-critical cases are the main novelty of this work.
\end{remark}

\begin{remark}[Scattering length]\label{rem:scattlen}
    We can define a relative 2-body ``scattering energy'' (compare \cite[Appendix~C]{LieSeiSolYng-05} and \cite[Chapter~5.1]{Rougerie-21} for the ``scattering length'' for bosons) for this problem as
    $$
        E_2 := \inf \left\{ \int_{B(0,1)} \left( |\nabla f|^2 + \alpha^2 \frac{|\bx|^2}{|\bx|_R^4}|f|^2 + \alpha g \frac{\1_{B(0,R)} }{R^{2}}|f|^2 \right) : f \in H^1(B(0,1)), \ f|_{|\bx|=1}=1 
        \right\}.
    $$
    Our Jastrow pair correlations $f$ in $\PsiTrial$ are defined so as to approximate this minimizer on a scale $|\bx|<b$, and yields an energy per particle $(N-1)E_2 \simeq 2\pi\beta G(2\beta\omega,g)$ as $N \to \infty$ (heuristically, we estimate the total energy by multiplying the relative mass 2 times number of pairs $N(N-1)/2$).
    Therefore, we believe that our choice of $f$ and obtained 2-body correlation energy in \eqref{eq:effectivedensityfunction} is optimal to leading order for a wide range of the parameters.
    A similar expression was found using a harmonic regularization in \cite{Mashkevich-96},
    and we discuss these aspects further in Section~\ref{sec:twobody}.
\end{remark}

\begin{remark}[Supersymmetry and scale invariance]\label{rem:susy}
    For the special case of $g=2$, where the spin interaction is supersymmetric, we derive that $G(2\beta \omega,g) = 1$ 
    for all $\omega$ and
\begin{align*}
     2\pi \beta \int_{\R^2}|u|^4 = \int_{\R^2}  \left(\Curl \left( \beta \bA[|u|^{2}] \right)\right)|u|^2
\end{align*}
    is of the form of a Pauli spin-orbit interaction aligned with the self-generated field.
\end{remark}

\begin{remark}[Hard-disk limit]\label{rem:harddisk}
    In the hard-disk limit $g_N \to +\infty$, we cannot simultaneously treat sub-critical cases $\omega_N\to 0$. In that regime, $2\pi\beta G(2\beta\omega,g=+\infty) \simeq \frac{2\pi}{\omega}$, denoting a breakdown of our approach. 
    Indeed, considering the scattering energy $E_2$ above, its minimization constrains $f$ to be exactly zero inside a disk of radius $R$. In the limit $\omega \to 0$ we should then rather consider the high-density regime of such a strongly interacting model (see \cite{Dyson-57,LieSeiSolYng-05}). Our limit case $\omega \to 0$, $g=+\infty$ nevertheless indicates that the dependence on $\beta$ would disappear to leading order, as may be expected.
\end{remark}

\begin{remark}[Simplified Jastrow ansatz]\label{rem:simplejastrow}
    ``Jastrow-type'' trial states have been considered in the interacting bosonic context ever since the old works of Bijl \cite{Bijl-40}, Dingle \cite{Dingle-49}, Jastrow \cite{Jastrow-55}, Dyson \cite{Dyson-57}, as well as more recently 
    \cite{MicNamOlg-19}.
    We could also consider a much simpler function $f_{\theta}(\bx)=|\bx|^{\alpha\theta}$ instead of \eqref{eq:trial-f} to build our Jastrow product \eqref{def:trial}, with $\theta \in [0,1]$ a variational parameter. A final optimization over $\theta$ would then provide an accurate upper bound, but only in some specific regimes of the parameters $g, R, \beta$. We refer to Section~\ref{sec:ftheta} for a more detailed discussion and the precise limit \eqref{eq:effectivedensityfunction1} we obtain in this simplified case.
\end{remark}

\begin{remark}[Attractive case and stability]\label{rem:attractive}
    Above and in the rest of the work we assume $g \ge 0$, however,
    we can also consider $g<0$ which corresponds to the attractive case. Our proof relies on the Jastrow factor $f$ satisfying $ 0\leq f\leq 1$, and this holds still if
    \begin{align*}
    g>-2, \qquad 
    g \geq  -2 \frac{1-e^{-\beta \omega}}{1+e^{-\beta \omega}}.
\end{align*} Hence, up to the threshold above, we expect the result of Theorem \ref{thm:E} holds, and in this regime of $g$ we always have
\begin{align*}
     G( 2\beta\omega ,g) \geq 0,
\end{align*}
which gives a repulsive average energy functional. Although our proof fails to work for more negative $g$, we believe that the same result should apply for 
\begin{align*}
     g > -2 \frac{1+e^{-2\beta \omega}}{1-e^{-2\beta \omega}} =: g_{\mathrm{crit}},
\end{align*}
such that at the critical $g_{\mathrm{crit}}$ we have 
\begin{align*}
   \lim_{g \to g_{\mathrm{crit}}^+} G( 2\beta\omega ,g) = - \infty.
\end{align*}
We predict that for $ g \leq g_{\mathrm{crit}}$, the many-body system fails to be stable. Moreover, since on the effective mesoscopic side we are interested in a minimizer of the energy functional over $u \in H^1(\R^2)$ with the constraint $\int_{\R^2} |u|^2=1$, it is crucial that $2\pi \beta G(2\beta \omega,g) \geq - \gamma_{\ast}(\beta)$ holds; see Section \ref{sec:appendix-CSS}. In conclusion, we predict that for the stability of the system we need 
\begin{align*}
    g \geq \max \left( -2 \frac{1+e^{-2\beta \omega}}{1-e^{-2\beta \omega}} , -2 \frac{1-e^{-2\beta \omega} +  \frac{\gamma_{\ast}(\beta)}{2\pi \beta} \left(1+e^{-2\beta \omega}\right) }{1+e^{-2\beta \omega} +  \frac{\gamma_{\ast}(\beta)}{2\pi \beta} \left(1-e^{-2\beta \omega}\right)} \right).
\end{align*}
\end{remark}

\begin{remark}
\label{rem:approx}
It is interesting and perhaps illustrative to note that our model \eqref{def:HN} together with the Jastrow trial state \eqref{def:trial} combines the three main perturbative fashion approaches studied in \cite{Ame_95}. The action of the Hamiltonian on the  Jastrow factor is indeed discussed in \cite[Section A]{Ame_95} in order to renormalize the logarithmic divergence for $\alpha^2 /|\bx|^2$. The results are then compared to two other approaches. \cite[Section B]{Ame_95} consists in regularizing the wave function along the diagonal and the last one, \cite[Section C]{Ame_95}, in adding by hand a repulsive delta potential (with no physical effect) to the Hamiltonian. These two last procedures correspond, respectively, to our smoothing procedure $R\to 0$ and the spin-orbit coupling of strength $g$. Our calculations then precisely show how the three different approaches combine in the final quantity $G(2\beta\omega ,g)$ of \eqref{def:G} and clarify the effect of each of them.
\end{remark}

\subsection{Discussion: origins and consequences for physics}

We now turn to the physical background for the model defined above, mention the previous results that are the most relevant for our work, and discuss the motivation for our trial state ansatz as well as possible consequences of our main results for the mathematical physics of the anyon gas.
This subsection can be skipped if one is already familiar with the context.

\subsubsection{Pointlike and ideal anyons}

Anyons are identical quantum particles which carry a generalization \cite{LeiMyr-77,GolMenSha-80,GolMenSha-81,Wilczek-82a,Wilczek-82b} of the exchange quantum statistics of bosons and fermions. This is possible in two spatial dimensions if orientation symmetry is suitably broken (as well as in one dimension, which we do not consider here). Namely, for bosons and fermions in the Euclidean plane $\R^2$, one requires that their $N$-body Schr\"odinger wave function $\Psi\colon \R^{2N} \to \C$ obeys the symmetry condition
\begin{equation}\label{eq:Psi-perm}
	\Psi(\bx_1,\ldots,\bx_k,\ldots,\bx_j,\ldots,\bx_N) 
	= \pm \Psi(\bx_1,\ldots,\bx_j,\ldots,\bx_k,\ldots,\bx_N),
\end{equation}
exchanging/permuting particles $j \leftrightarrow k$,
where $+$ defines bosons, and $-$ defines fermions.
We denote the Hilbert space of square-integrable such permutation symmetric resp. antisymmetric functions by $L^2_\sym(\R^{2N})$ resp. $L^2_\asym(\R^{2N})$.
For anyons, we require\footnote{This equation assumes a simple counter-clockwise exchange of two particles with no other particles involved. In case that $p$ other particles are enclosed under a simple exchange, the exchange phase must be $e^{\im(2p+1)\alpha\pi}$. In general, one must use the braid group of $N$ strands to specify the phase of an exchange, and also note that the function $\Psi$ as defined is necessarily multivalued except when $\alpha \in \Z$ (bosons or fermions).}
\begin{equation}\label{eq:Psi-exch}
    \Psi(\bx_1,\ldots,\bx_k,\ldots,\bx_j,\ldots,\bx_N) 
	= e^{\im\alpha\pi} \Psi(\bx_1,\ldots,\bx_j,\ldots,\bx_k,\ldots,\bx_N),
\end{equation}
where the simple exchange phase $e^{\im\alpha\pi}$, or equivalently the statistics parameter $\alpha \in \R (\mathrm{mod}\,2)$ defines the type of anyon considered (these are known as \emph{abelian} anyons; the concept can be generalized to nonabelian unitary matrices or exchange operators, although we do not consider that here).
Note that for these \emph{pointlike} and \emph{ideal} anyons, there is a periodicity $\alpha \to \alpha + 2$, and often (but not always) also complex conjugation symmetry $\alpha \to -\alpha$, so we may often impose the simplifying assumption $\alpha \in [0,1]$, where $\alpha=0$ corresponds to bosons and $\alpha=1$ to fermions.

A typical microscopic $N$-body Hamiltonian for the energy of a gas of such particles is
\begin{equation}\label{eq:hamil-start}
 H_{N} = \sum_{j=1}^{N} (-\im\nabla_{\bx_j})^{2}
 + \sum_{j=1}^{N} V(\bx_j)
 + \sum\limits_{\substack{j,k=1\\ j<k}}^N W(\bx_j-\bx_k),
\end{equation}
where we use the non-relativistic kinetic energy $\bp^2/(2m)$ and choose units suitably to set the physical constants $c=e=\hbar=1$ and $2m=1$. We allow for an external one-body potential $V$, and possibly a two-body interaction potential $W$, which will also aid our discussion on the different possible types of anyons at fixed $\alpha$.
We shall typically make the assumption that $V\colon \R^2 \to \R_+$ is trapping, 
in the sense that $V(\bx)\to \infty$ when $|\bx|\to \infty$. More precise detail on $V$ will be provided later when stating the results.
The operator \eqref{eq:hamil-start} is required to be self-adjoint for unitary quantum dynamics, and is formally defined via a suitable quadratic form (Friedrichs extension; see Appendix~\ref{sec:appendix-domains} and cf. \cite{LunSol-13b,CorOdd-18,LunQva-20}), and involves some subtleties we shall return to below (then also fixing $W$).

In practice, one often uses an equivalent description (known as the magnetic/singlevalued gauge rather than the anyon/multivalued gauge), which implements the exchange conditions \eqref{eq:Psi-exch} by applying the canonical isomorphism
$$
    \R^2 \ni  (x,y) = \bx
    \quad \leftrightarrow \quad
    z = x + \im y \in \C,
$$
and similarly for $\bx_j \leftrightarrow z_j$, and writing
$$
    \Psi(\bx_1,\ldots,\bx_N) 
    =\prod_{j<k}e^{\im\alpha\phi_{jk}}\Psi_0(\bx_1,\ldots,\bx_N),
    \quad \text{where} \quad
    \phi_{jk}=\arg\frac{z_{j}-z_{k}}{|z_{j}-z_{k}|},
$$
with $\Psi_0 \in L^2_\sym$ a bosonic wave function. 
The Hamiltonian \eqref{eq:hamil-start} then becomes
\begin{equation}\label{eq:hamil-transmuted}
    H_N = \sum_{j=1}^{N}(-\im\nabla_{\bx_j} + \alpha \bA_j)^{2}
    + \sum_{j=1}^{N}V(\bx_j)
    + \sum_{j<k} W(\bx_j-\bx_k),
\end{equation}
where we have introduced the magnetic gauge potentials
\begin{equation}\label{eq:mag-pointlike}
    \bA_j(\sx) :=\sum\limits_{\substack{k=1\\ k\neq j}}^N \frac{(\bx_{j}-\bx_{k})^{\perp}}{|\bx_{j}-\bx_{k}|^{2}}
    \quad\text{with}\quad 
    \Curl_{\bx_j} \bA_j(\sx) = 2\pi\sum\limits_{\substack{k=1\\ k\neq j}}^N \delta(\bx_j-\bx_k).
\end{equation}
Note that indeed $U^{-1} \nabla_\bx U = \nabla_\bx + \im\bx^\perp/|\bx|^2$ if $U(z) = e^{\im\arg z} = z/|z|$, which represents an Aharonov-Bohm unit phase at $\bx=\0$ or $z=0$.
Thus, the operator $\alpha\bA_j$ is the statistical gauge vector potential felt by the particle $j$ due to the influence of all the other particles $k \neq j$, by the attachment of Aharonov-Bohm magnetic point fluxes of intensity $2\pi\alpha$ to each particle.

Now, in this description it is clear that there are potential problems or ambiguities in the definition of the Hamiltonian \eqref{eq:hamil-transmuted} since the potentials $\bA_j$ are singular (not $L^2_\loc$) close to the diagonals (coincidence set) of the configuration space
$$
    \bDelta := \left\{ \sx = (\bx_1,\ldots,\bx_N) \in \R^{2N} : \exists j \neq k \ \text{s.t.} \ \bx_j = \bx_k \right\},
$$
and indeed this turns out to be a crucial observation, which has led to much discussion in the literature.
We shall only bring up the most relevant points for our purposes,
and refer to the review articles \cite{Lundholm-23,LunQva-20} and Ph.D. theses \cite{Oddis-20,Girardot-21} for more details on the definition of the anyon gas and on the statistics transmutation phenomenon.

The complete theory of self-adjoint extensions for the class of operators \eqref{eq:hamil-transmuted}, starting from the minimal domain $C^\infty_c(\R^{2N} \setminus \bDelta) \cap L^2_\sym$, is in the general case $N$ still poorly understood, however for $N=2$ (and reducing to the relative coordinates $\bx = \bx_1-\bx_2$) it is mathematically well understood (see \cite{BorSor-92,AdaTet-98,CorOdd-18,Oddis-20,BorCorFer-24}), and we shall therefore use this special case with $V=0=W$ to set our notations and expectations. Thus, starting from the minimal domain $C^\infty_c(\R^2 \setminus \{0\}) \cap L^2_\sym$ for the kinetic energy operator $T = (-\im\nabla_\bx + \alpha\bx^\perp/|\bx|^2)^2$ with $\alpha \in (0,1)$, there is a one-parameter family of self-adjoint extensions, parametrizing solutions to the defect indices with behavior
$$
    \Psi(\bx) \sim a r^\alpha + b r^{-\alpha},
    \qquad \text{as $r = |\bx| \to 0$}.
$$
Most of these extensions have scale and some of those have negative bound states, but exactly two of the extensions are both non-negative and scale-covariant. These are the Friedrichs extension ($b=0$) and the Krein extension ($a=0$), corresponding to the minimal form domain (i.e. the largest energy or the most positive; in $H^1_\sym(\R^2)$) resp. the maximal/extended form domain (i.e. the smallest energy or the least positive; not in $H^1_\sym(\R^2)$); cf. \cite{Simon-79b,AloSim-80}. In the physics literature, these are referred to as hard-core/repulsive/regular resp. soft-core/attractive/singular anyons; see \cite{Girvin-etal-90,Grundberg-etal-91,ManTar-91,Mashkevich-96}.
Further, in the well-studied $N$-anyon problem with harmonic confinement, $V(\bx) = |\bx|^2$, both of these types of anyons can be found among the exact (generalized) eigenfunctions of the problem and are then also referred to as the interpolating solutions (see \cite{Chou-91b,Chou-91a,ChiSen-92}) resp. the noninterpolating solutions (see \cite{MurLawBhaDat-92}).
Namely, for small enough $\alpha N$, the ground-state energy\footnote{%
Here we mean that we include the respective solutions in the domain, and ignore any issues at $\bDelta$. This can be done consistently at least for $N=2$, and for all $N$ in the positive case $\Psi_+$.}
$E(N) := \infspec H_N$ of \eqref{eq:hamil-transmuted} with $V(\bx)=|\bx|^2$ and $W=0$ (outside $\bDelta$) is exactly
$$
    E(N) = 2N \pm \alpha N(N-1),
$$
with the ground state
\begin{equation}\label{eq:true-gs}
    \Psi_{\pm} \propto \prod_{j<k} |\bx_j-\bx_k|^{\pm\alpha} e^{-\frac{1}{2}|\sx|^2}.
\end{equation}
The regular state $\Psi_+$ is in $H^1_\sym$ and in the domain of the Friedrichs extension, while the singular $\Psi_-$ is neither and even drops out of $L^2$ for large enough $\alpha$ (for $\alpha N \ge 2$).
In \cite{ChiSen-92} also a more general bound was given for the ground-state energy of regular states with total angular momentum $L$ ($L=0$ for $\Psi_{\pm}$ in \eqref{eq:true-gs}):
\begin{equation}\label{eq:angmom-bound}
    E(N) \ge 2\left(N + \left|L + \alpha \frac{N(N-1)}{2}\right| \right)
    \qquad \forall \alpha \in [0,1].
\end{equation}
Except for the eigenstates \eqref{eq:true-gs}, and some similar ones with higher energy, the spectrum of $H_N$ is largely unknown for $N > 2$
(see \cite{MurLawBraBha-91,SpoVerZah-91,SpoVerZah-92,ChiSen-92} for numerics and schematics).
Some rigorous bounds for arbitrary $\alpha$ were considered in \cite{LunSol-13a,LunSei-17,LunQva-20}.

For definiteness and following \cite{Lundholm-23}, at general $N$ we shall refer to the Friedrichs extension of \eqref{eq:hamil-transmuted}, i.e. in $H^1_\sym(\R^{2N})$ and $H^2_\loc(\R^{2N} \setminus \bDelta)$, with the expected regular behavior
$$
    \Psi(\sx) \sim |\bx_j-\bx_k|^\alpha 
    \quad \text{as} \quad
    \bx_j - \bx_k \to 0, 
    \qquad \text{as \keyword{ideal anyons}}, 
$$
and to any generalized solutions to the anyon Hamiltonian with the singular behavior
$$
    \Psi(\sx) \sim |\bx_j-\bx_k|^{-\alpha}
    \quad \text{as} \quad
    \bx_j - \bx_k \to 0, 
    \qquad \text{as \keyword{kreinyons}}.
$$
Intuitively, we may think of the corresponding choice of domain for self-adjoint $H_N$ as a choice of the pair interaction $W$ being either a positive or negative $\delta(\bx)$, i.e. yielding a point interaction supported on $\bDelta$.
See the Appendix \ref{sec:appendix-domains} for more precise information about the form domain for ideal anyons, clarifying that it actually consists of functions symmetrized from $H^1_\asym$  (i.e. more akin to fermions than to bosons).

\subsubsection{Extended anyons and average-field approximation}

Another means of dealing with the singular nature of the statistics gauge potentials is by regularizing the flux.
In the ``average-field'' approach\footnote{See Wilczek's book \cite{Wilczek-90} intro chapter concerning the terminology ``average-field'' vs. ``mean-field''.}, we replace the above sharply localized point-particle fluxes by the smoother magnetic field generated by their averaged (locally, on a mesoscopic scale) density distribution. This is done by first extending the particles' fluxes to finite but microscopic disks (i.e. ``smearing them out''), then averaging them by condensing or 
reducing their many degrees of freedom to a collective one-body state with a mesoscopic local density, and finally taking a limit of small extension at a rate depending on the actual energy and length scales of the condensed/collective problem. Note that actual emergent anyons may indeed be expected to have a finite size (on the order of the magnetic length of a strong external field; see, e.g., \cite{LunRou-16,Yakaboylu-etal-19,LamLunRou-22}), but since that is typically much smaller than the domain of an experiment, it is legitimate to consider a pointlike limit if the relevant microscopic information can be retained.

Thus, in the first step of this program, let us consider the 2D Coulomb potential generated by a unit charge smeared over the disc of radius $R$:
\begin{equation}\label{eq:wr}
 w_{R}(\bx) := \left(\log\b\cdot\b *\frac{\1_{B(0,R)}}{\pi R^2}\right)(\bx),
 \quad \text{with the convention} \quad 
 w_{0} := \log\b\cdot\b.
\end{equation}
By radial symmetry, we find that
$$
    w_R(\bx) = \begin{cases}
        \log |\bx|, & |\bx| > R, \\
        \log R + \frac{1}{2}(|\bx|^2/R^2-1), & |\bx| \le R,
    \end{cases}  
$$
and thus $\nabla w_R(\bx) = \bx_R^{-1} := \bx/|\bx|_R^2$, where 
$|\bx|_R:=\max(R, |\bx|)$. We define \keyword{$R$-extended anyons} by the smeared magnetic potentials
\begin{equation}\label{eq:ARj-def}
    \bAR_j(\sx)
    := \sum_{\substack{k=1\\ k\neq j}}^N (\nabla^\perp w_R) (\bx_j-\bx_k)
    = \sum_{\substack{k=1\\ k\neq j}}^N \frac{(\bx_j-\bx_k)^{\perp}}{|\bx_j -\bx_k|_R^2},
\end{equation}
so that, with the fundamental solution $\Delta \log |\bx| = 2\pi\delta(\bx)$, their corresponding magnetic fields are
\begin{equation}\label{eq:BRj-def}
    B^R_j(\sx) := \Curl_{\bx_j} \bAR_j(\sx) 
    = 2\pi \sum_{\substack{k=1\\ k\neq j}}^N \left(\frac{\1_{B(0,R)}}{\pi R^2}\right) (\bx_j-\bx_k)
\end{equation}
times the fraction of flux units $\alpha$.
Thus, in the limit of zero extension $R \to 0$, we consistently obtain pointlike anyons \eqref{eq:mag-pointlike}.

Summarizing, the $N$-body Hamiltonian operator that we consider is now
\begin{equation}\label{eq:HNR-def}
 H_N^R = \sum_{j=1}^{N} \left(-\im\nabla_{\bx_j}+\alpha \bAR_j(\sx)\right)^{2}
  + \sum_{j=1}^{N}V(\bx_j)
  + \sum_{j<k} W(\bx_j-\bx_k).
\end{equation}
Assuming regular enough $V$ and $W$, the self-adjointness of this operator is now much more straightforward, namely it is essentially self-adjoint on the minimal domain $C^\infty_c \cap L^2_\sym(\R^{2N})$ since $\alpha\bA_j^{R}$ with $R>0$ is a bounded and thus relatively small perturbation (with zero divergence in our gauge; see, e.g., \cite[Theorem~3.2]{Simon-79b}).

Extended anyons were considered (non-rigorously) in \cite{ChoLeeLee-92,Trugenberger-92b,Trugenberger-92}, which also introduced an associated dimensionless scale parameter, the
$$
    \text{\keyword{magnetic filling ratio} \quad $\nu := R\sqrt{\rho}$,} 
$$
to describe the phases of the $R$-extended anyon gas at average density $\rho$. The dilute limit $\nu \to 0$ then corresponds to pointlike anyons, while $\nu \sim 1$ is a ``smearing'' regime where the internal structure of anyons matters more, and $\nu \gg 1$ describes bosons in an almost-constant magnetic field. Thus, the larger $\nu$ is, the more valid the average-field approximation is expected to be. 
Note that in the almost-constant limit, with $N$ bosons in a fully degenerate lowest Landau level of a constant magnetic field of intensity $B \approx 2\pi\alpha\rho$ considered on an area $L^2$, 
and at density $\rho = N/L^2$,
the energy per unit area will be
$$
    E(N)/L^2 \approx |B| N/L^2 \approx 2\pi|\alpha| \rho^2.
$$
This suggests the rough local density approximation for the energy of a normalized ground state $\Psi \in L^2_\sym$ with one-body density $\varrho(\bx)$, adjusted to the trapping potential $V$:
\begin{equation}\label{eq:almost-const-approx}
    \inp{\Psi, H_N^R \Psi} \approx \int_{\R^2} \left( 2\pi|\alpha| \varrho(\bx)^2 + V(\bx)\varrho(\bx) \right) d\bx.
\end{equation}
Further heuristic motivation for the average-field approach, also for more ideal anyons, is given in \cite[Section III]{Lundholm-23} and references therein.
Rigorous derivations, refining \eqref{eq:almost-const-approx}, have been made in the almost-bosonic $\alpha \sim N^{-1}$ \cite{LunRou-15} and almost-fermionic $1-\alpha \sim N^{-1/2}$ \cite{GirRou-21} limits, and the former will be our main interest below.
Rough, yet rigorous, lower bounds for the energy of the homogeneous extended anyon gas were proved in \cite{LarLun-16} for arbitrary $\alpha$, $W=0$, and with a somewhat reasonable interpolation in $\nu$ from the almost-ideal (with a possible clustering aspect) to the almost-constant limit.

Note that in the extended anyons approach the domain of the Hamiltonian is really not the issue, but rather the precise limit of interplay between the internal structure, possibly including a scale-dependent interaction $W=W_{R,N}$, and the magnetic filling $\nu = \nu_{R,N}$, is important in order to know what kind of anyons we get at $R \to 0$ (ideal or otherwise).
We use the spin-coupling approach, already discussed in the literature, as a guide for the limit. This we discuss next.

\subsubsection{Spin-orbit coupled Hamiltonian}

We now supply additional microscopic details to our model by considering a very specific two-body interaction, namely
\begin{equation}\label{eq:W-def}
    W(\bx) = W_R(\bx) := \frac{2g\alpha}{R^2} \1_{B(0,R)} (\bx),
\end{equation}
so that
\begin{equation}\label{eq:W-mag}
    \sum_{j<k} W(\bx_j-\bx_k) 
    = \frac{g\alpha}{R^2} \sum_{j=1}^{N} \sum_{k \neq j} \1_{B(0,R)} (\bx_j-\bx_k)
    = \frac{g\alpha}{2} \sum_{j=1}^{N}B^R_j(\sx).
\end{equation}
Since the potential $W$ is bounded for $R>0$, the domain of $H_N^R$ with the above scalar interaction is the same as for $W=0$.

The interaction \eqref{eq:W-def} is a soft-disk potential with coupling constant $2g\alpha/R^2$, $\int_{\R^2} W = 2\pi\alpha g$, where $g \in \R$ is a new parameter.
In principle we could choose the constants differently, however we are here interested in this specific combination in order to view it as a coupling to the exact magnetic field of each particle, as in \eqref{eq:W-mag}, with coupling parameter $g/2$.

In fact, we interpret $g$ as a spin-orbit coupling parameter.
Consider a Pauli operator in 2D with external magnetic potential $\bA$ and field $B = \Curl \bA$, acting on $L^2(\R^2) \otimes \C^2$:
$$
    H^{\rm Pauli}_{\bA} = (-\im\nabla + \bA)^2 \otimes \1_{\C^2} + \frac{g}{2} B \mat{+1 & 0 \\ 0 & -1}
$$
(where in our units the Bohr magneton is $\mu_{\rm B} = e\hbar/(2m) = 1$).
For exactly $g=\pm 2$, this exhibits supersymmetry (see \cite{deCRit-83}) by completing the square of a 2D Dirac operator:
$$
    H^{\rm Pauli}_{\bA} = \left[ \sum_{k=1,2} \sigma_k (-\im\partial_k + A_k) \right]^2 \ge 0,
$$
where $\sigma_k \in \C^{2 \times 2}$ are Pauli matrices s.t. $\sigma_1\sigma_2\sigma_3 = \pm\im\1$.
Thus, the interacting $N$-anyon Hamiltonian
\begin{equation}\label{def:HNRg}
    H_N^{R,g} := \sum_{j=1}^{N} \left[ 
        \left(-\im\nabla_{\bx_j}+\alpha \bAR_j(\sx)\right)^{2}
        + \frac{g}{2} \alpha B^R_j(\sx) + V(\bx_j)
        \right]
\end{equation}
can indeed be interpreted as the projection to one of the spin components of such a spin-orbit coupled Pauli operator in the field of the other particles. 
In general, $g \in \R$ is interpreted as the magnetic moment or the gyromagnetic ratio (the ``Land\'e $g$-factor''), discussed to some extent in \cite{Grundberg-etal-91,ComMasOuv-95,Mashkevich-96}. However, we are not aware of any discussion which fixes its precise value in an emergent anyon model, and shall therefore leave it as one of the main microscopic parameters of our model and study its effect on the emergent mesoscopic model. 
We turn to this now.


\subsubsection{The almost-bosonic average-field approximation}

Now that flux is smeared out on some microscopic scale $R>0$, the next step of the average-field approximation program will be, heuristically, to replace $\bA^R_j$ and $B^R_j$ by their marginal expectation values in the state $\Psi$, which yields a self-consistent model with  
\begin{equation}\label{eq:avg-gauge}
    \bA^R_j \approx \bA^R[\varrho_\Psi] 
    := (\nabla^\perp w_R) * \varrho_\Psi
    \qquad \text{and} \qquad
    \alpha B^R_j \approx 2\pi\alpha \left(\frac{\1_{B(0,R)}}{\pi R^2}\right) * \varrho_\Psi,
\end{equation}
i.e.\ locally (mesoscopically) dependent on the one-body density \eqref{eq:one-body-density} associated to $\Psi$
(the marginal of the probability distribution $|\Psi|^2$; see \eqref{eq:def-density}).
We see that this is well motivated for rather large magnetic filling ratio $\nu$ by the almost-constant approximation \eqref{eq:almost-const-approx}, however, here we are actually interested in the dilute limit, i.e. taking $\nu$ small in order to re-obtain pointlike anyons (ideal, or possibly refined by their internal structure).
Deriving rigorously
the resulting effective model  
is our main interest in this work and will now be discussed.

Note that the case $\alpha =0$ plugged into \eqref{def:HNRg} provides the Hamiltonian of non-interacting trapped bosons. It has the product ground state $\Psi = \otimes^N u$ describing Bose--Einstein condensation in the normalized one-body ground state $u \in L^2(\R^2)$ of the Schr\"odinger operator $-\Delta + V$, satisfying the stationary Schr\"odinger equation $-\Delta u + Vu = \lambda u$.
Thus, sufficiently close to bosons $\alpha \approx 0$, we may compare to the successful GP/NLS approach for the dilute Bose gas (see \cite{LieSeiSolYng-05,MicNamOlg-19,Rougerie-21,Solovej-25} for mathematical review). Namely, for soft interaction we can still consider a product state (mean-field/Hartree) ansatz $\Psi = \otimes^N u$, while in the case of stronger interaction (such as hard disks) we expect that two-body correlations will play a significant role, such as discussed by Jastrow \cite{Jastrow-55}, Dyson \cite{Dyson-57}, and others.

Therefore,
in this work we will consider the following scaling of the microscopic parameters as $N\to\infty$:
\begin{equation}\label{eq:scaling}
    \alpha:=\beta(N-1)^{-1},\quad \beta\in \R^+,\quad R:=R_N\to 0\quad \text{and}\quad g\ge 0 .
\end{equation}
The limit $\alpha\to 0$ stands for the almost-bosonic anyons limit and, coupled with $R\to 0$, it thus 
corresponds to an average-field regime with anyons that are asymptotically pointlike and magnetically interacting with a self-generated field $B(\bx) \simeq 2\pi\alpha\varrho_\Psi(\bx) \simeq 2\pi\beta |u(\bx)|^2$ with finite total magnetic flux $2\pi\beta$. 
Note that at fixed $\beta$ and $R$ all terms of \eqref{def:HNRg} are of order $N$. This is the purpose of the rescaling and choice of the parameters in \eqref{eq:scaling}, to consider various effects at approximately the same energy scale, and then subsequently allowing $\beta \to 0$ to weaken the effect of statistics or $\beta \to \infty$ to strengthen the effect and move out of the almost-bosonic regime. 
Similarly, we expect $g \to 0$ resp. $g \to +\infty$ to soften resp. harden the interaction, with implications for the strength of effective correlations.
The limit of the radius $R_N\to 0$ will be studied considering different speeds (or scales), characterized by the parameter 
$\omega \in [0,+\infty]$,
also leading to possible different effective correlations. 
In the earlier rigorous work, $R_N \to 0$ was taken at a polynomial rate, typically describing a higher-density regime, however,
it turns out that the critical rate for our study is exponential, motivating our definitions $R_N = e^{-N\omega_N}$ and $\omega := \lim_{N \to \infty} \omega_N$,
where $\omega=0$ in the polynomial case, and for $\omega>0$ the resulting dependence is typically on the parameter combination (``\emph{scale}'') $s=2\beta\omega$.
In terms of the filling ratio $\nu$, locally 
$$
    \nu(\bx) := R\sqrt{\varrho_\Psi(\bx)}
    \simeq \sqrt{N}e^{-N\omega}|u(\bx)|,
$$
this always corresponds to a dilute limit unless the local density $\varrho(\bx) = |u(\bx)|^2$ is extremely high.
However, this can be controlled 
due to the recently developed regularity and stability theory for the resulting effective model, which we discuss next.

\subsubsection{The Chern--Simons--Schr\"odinger energy}

On the mesoscopic level we consider the following one-particle average-field(-Pauli) energy functional, also called Chern--Simons--Schr\"odinger stationary functional:
\begin{equation}
 \cE_{\beta, \gamma, V}[u] := \int_{\mathbb{R}^{2}}\left|(-\im\nabla+\beta \bA[|u|^{2}])u\right|^{2}+\int_{\R^2}V|u|^{2}+\gamma\int_{\R^2}|u|^4,
 \label{def:Eaf}
\end{equation}
where the coupling parameter $\gamma\in \R$ accounts for scalar self-interaction, and where 
\begin{equation}\label{def:Au}
    \bA[\varrho] := 
    \left( \nabla^\perp \log |\slot| \right) \ast \varrho =
    \frac{\bx^{\perp}}{|\bx|^2} \ast \varrho
\end{equation}
is the averaged vector potential \eqref{eq:avg-gauge} at $R=0$. 
Note its magnetic field $B[\varrho] = \Curl \bA[\varrho] = 2\pi \varrho$, so the last term is equivalent to a Pauli spin-coupling term $\frac{\gamma}{2\pi} \int B[|u|^2]|u|^2$, supersymmetric if $\gamma=\pm 2\pi\beta$. 
The variational equation of the functional \eqref{def:Eaf} subject to the constraint $\int_{\R^2} |u|^2 = 1$ is
the Chern--Simons--Schr\"odinger stationary equation:
\begin{equation}\label{eq:CSS-EL1}
\Big[-\!\bigl(\nabla + \im\beta\bA[|u|^2]\bigr)^2 
- 2\beta \nabla^\perp w_0 * \bigl[\beta\bA[|u|^2]|u|^2 + \bJ[u]\bigr] 
+ 2\gamma|u|^2 + V\Big] u = \lambda u,
\end{equation}
where $\bJ[u]:=\frac{\im}{2}(u\nabla\overline{u}-\overline{u}\nabla u)$ and $\lambda = \lambda(u) \in \R$.
We refer to the Appendix \ref{sec:appendix-CSS} for general wellposedness related knowledge about the above equations.
In particular, above a critical coupling, $\gamma > -\gamma_*(\beta)$, and for smooth and trapping $V$, minimizers $u \in H^1(\R^2;\C) \cap L^2_V$ exist and are smooth (and approximable by compact supports).
However, they are typically not unique, due to vortex formation.

The passage from the $N$-body description to the above effective equation appears through the relations 
\begin{equation}\label{eq:infspec-limit}
    N^{-1} \infspec H_N^{R,g} \simeq \inf_{\substack{u\in H^1 \cap L^2_V \\ \norm{u}_2=1}} \cE_{\beta, \gamma, V}[u],
\end{equation}
and, for an approximate ground state $\Psi_N$ of the l.h.s.,
respectively a minimizer $u$ of the r.h.s.,
the convergence of the density
\begin{equation}\label{eq:groundstates-limit}
    N^{-1} \varrho_{\Psi_N} \simeq |u|^2,
    \qquad \text{as $N \to \infty$.}
\end{equation}
Assuming a sufficiently trapping potential, such as $V(\bx) = |\bx|^p$, $p>0$,
the above relations have been made rigorous in \cite{LunRou-15} with $g=0$ in \eqref{def:HNRg} 
and $\gamma =0$ in \eqref{def:Eaf} and for a convergence $R\to 0$ slower than $R=N^{-\eta}$, with the largest such $\eta$ obtained in \cite{Girardot-20}.
The hardest constraints on $\eta$ are set by the approach to the lower bounds, while for an upper bound in \eqref{eq:infspec-limit} by a pure product trial state, it is sufficient that $0< \eta < \infty$.
However, as we show in Theorem~\ref{thm:E}, we can also make rigorous the relation \eqref{eq:infspec-limit} between energies at the level of upper bounds with reasonable trial states for any $g \ge 0$ and for any rate of convergence of $R\to 0$, with some corresponding effective coupling $\gamma \ge 0$. This necessitates a more precise description of the ground state by taking into account correlations between particles over the scale of their average distance $N^{-1/2}$.

Note that, due to the emergence of vortices for $\beta \gtrsim 1$ and thus non-uniqueness of minimizers, \eqref{eq:groundstates-limit} can only hold in an averaged sense, and an effective Thomas-Fermi-type density functional theory for the averaged, macroscopic density profile was derived for $\gamma=0$ and $\beta \gg 1$ in \cite{CorLunRou-17,CorLunRou-18}. The model has also been studied numerically for a wider range of the parameters; see \cite{CorDubLunRou-19,Lundholm-24}.

\subsection{Discussion: results}

We now have the necessary background context to discuss our result.
For $g \neq 0$,
with the heuristic replacement of $\bA^R_j$ by the average field $\bA[\varrho_\Psi]$ in the spin-coupling term in \eqref{def:HNRg}, we may expect the effective scalar coupling $\gamma=g\pi \beta$, which indeed is correct in the sub-critical regime $\omega=0$. 
However, we show that this needs to be revised to $\gamma=2\pi\beta G(s,g)$ in the critical and super-critical regimes $\omega>0$.
Indeed, for strictly pointlike ($\omega=+\infty$) and \emph{ideal} anyons one might expect $\gamma=2\pi\beta$ by a commonly used regularization procedure; see e.g. \cite{ChiSen-92,SenChi-92,Ouvry-94,Ame_95,KimLee-97}.
Similarly, for nonideal (extended and/or spin-coupled) anyons but with a sufficiently fast convergence to pointlike ($\omega \to +\infty$), typically one again expects the ideal ones to emerge, except in a resonant attractive case at $g \to -2$ (possibly corresponding to kreinyons;
see \cite{Grundberg-etal-91,ComMasOuv-95,Mashkevich-96}).
By our Theorem~\ref{thm:E} we thus confirm these predictions for all $g \ge 0$, and also reach the very interesting observation that $\gamma=2\pi\beta$ for all scales $\omega$ exactly if $g=2$, the supersymmetric case.
The only discussion that we have found in the physics literature concerning a possible precise value for $\gamma$ for arbitrary couplings $g$ and scales $s$ is that by Mashkevich in \cite{Mashkevich-96} (see also \cite{AmeBak-95} from the view of scattering amplitudes). However, for comparison we need to translate his result into our setting, and this we will do in Section~\ref{sec:twobody}, where we also compare it to a scale-dependent ``scattering length/energy'' for extended anyons.
First, we close this introductory section by summarizing our understanding so far concerning the conditions for a good low-energy trial state to accurately describe the leading order in \eqref{eq:infspec-limit}.

Following reasoning by Jastrow, Dyson, and others for the dilute interacting Bose gas,
let us argue for our choice of $N$-anyon trial state \eqref{eq:trial-f}-\eqref{def:trial}:
\begin{itemize}
    \item {\bf symmetry:} $\Psi_N^{\mathrm{trial}} \in L^2_\sym$, as it must be for anyons in the bosonic-magnetic representation, and additionally, for small $\alpha$ we expect the more bosonic features of the ground state to dominate.

    \item {\bf short vs. long range correlation:} The Jastrow factor $F$ is meant to deal with the short-range correlation structure in the gas, while long-range correlations are dealt with self-consistently in the Hartree condensate factor $\Phi$. This corresponds to a sufficiently dilute or softly interacting regime in which two-body and three- and many-body interactions are on different length scales --- microscopic vs. mesoscopic.
    
    \item {\bf exclusion:} We expect the exclusion principle to play a role in the energy minimization, i.e. the relative size of the function close to the diagonals $\bDelta$ as $R \to 0$. This because the anyonic flux induces a centrifugal barrier or inverse-square repulsion between each pair of particles (see \cite{Lundholm-16,Lundholm-23}, and Appendix~\ref{sec:appendix-domains}). Our ansatz for $F$ allows both a weakening (for $g < 2$) and strengthening (for $g > 2$) of exclusion compared to that expected of ideal anyons ($g=2$ or $s \to \infty$; see below).

    \item {\bf product:} Due to the above, a product state depending on particle pairs is both convenient and expected to capture the correlations near diagonals, including a lower probability that several particles are close (again, this corresponds to a certain dilute and repulsive regime). 
  
    \item {\bf interpolating:} Our ansatz is able to interpolate continuously between zero exclusion (at $s \to 0$ i.e. only condensate, no Jastrow) and the exclusion expected for ideal anyons (at $s \to \infty$; see \eqref{eq:true-gs} and Appendix~\ref{sec:appendix-domains}).

    \item {\bf radial:} Due to symmetry and near-diagonals interaction expected only at the lowest angular momentum for each pair of particles (2-anyon problem, again assuming diluteness), we choose an ansatz that is radial in the pairwise coordinates (this is standard for this type of problem; this sector of zero pairwise angular momentum is called the ``s-wave'' in the physics literature).

    \item {\bf continuity:} In the repulsive problem we expect continuity in all the variables for the ground state wave function, even in the pointlike limit since $\Psi \in H^1$ always, and is typically even more regular on $\R^{2N} \setminus \bDelta$ and vanishing on $\bDelta$. We may expect continuity in the derivatives as well (except possibly at the diagonals), however in a variational ansatz (not a solution to the corresponding Schr\"odinger equation) this may be seen as only a secondary requirement.
    
    \item {\bf scattering and simplicity:} The two-body correlations $f$ in our trial state ansatz are chosen to minimize the ``scattering energy'' (see Remark~\ref{rem:scattlen}) to leading order in the limit $N \to \infty$.
    Furthermore, their explicit form is at the same time being technically advantageous because the radial derivative yields terms of the same form as the anyonic interaction on distances $R < |\bx| < b$ and is otherwise identically zero.

    \item {\bf scale covariance:} As we see below, scale-covariant properties of the two-body interaction for interacting extended anyons single out the radial bahaviors $r^{\alpha}$ and $r^{-\alpha}$ on the interparticle scale, the latter being suppressed as $R/b \to 0$ for repulsive interactions. Our ansatz formally interpolates between these extremes for $g \in [-2,2]$, although we only consider the more well-behaved ($H^1$, continuous, and stable) case $g \ge 0$ in this work.

    \item {\bf angular momentum:}
    By the heuristic lower bound \eqref{eq:angmom-bound} for ideal anyons, we expect the total angular momentum per particle of an approximate ground state in our scaling to be $L/N \approx -\beta/2$. Indeed, with our Jastrow ansatz $\Psi_N^{\mathrm{trial}}$ and $u$ an approximate minimizer of \eqref{def:Eaf}, based on the works \cite{CorLunRou-17,CorDubLunRou-19,AtaLunNgu-24} we expect the emergence of vortices at a linear order in $|\beta|$ and arranged in a locally homogeneous vortex lattice, each contributing one unit of angular momentum (negative if $\beta>0$; also note the degree of vortices of NLLs between $\frac{\beta}{2}-1$ and $\beta-2$; see Appendix~\ref{sec:appendix-CSS}).
    
\end{itemize}
Note that our choice of $b$ is constrained by the convergence of the norm and density of the wave function and by other technical error terms, and it is not to be viewed as a variational parameter but rather a cutoff to focus our efforts to control the energy using $F$ close to the diagonals (for the critical speed the relevant scale is exponential rather than polynomial which leaves a lot of room to play with here).

We also note that we have in the Hamiltonian \eqref{def:HN} considered a simplest possible microscopic model for the anyon gas which brings out those features we are interested in, namely interpolation of both statistics parameter, size/scale, and hardness/softness of anyons. One can easily imagine more realistic models, such as including other types of two-body interactions (see e.g. \cite{Nguyen-24}), three- or many-body interactions (possibly leading to clustering, exclusion, or emergent Landau levels; see e.g. \cite{Lundholm-16,AtaLunNgu-24,LevLunRou-25}), external magnetic fields (see e.g. \cite{Ataei-25}), as well as dynamics (see e.g. \cite{Ataei-24,GirLee-24}).

\medskip\noindent\textbf{Acknowledgments.} 
D.L. thanks the Department of Mathematics of Politecnico di Milano for its kind hospitality during the spring of 2025 intensive period ``Quantum Mathematics at Polimi''.
Financial support from the Swedish Research Council 
(D.L., grant no.\ 2021-05328, ``Mathematics of anyons and intermediate quantum statistics'') 
is gratefully acknowledged. The author T.G. strongly thanks the Gran Sasso Science Institute's math area for providing its financial support.

\section[\qquad The two-body problem]{The two-body problem}
\label{sec:twobody}

In this section we study the two-body problem and discuss the choice of our trial state \eqref{eq:trial-f}--\eqref{def:trial} by the examination
of three simplified cases. We first examine the case $R=0$ in Section \ref{sec:R=0}.  In Section \ref{sec:R>0} we discuss an approximation for the case $R>0$ and $g\neq 0$ that has been considered in the literature. In Section \ref{sec:ftheta} we compute the scattering energy for a simplified family of trial functions $f_{\theta}(\bx)=|\bx|^{\theta \alpha}$. After having provided these preparatory heuristics, we derive, in Section \ref{sec:trueJastrow}, the exact function \eqref{eq:trial-f} we will use in the $N$-body trial state \eqref{def:trial}.

For the comparison with the $N$-body problem, we switch to the pairwise relative coordinates in the Hamiltonian \eqref{eq:HNR-def},
\begin{align}\label{eq:hamil-pairwise}
    H_N^R = \frac{1}{N}(-\Delta_\bX) 
    + \sum_{1 \le j < k \le N} \Biggl[& \frac{1}{N} \left(-\im(\nabla_{\bx_j} - \nabla_{\bx_k}) + \alpha (\bA^R_j-\bA^R_k)\right)^2 \\ \nonumber
    &+ \frac{1}{N-1}\bigl( V(\bx_j) + V(\bx_k) \bigr)
    + W(\bx_j-\bx_k)
    \Biggr],
\end{align}
where $\bX := \frac{1}{N}\sum_j \bx_j$ is the total center of mass, 
$\br_{jk} := \bx_j-\bx_k$, and
$$
    \nabla_{\bx_j} - \nabla_{\bx_k} = 2\nabla_{\br_{jk}}, \qquad
    \bA_j^R - \bA_k^R = 2 \frac{(\bx_j-\bx_k)^\perp}{|\bx_j-\bx_k|_R^2} + \sum_{l \neq j,k} \left[ \frac{(\bx_j-\bx_l)^\perp}{|\bx_j-\bx_l|_R^2} - \frac{(\bx_k-\bx_l)^\perp}{|\bx_k-\bx_l|_R^2} \right].
$$
Further, if $V(\bx) = c|\bx|^2$, then it also admits a similar separation of variables:
$$
    \frac{1}{N-1}\sum_{j<k} \left( V(\bx_j) + V(\bx_k) \right) 
    = \sum_j V(\bx_j) 
    = N c|\bX|^2 + \frac{1}{N} \sum_{j<k} c|\br_{jk}|^2.
$$
The heuristics for the pair contribution to the $N$-body energy is based on taking $N=2$ in \eqref{eq:hamil-pairwise} and multiplying by the number of pairs $\binom{N}{2}$.
Further, the influence of any other particles $l \neq j,k$ is ignored, which assumes a sufficiently dilute gas
(it may be different in higher density regimes).

\subsection{The motivation for the Jastrow pair correlation}

    In the regime of small radius $R$ compared to the typical interparticle distance $N^{-1/2}$, pair interactions become more likely than any higher-order interaction and thus the two-body problem becomes of main interest to reach the next order in the energy. The idea is that the $N$-body energy resulting of the sum of each pair interaction must be of order of the energy of a two-body interaction that multiplies the number of pairs. In the following, we develop this heuristic in more detail and compare it with the mathematical and physics literature.
    
    The two-body problem associated with the Hamiltonian \eqref{def:HNRg} has been well studied in the mathematics \cite{Oddis-20,CorOdd-18} and physics literature \cite{Grundberg-etal-91,ManTar-91,BorSor-92,Mashkevich-96,KimLee-97}, though with varying degree of generality and detail.
    The two-anyons Hamiltonian is defined on $L^2(\R^4)$ as
    \begin{align}
       H^{R,g}_2:= \sum_{j=1}^2\left[ (-\im \nabla_j +\alpha \bA_j^R)^2+\frac{g\alpha}{2}B^R_j \right],
    \end{align}
    that we can rewrite using the center of mass coordinates,
 denoting 
\begin{align*}
    \mathbf{r}:=\bx_1 -\bx_2
    \quad\text{and}\quad
    \mathbf{X}:=\frac{\bx_1 +\bx_2}{2},
\end{align*} 
to obtain
\begin{align*}
      -\Delta_1 -\Delta_2 +2\alpha^2\frac{|\mathbf{r}|^2}{|\mathbf{r}|^4_R}-2\im \alpha \frac{\mathbf{r}^{\perp}}{|\mathbf{r}|^2_R}\cdot (\nabla_1 -\nabla_2)
        = -\frac{\Delta_\mathbf{X}}{2} +2\left(-\Delta_{\mathbf{r}} +\alpha^2\frac{|\mathbf{r}|^2}{|\mathbf{r}|^4_R}-2\im \alpha \frac{\mathbf{r}^{\perp}}{|\mathbf{r}|^2_R}\cdot \nabla_{\br} \right),
\end{align*}
using that $B^R_1(\sx)=B^R_2(\sx)$ and $\bA^R_1(\sx)=-\bA^R_2(\sx)$.
Note that the cross term in $\mathbf{r}^{\perp}$ of the above operator equation vanishes on radial functions\footnote{One could consider a phase dependence as well, however for only two particles and $\alpha \in [0,1]$ this will only increase the energy, due to Poincar\'e and Hardy inequalities; see Appendix~\ref{sec:appendix-domains}.}.
Thus, a toy model in our case of extended and spin-orbit interacting anyons, is then the following energy functional:
\begin{align}\label{eq:cE-two}
    \cE_2[f]  
    := \int_{B(0,\ell)} \left( |\nabla f|^2 + \alpha^2 \frac{|\br|^2}{|\br|_R^4}|f|^2 + \alpha g \frac{\1_{B(0,R)} }{R^{2}}|f|^2 \right),
\end{align}
considered on radial functions $f \in H^1(B(0,\ell))$.

Canonically, the length $\ell > R$ introduced here is taken to be the local interparticle spacing, which depends on the density $\varrho = \varrho(\bx)\in L^1(\R^2)$ 
as $\ell = \ell(\bx) \sim \varrho(\bx)^{-1/2}$. 
More precisely, it can be defined such that exactly one of the other particles is found within that radius, i.e.
$\pi\ell^2\varrho \simeq 1$.
This choice of scale is required to justify the energy heuristics for the homogeneous gas, since then approximately $N/2$ such areas will cover the entire area of the gas.
However, in our final scaling of the parameters with $N$, it is only the order of the length scale compared to $R$ that is important, and we can and will typically replace this variable local scale with a uniform scale $b=b_N$, $R \ll b \ll \ell$, not too small (its precise size will be constrained by the error terms arising in the many-body problem). For the two-body heuristics it is also possible to use the reference scale $\ell = 1$, as in Remark~\ref{rem:scattlen}.

Thus, in order to preserve the mesoscopic structure at the scale of 
$\ell$,
we want to minimize this energy for $R \ll \ell \lesssim 1$, 
where $f \simeq 1$ on the inter-particle scale $r \sim \ell$,
although for $\ell$ large enough (compared to both $R$ and $g$) we may also consider a corresponding Neumann problem with $\partial_r f|_{r=\ell} = 0$. The procedure can be compared to the definition of the scattering length for the two-body boson problem (see \cite[Theorem C.1]{LieSeiSolYng-05}) except that the long-range potential $\bx_R^{-2}$ drastically changes the profile of the solutions compared to the case of a compactly supported interaction, and therefore demands some care 
(we cannot take $\ell \to \infty$ in \eqref{eq:cE-two}, for example). 

\subsubsection{The case $R=0$}\label{sec:R=0}

In the simplified case $R=0$ (and $g=0$), the relative $2$-body problem \eqref{eq:cE-two} becomes
\begin{align}
    \cE_2[f] = \int_{B(0,\ell)} \left(|\nabla f|^2 +\alpha^2 \frac{|f|^2}{|\bx|^2}\right) \, \dd \bx,\label{def:E2R0}
\end{align}
over $f \in H^1(B(0,\ell))$ such that $\frac{f}{|\bx|} \in L^2(B(0,\ell))$ and $f =1$ on $\partial B(0,\ell)$. The above expression can easily be explicitly minimized. By symmetrization and taking the variation of the energy functional, we obtain that the minimizing function satisfies
\begin{align*}
  -\Delta f + \frac{\alpha^2}{r^2} f = 0, \textup{ in } B(0,\ell), \quad f=1, \textup{ in }  \partial B(0,\ell).
\end{align*}
By solving the ODE, we get 
\begin{align*}
    f(r) = a r^{\alpha} + b r^{-\alpha}, \quad r>0,
\end{align*}
for some $a,b \in \R.$ Since $f \in H^1(B(0,\ell))$ and $f=1$ in $\partial B(0,\ell)$, we obtain that the only possible solution is
\begin{align*}
     f(r) = \left( \frac{r}{\ell}\right)^{\alpha}, \quad 0 \leq r \leq \ell.
\end{align*}
Hence, the Jastrow factor based on $r^{\alpha}$ minimizes the two-body scattering energy, with
$$
    E_2 = \cE_2[f] = 4\pi \alpha^2 \int_0^\ell r^{2\alpha-1} dr
    = 2\pi\alpha \ell^{2\alpha}
    \qquad \Rightarrow \qquad
    (N-1)E_2 \stackrel{N \to \infty}{\simeq} 2\pi\beta.
$$
This indeed corresponds to the Jastrow factor in the exact solutions \eqref{eq:true-gs} for ideal anyons.
On the other hand, when $R$ becomes very large compared to $\ell$, we know from \cite{LunRou-15} that $f=1$ is enough to describe the energy at leading order,
and therefore we need to revise this ansatz for extended anyons.

\subsubsection{The case $R>0$ and $g \neq 0$}\label{sec:R>0}

For $R>0$ and $g \neq 0$, the two-body energy problem \eqref{eq:cE-two} does not simplify immediately, and it is then helpful to use some approximations to facilitate the corresponding minimization problem
\begin{equation}\label{eq:scattenergy}
       \left(-\Delta + \alpha^2 \bx_R^{-2} + \alpha g R^{-2} \1_{B(0,R)}\right)f = \lambda f,
    \qquad f|_{r=\ell} = 1,
\end{equation}
without resorting to cumbersome special functions.

A version of the above eigenvalue problem appears in \cite[Equation (2)]{Mashkevich-96}, where, instead of restricting to a finite domain,
a harmonic regularization 
$V(\bx) = \frac{1}{2}m\omega_\ell^2|\bx|^2$ 
is introduced as a reference, 
and the two-particle contribution to the 
energy is obtained via perturbation in $\alpha$. Thus, in that work, the trial state is taken to be on the form $f\psi_0$, with $\psi_0$ the ground state of the harmonic oscillator at $\alpha=0$, i.e. the gaussian
$\psi_0(r) \propto e^{-\frac{1}{4}m\omega_\ell r^2}$ with energy $\omega_\ell$.
The radial Hamiltonian is defined as
\begin{equation}\label{eq:H-Mashkevich}
    \cH = \frac{1}{m}\left(
        -\frac{d^2}{dr^2} - \frac{1}{r}\frac{d}{dr} + \frac{[L-\alpha\eps(r)]^2}{r^2} - \frac{2\sigma\alpha \eps'(r)}{r}
    \right) + \frac{1}{4}m\omega_{\ell}^2 r^2,
\end{equation}
allowing for a more general flux profile by the function $\eps(r)$.
In our notational system this is equivalent to an angular momentum $L=0$,  a mass $m=1/2$, the function $\eps(r) = r^2/|r|_R^2$, and a trapping potential with the coupling constant (frequency) $\omega_{\ell}$. 
This quantity defines the size of the relative disk $B(0,\ell)$ of the two-body interaction via the relation $\omega_{\ell}\ell^2 \sim 1$.  To be complete in our translation, we also need to multiply the spin $\sigma=\mp 1/2$ supersymmetric with $g/2$ as mentioned in \cite[footnote 17]{Mashkevich-96}.

The author observes that, in the supersymmetric case $g=\pm 2$, we can exactly solve
$\cH (f\psi_0) = \lambda (f\psi_0)$
for the eigenvalue $\lambda = \omega_\ell + \Delta E$ 
and the constraint $\left|\left(\partial_r f\right)(\ell)\right| \leq \alpha (R/\ell)$. 
The solution, with $\Delta E = \pm \alpha \omega_\ell$ and which we denote $f_{\rm susy}$, is
\begin{align}\label{eq:2-body-solution-supersym}
\begin{cases}
f_{\rm susy}(r)=e^{\pm\frac{\alpha}{2}(\frac{r^2}{R^2}-1)}\quad \text{for}\quad r\leq R,\\
f_{\rm susy}(r) = \left(\frac{r}{R}\right)^{\pm \alpha}\quad \text{for}\quad r> R.
\end{cases} 
\end{align}
He then suggests that the general case with a uniform magnetic flux of radius $R>0$ and spin-orbit coupling $g$ should be as in \cite[Equation (21)]{Mashkevich-96}, where the solutions 
for small $\alpha$
are approximated by
\begin{align}\label{eq:2-body-solution-g}
\begin{cases}
f_g(r)=1+\frac{g\alpha}{4}\frac{r^2}{R^2} \quad \text{for}\quad r\leq R,\\
f_g(r) = \frac{1+g(2+\alpha)/4}{2}\left(\frac{r}{R}\right)^{ \alpha} + \frac{1-g(2-\alpha)/4}{2}\left(\frac{r}{R}\right)^{ -\alpha} \quad \text{for}\quad r> R,
\end{cases} 
\end{align}
with the effective boundary condition at $r=R$:
$$
    \frac{f_g'(R)}{f_g(R)} = \frac{g\alpha}{2R}.
$$
According to \cite[Equation (22)]{Mashkevich-96}, the first order perturbation of the energy for small $\alpha$, obtained by using the solution \eqref{eq:2-body-solution-g}, 
is 
$$
    \Delta E = \alpha\omega_\ell \frac{1+g/2-(1-g/2)q^\alpha}{1+g/2+(1-g/2)q^\alpha},
$$
where $q \sim (R/\ell)^2$ and hence
$q^\alpha \sim e^{-2\beta\omega}$. This last expression agrees with our function $G(2\beta\omega,g)$ of \eqref{def:G} and can be used to predict our result by plugging $\omega_{\ell}\simeq 2\ell^{-2}$, $\alpha \simeq \beta (N-1)^{-1}$ and $\pi\ell^2\varrho \simeq 1$
in the limit $N \to \infty$.

In the next paragraph we compare this result to the two-body energy obtained using a simpler specific ansatz $f_{\theta} \propto r^{\alpha\theta}$.

\subsubsection{A simplified repulsive ansatz for $R>0$}\label{sec:ftheta}

In the limit as $R/\ell \to 0$, which happens non-uniformly due to the varying density in the problem, we suggest that a reasonable ansatz 
for $f$ is a rescaling of the function
\begin{align}
    f_\theta(r) := (r/\ell)^{\alpha\theta}
    \quad\text{of norm}\quad 
    \int_{B(0,\ell)}|f_{\theta}|^2=\frac{\pi}{1+\alpha\theta}\ell^2,
\end{align}
where the variational parameter $\theta \in [-1,1]$ can capture an interpolation between the two extremes of kreinyons $f\propto r^{-\alpha}$ and ideal anyons $f\propto r^{\alpha}$. However, as discussed above, due to additional complications like stability, we will here only consider the repulsive case $\theta \in [0,1]$ (and even extend to $\theta \in \R_+$).

If we compute the scattering energy of $f_{\theta}$ for $\theta \in \R_+$, by plugging it in \eqref{eq:cE-two}, we get
\begin{align}\label{eq:cE-twobis}
    \cE_2[f_\theta]
    &= \pi\alpha \left(
        \theta + \frac{1}{\theta}[1-(R/\ell)^{2\alpha\theta}]
        + g
        (R/\ell)^{2\alpha\theta}\right)\\
    &\simeq \pi\alpha \tilde{G}(\theta,2\beta\omega,g)
        \quad \text{as} \ N \to \infty,
        \nonumber
\end{align}
where we recover a similar 
$\theta$-dependent version of $G$ defined in \eqref{def:G}:
$$
    \tilde{G}(\theta,s,g)
    :=\theta + \frac{1}{\theta}(1-e^{-s\theta}) + ge^{-s\theta}.
$$
Similarly, the Neumann energy is
\begin{align}\label{eq:cE-twobis-Neumann}
    \frac{\cE_2[f_\theta]}{\norm{f_{\theta}}^2}
    &= \frac{\alpha}{\ell^2} (1+\alpha\theta) \left(
        \theta + \frac{1}{\theta}[1-(R/\ell)^{2\alpha\theta}]
        + g
        (R/\ell)^{2\alpha\theta}\right)\\
    &\simeq \alpha\ell^{-2} \tilde{G}(\theta,2\beta\omega,g)
        \quad \text{as} \ N \to \infty.
        \nonumber
\end{align}
Considering a factor 2 coming from the relative mass, that there are $N(N-1)/2$ pairs, and that we have chosen $\ell$ such that the expected number of particles is one, $\pi \ell^2 \varrho \sim 1$,
thus, heuristically, 
in the limit $R/\ell \sim e^{-\omega N}$ and $\alpha=\beta(N-1)^{-1}$ as $N \to \infty$, the energy per particle is precisely
$$
    E_N/N \simeq \pi \beta \tilde{G}(\theta,2\beta\omega,g) \varrho.
$$

We can then check that,
by optimizing over $\theta$,
we recover some of the critical regimes of Theorem \ref{thm:E}:
\begin{align}\label{eq:Gcases}
 \inf_{\theta\geq 0} \tilde{G}(\theta ,2\beta\omega ,g)=\begin{cases}
 g\quad\text{when}\quad \omega =0,\\
2\quad\text{when}\quad \omega =+\infty,\\
2\beta\omega\quad\text{when}\quad g =0\quad \text{and}\quad \beta\omega \ll 1,
\end{cases} 
\end{align}
which is in agreement with \eqref{eq:DeltaEcases} 
in the considered regimes. 
This makes of the $ f_\theta(r) \propto r^{\alpha\theta}$ ansatz a choice fairly simple to compute, with the property to produce the conjectured optimal energy in these specific cases. Further, although we will not go into these details here, we claim that it is actually possible to carry this ansatz to the many-body problem and prove the limit
\begin{equation}
    \label{eq:effectivedensityfunction1}
       \lim_{N \to \infty}     \frac{\bra{ \Psi_N^{\mathrm{trial}}}\ket{H_N  \Psi_N^{\mathrm{trial}}}}{N\norm{ \Psi_N^{\mathrm{trial}}}^2}=\int_{\mathbb{R}^{2}}\left|(-\im\nabla+\beta \bA[|u|^{2}])u\right|^{2}+\int_{\R^2}V|u|^{2}+ \pi \beta\, \tilde{G}( \theta , 2\beta\omega ,g)\int_{\R^2}|u|^4,
\end{equation}
by replacing our $f$ of \eqref{eq:trial-f} by $f_{\theta}$ in the trial state \eqref{def:trial}.
However, this approach yields a slightly too large energy for intermediate values of $g$ and scales $\omega$ (for example in the supersymmetric case $g=2$), and therefore we will proceed to revise the ansatz.

\subsection{Our more precise Jastrow pair correlation}\label{sec:trueJastrow}

In this part, we rigorously construct the precise Jastrow pair correlation function $f$ used in our trial state \eqref{def:trial} by an approximation of the two-body problem. Note that we assume throughout that
\begin{align}
\label{eq:conditionong}
  0\leq g \quad\text{and} \quad 0<R<b.
\end{align}

The idea is to study the minimizer of 
\begin{align}
\label{eq:twobodyenergy}
 \cE_2[f] := \int_{B(0,b)} \left( |\nabla f|^2 + \alpha^2 \frac{|f|^2}{r^2} \1_{A(R,b)} +\alpha^2 \frac{r^2 |f|^2}{R^4} \1_{B(0,R)} +  g \alpha \frac{ |f|^2}{R^2}\1_{B(0,R)} \right),
\end{align}
over all the radially symmetric functions $f \in H^1(B(0,b))$ with $f=1$ on $\partial B(0,b)$.

\begin{proposition}[Two-body scattering energy]
    Let $E_2$ be the minimum/infimum of \eqref{eq:twobodyenergy} over all the radially symmetric functions $f \in H^1(B(0,b))$ with $f=1$ on $\partial B(0,b)$. Then, 
\begin{equation}
\label{eq:minimumoftwobodyenergy}
2\pi  \alpha \frac{1+\frac{g}{2} - (1-\frac{g}{2}) \left(\frac{R}{b}\right)^{2\alpha}}{1+\frac{g}{2} + \left(1-\frac{g}{2}\right) \left(\frac{R}{b}\right)^{2\alpha}} - 2\pi \alpha^2 g^2   
\leq E_2 \leq 
2\pi  \alpha \frac{1+\frac{g}{2} - (1-\frac{g}{2}) \left(\frac{R}{b}\right)^{2\alpha}}{1+\frac{g}{2} + \left(1-\frac{g}{2}\right) \left(\frac{R}{b}\right)^{2\alpha}} + \frac{\pi}{2} \alpha^2.
\end{equation}
Moreover, 
\begin{align*}
  2\pi  \alpha \frac{1+\frac{g}{2} - (1-\frac{g}{2}) \left(\frac{R}{b}\right)^{2\alpha}}{1+\frac{g}{2} + \left(1-\frac{g}{2}\right) \left(\frac{R}{b}\right)^{2\alpha}} \leq \cE_2[f] \leq 2\pi  \alpha \frac{1+\frac{g}{2} - (1-\frac{g}{2}) \left(\frac{R}{b}\right)^{2\alpha}}{1+\frac{g}{2} + \left(1-\frac{g}{2}\right) \left(\frac{R}{b}\right)^{2\alpha}} + \frac{\pi}{2} \alpha^2 ,
\end{align*}
where $f$ is the Jastrow function in \eqref{eq:trial-f}.
\end{proposition}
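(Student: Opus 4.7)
The plan is to recognize the Jastrow profile $f$ of \eqref{eq:trial-f} as the exact minimizer of a Robin-type variational problem on the annulus, evaluate $\cE_2[f]$ in closed form, and then bridge $\cE_2[f]$ and $E_2$ using a Bessel capacity estimate on $B(0,R)$.

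First I would read off the piecewise formula for $f$: it is the constant $C_f := 4(R/b)^\alpha/D$ on $B(0,R)$ with $D := 2(1+(R/b)^{2\alpha}) + g(1-(R/b)^{2\alpha})$, and equals $Ar^\alpha + Br^{-\alpha}$ on $A(R,b)$ with $A = (2+g)b^{-\alpha}/D$, $B = (2-g)R^{2\alpha}b^{-\alpha}/D$. From these expressions one checks that $-\Delta f + \alpha^2 f/r^2 = 0$ in the annulus, $f(b)=1$, $f$ is continuous at $r=R$, and direct differentiation gives the Robin-type identity $f'(R^+)/f(R) = g\alpha/(2R)$, which was singled out heuristically in Section~\ref{sec:R>0}. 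A short quadratic check in $(R/b)^\alpha$, using $g\ge 0$, also yields $0\le C_f\le 1$. As a consequence, $f|_{A(R,b)}$ is the unique minimizer of the strictly convex functional
\[
J[u] := \int_{A(R,b)}\bigl(|\nabla u|^2 + \alpha^2 u^2/r^2\bigr) + \pi g\alpha\, u(R)^2, \qquad u(b) = 1.
\]

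For the upper bound I plug $f$ into $\cE_2$: the two potential contributions on $B(0,R)$ integrate to $\tfrac{\pi}{2}\alpha^2 C_f^2$ and $\pi g\alpha C_f^2$, while on the annulus Green's identity combined with the EL equation gives $\int_{A(R,b)}(|\nabla f|^2 + \alpha^2 |f|^2/r^2) = 2\pi b f'(b) - 2\pi R f(R)f'(R)$. The Robin identity rewrites the inner boundary term as $\pi g\alpha C_f^2$, and a short computation from the explicit $A,B$ shows that $2\pi b f'(b)$ equals the main term
\[
M := 2\pi\alpha\,\frac{1+\tfrac{g}{2} - (1-\tfrac{g}{2})(R/b)^{2\alpha}}{1+\tfrac{g}{2} + (1-\tfrac{g}{2})(R/b)^{2\alpha}}.
\]
The two $\pi g\alpha C_f^2$ contributions cancel upon summation, leaving $\cE_2[f] = M + \tfrac{\pi}{2}\alpha^2 C_f^2$. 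Combined with $0\le C_f^2\le 1$, this establishes both inequalities for $\cE_2[f]$ and the upper bound $E_2 \le M + \tfrac{\pi}{2}\alpha^2$.

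For the lower bound on $E_2$ I take an arbitrary admissible radial $\tilde f$, truncated to $0\le\tilde f\le 1$ (which does not increase any term of $\cE_2$, since the potentials are non-negative and truncation contracts the gradient). Dropping the non-negative $\alpha^2 r^2/R^4$ piece, the inner integral $\int_{B(0,R)}(|\nabla\tilde f|^2 + g\alpha R^{-2}|\tilde f|^2)$, at prescribed trace $\tilde f(R)$, is minimized by the rescaled modified Bessel profile $\tilde f(R) I_0(\sqrt{g\alpha}\,r/R)/I_0(\sqrt{g\alpha})$ and equals $2\pi\tilde f(R)^2 \sqrt{g\alpha}\,I_1(\sqrt{g\alpha})/I_0(\sqrt{g\alpha})$. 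Combined with the elementary estimate $\sqrt t\,I_1(\sqrt t)/I_0(\sqrt t) \ge t/2 - t^2$ for all $t\ge 0$ and $\tilde f(R)^2\le 1$, this dominates $\pi g\alpha\tilde f(R)^2 - 2\pi g^2\alpha^2$; the remaining annulus contribution then reconstructs $J[\tilde f] \ge M$, so $\cE_2[\tilde f] \ge M - 2\pi g^2\alpha^2$, yielding the claim after taking the infimum. The main obstacle is the uniform Bessel inequality: for $t > 1/2$ the right-hand side is negative and positivity of the Bessel ratio is enough, while for $t \in [0,1/2]$ it follows from the convergent expansion $\sqrt t\,I_1(\sqrt t)/I_0(\sqrt t) = t/2 - t^2/16 + O(t^3)$ together with a short monotonicity check on the remainder. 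Apart from this, the argument is pure bookkeeping on the explicit formulas.
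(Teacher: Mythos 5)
Your argument is correct and arrives at the same bounds with the same constants, but the lower bound takes a genuinely different route from the paper's. For the upper bound you evaluate $\cE_2[f]$ via Green's identity on the annulus together with the Robin condition $f'(R^+)/f(R)=g\alpha/(2R)$, obtaining $\cE_2[f]=2\pi b f'(b)+\frac{\pi}{2}\alpha^2 C_f^2$ after the two inner terms $\pi g\alpha C_f^2$ cancel; the paper instead integrates the explicit radial powers term by term and observes the cancellation of the $\lambda_1\lambda_2$ cross terms directly. These are the same computation organized differently, and both yield $\cE_2[f]=M+\frac{\pi}{2}\alpha^2 C_f^2$ with $0\le C_f\le 1$, which gives the upper bound and the ``moreover'' claim.

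For the lower bound the paper first discards the non-negative $\alpha^2 r^2/R^4$ term, passes to the minimizer of $\cE^1_2$, and exploits that this minimizer solves the ODE $-\Delta f+g\alpha R^{-2}f=0$ inside $B(0,R)$ to deduce $|\partial_r f|\le g\alpha r/(2R^2)$ and hence $|f(r)-f(R)|\le \alpha g$, which directly controls the replacement $|f|^2\mapsto|f(R)|^2$ at cost $2\pi g^2\alpha^2$. You instead decompose into inner and annulus pieces, minimize the inner Dirichlet--Robin energy at prescribed trace to obtain the Bessel ratio $2\pi\tilde f(R)^2\sqrt{g\alpha}\,I_1(\sqrt{g\alpha})/I_0(\sqrt{g\alpha})$, and invoke the inequality $\sqrt{t}\,I_1(\sqrt{t})/I_0(\sqrt{t})\ge t/2-t^2$ to recover the same error $-2\pi g^2\alpha^2$. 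Your approach is cleaner conceptually (it names the inner optimizer as a Bessel profile, which also clarifies the connection to Mashkevich's solution \eqref{eq:2-body-solution-supersym}), and it isolates the Robin problem on the annulus as the essential object; the price is that it rests on a uniform-in-$t$ Bessel inequality, which, while true and elementary (e.g. from the alternating series $I_1(s)-\frac{s}{2}I_0(s)=-\sum_{k\ge1}\frac{k}{k!(k+1)!}(s/2)^{2k+1}$ with $I_0\ge 1$ and the trivial sign for $t\ge 1/2$), is genuinely more than ``pure bookkeeping'' and should be written out. The paper's ODE-based bound is shorter to verify, though its constant $\alpha g$ is not tight (the same argument gives $\alpha g/4$). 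Both proofs are valid.
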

\begin{proof}

By $\cE_2[\min(|f|,1)] \leq \cE_2[f]$ and using the direct method in the calculus of variation, there exists a minimizer $f$ for $\cE_2[f]$ which is radially symmetric and satisfies $0 \leq  f \leq 1$ in $B(0,b)$. Since $|f| \leq 1$, we get 
\begin{align}
\label{eq:radiusRenergyerror}
  \int_{B(0,R)}  \alpha^2 \frac{r^2 |f|^2}{R^4} \leq \frac{\pi}{2} \alpha^2.
\end{align}
Hence, 
\begin{align*}
  \left| \inf \cE_2[f] - \inf \int_{B(0,b)} \left( |\nabla f|^2 + \alpha^2 \frac{|f|^2}{r^2} \1_{A(R,b)} +  g \alpha \frac{ |f|^2}{R^2}\1_{B(0,R)} \right) \right|\leq \frac{\pi}{2} \alpha^2,
\end{align*}
where both infima are taken over $f \in H^1(B(0,b))$ with $f=1$ on $\partial B(0,b)$. The idea is that, with $\alpha=\beta(N-1)^{-1} \ll 1$, the error term of above will not affect the main order of the energy which, as we will see in \eqref{eq:averageenergytwobody}, is of order $\alpha$. Now, define
\begin{align*}
    \cE^1_2[f] := \int_{B(0,b)} \left( |\nabla f|^2 + \alpha^2 \frac{|f|^2}{r^2} \1_{A(R,b)} +  g \alpha \frac{ |f|^2}{R^2}\1_{B(0,R)} \right).
\end{align*}
By doing a variation of the energy functional and taking $\min (|f|,1)$, we obtain that there exists a radially symmetric minimizer $f$ of $\cE^1_2[f]$, which satisfies $0 \leq f \leq 1$ and
\begin{align*}
     \left(-\Delta +  \alpha^2 |\bx|^{-2} \1_{B(b,R)}  + \alpha g R^{-2} \1_{B(0,R)}\right)f =0, \quad \textup{weakly in } B(0,b).
\end{align*}
Hence,
\begin{align*}
     \left(-\Delta+ \alpha^2 |\bx|^{-2} \right)f =0, \quad \textup{weakly in } A(R,b).\\
      \left(-\Delta  + \alpha g R^{-2}\right)f =0,\quad \textup{weakly in } B(0,R).
\end{align*}
Now, we claim that up to order of $\alpha^2$, we can consider $f$ to be a constant in $B(0,R)$. To see this, note that 
\begin{align*}
     \left|\frac{1}{r} \partial_r \left(r \partial_r f\right) \right|= \alpha g R^{-2} f \leq \alpha g R^{-2}, \quad \textup{in } B(0,R),
\end{align*}
where we used $|f| \leq 1.$
Hence,
\begin{align*}
    |\partial_r f| \leq \frac{ \alpha g}{2 R^2} r, \quad \textup{in } B(0,R).
\end{align*}
This implies that 
\begin{align*}
    |f(r)-f(R)| \leq \alpha g,
\end{align*}
and 
\begin{align*}
     \cE^1_2[f] \geq -2 \pi \alpha^2 g^2 + \int_{B(0,b)} \left( |\nabla f|^2 + \alpha^2 \frac{|f|^2}{r^2} \1_{A(R,b)} +  g \alpha \frac{ |f(R)|^2}{R^2}\1_{B(0,R)} \right). 
\end{align*}
This completes the proof of the claim. In conclusion, we need to study the minimizer $f$ of 
\begin{align*}
 \cE^2_2[f] := \int_{B(0,b)} \left( |\nabla f|^2 + \alpha^2 \frac{|f|^2}{r^2} \1_{A(R,b)} +  g \alpha \frac{ |f(R)|^2}{R^2}\1_{B(0,R)} \right),
\end{align*}
which satisfies 
\begin{align*}
     \left(-\Delta+ \alpha^2 |\bx|^{-2} \right)f =0, \quad \textup{weakly in } A(R,b).
\end{align*}
Since $|\bx|^{\pm \alpha}$ generate all the solutions to the above equation, we get 
\begin{align*}
    f(r) = \lambda_1 r^{\alpha} + \lambda_2  r^{-\alpha}, \quad \textup{in }R \leq r \leq b. 
\end{align*}
Since $f$ is a constant on $B(0,R)$, we derive that
\begin{equation}
    f(r) = \begin{cases}
         \lambda_1 R^{\alpha} + \lambda_2  R^{-\alpha}, \quad &\textup{for } 0 \leq r\leq R,\\
         \lambda_1 r^{\alpha} + \lambda_2  r^{-\alpha}, \quad &\textup{for }R \leq r \leq b,\\
         1 , \quad & \textup{for } r \geq b,\label{def:fr}
    \end{cases}
\end{equation}
where 
\begin{align*}
   1 = f(b) = \lambda_1 b^{\alpha} + \lambda_2 b^{-\alpha}.
\end{align*}
By computing the energy functional, we get
(note the cancellation of logarithmically divergent cross terms!)
\begin{align*}
    \cE^2_2[f] = 2 \pi \alpha \left( \lambda^2_1 \left( b^{2\alpha} - R^{2\alpha} \right) + \lambda^2_2  \left( R^{-2\alpha} - b^{-2\alpha} \right)  + \frac{g}{2}  \left(\lambda_1 R^{\alpha} + \lambda_2  R^{-\alpha} \right)^2\right).
\end{align*}
To minimize $\cE^2_2[f]$, we solve $\lambda_1$ such that 
\begin{align*}
    \frac{\partial}{\partial \lambda_1}  \cE^2_2[f] = 0
\end{align*}
Note that 
\begin{align*}
    \lambda_2 = b^{\alpha} \left(1- \lambda_1 b^{\alpha}\right), \quad  \frac{\partial}{\partial \lambda_1} \lambda_2 = -b^{2\alpha}.
\end{align*}
Hence,
\begin{align*}
    &2 \lambda_1 \left( b^{2\alpha} - R^{2\alpha} \right) -2 b^{\alpha} \left(1- \lambda_1 b^{\alpha}\right) b^{2\alpha}  \left( R^{-2\alpha} - b^{-2\alpha} \right)  \\&+ g  \left(\lambda_1 R^{\alpha} + b^{\alpha} \left(1- \lambda_1 b^{\alpha}\right)  R^{-\alpha} \right) \left( R^{\alpha} - b^{2\alpha} R^{-\alpha} \right) =0. 
\end{align*}
In conclusion, 
\begin{align}
\label{eq:thecoeffioff}
    \lambda_1 = \frac{(2+g) b^{-\alpha}}{2 \left(1+ \left(\frac{R}{b} \right)^{2\alpha}\right)  + g \left(1- \left(\frac{R}{b} \right)^{2\alpha} \right)
   }, \quad \lambda_2 = \frac{(2-g) b^{-\alpha} R^{2\alpha}}{2 \left(1+ \left(\frac{R}{b} \right)^{2\alpha}\right)  + g \left(1- \left(\frac{R}{b} \right)^{2\alpha} \right)}.
\end{align}
By setting $\lambda_1,\lambda_2$ in $f$, we get 
\eqref{eq:trial-f}, i.e.
\begin{equation}\label{def:fr2}
f(r) = \begin{cases}
 \frac{4 \left(\frac{R}{b}\right)^{\alpha} }{2\left(1+ \left(\frac{R}{b}\right)^{2\alpha}\right) + g\left(1-\left(\frac{R}{b}\right)^{2\alpha}\right)} , \quad &\textup{in } 0 \leq r \leq R,\\\\
\frac{(2+g) \left(\frac{r}{b}\right)^{\alpha} + (2-g) \left(\frac{R}{b}\right)^{2\alpha} \left(\frac{b}{r}\right)^{\alpha} }{2\left(1+ \left(\frac{R}{b}\right)^{2\alpha}\right) + g\left(1-\left(\frac{R}{b}\right)^{2\alpha}\right)}, \quad &\textup{in } R \leq r \leq b,\\\\
1, \quad &\textup{in } r \geq b.
\end{cases}
\end{equation}
Moreover, 
\begin{equation}
\label{eq:averageenergytwobody}
\begin{aligned}
    \cE^2_2[f] &= 2 \pi \alpha \Biggl[  \left( \frac{2+g}{2\left(1+\left(\frac{R}{b}\right)^{2\alpha}\right) + g\left(1-\left(\frac{R}{b}\right)^{2\alpha}\right)}\right)^2 \left(1-\left(\frac{R}{b}\right)^{2\alpha}\right) \\ &\quad + \left ( \frac{2-g}{2\left(1+\left(\frac{R}{b}\right)^{2\alpha}\right) + g\left(1-\left(\frac{R}{b}\right)^{2\alpha}\right)} \right )^2 \left(\frac{R}{b}\right)^{2\alpha} \left(1-\left(\frac{R}{b}\right)^{2\alpha}\right) 
   \\& \quad + \frac{g}{2} \left( \frac{4}{2\left(1+\left(\frac{R}{b}\right)^{2\alpha}\right) + g\left(1-\left(\frac{R}{b}\right)^{2\alpha}\right)} \right)^2 \left(\frac{R}{b}\right)^{2\alpha} \Biggr] \\
   &= 2\pi  \alpha\,  \frac{(g+2)^2 - (g-2)^2 \left(\frac{R}{b}\right)^{4\alpha}}{\left((g+2) - (g-2) \left(\frac{R}{b}\right)^{2\alpha}\right)^2}\\
   &= 2\pi  \alpha\,  \frac{g+2 + (g-2) \left(\frac{R}{b}\right)^{2\alpha}}{g+2 - (g-2) \left(\frac{R}{b}\right)^{2\alpha}}
   = 2\pi  \alpha \frac{1+\frac{g}{2} - (1-\frac{g}{2}) \left(\frac{R}{b}\right)^{2\alpha}}{1+\frac{g}{2} + \left(1-\frac{g}{2}\right) \left(\frac{R}{b}\right)^{2\alpha}}.
   \end{aligned}
\end{equation}
This completes the proof of the left-hand side in \eqref{eq:minimumoftwobodyenergy}. To prove the right-hand side, by \eqref{eq:radiusRenergyerror}, we have 
\begin{align*}
    \mathcal{E}_2 \leq \inf \cE^2_2[f]  + \frac{\pi}{2 } \alpha^2 = 2\pi  \alpha \frac{1+\frac{g}{2} - (1-\frac{g}{2}) \left(\frac{R}{b}\right)^{2\alpha}}{1+\frac{g}{2} + \left(1-\frac{g}{2}\right) \left(\frac{R}{b}\right)^{2\alpha}} + \frac{\pi}{2 } \alpha^2,
\end{align*}
where the infimum is taken over all the radially symmetric $f \in H^1(B(0,b))$ where $f=1$ on $\partial B(0,b)$.
\end{proof}

\begin{proposition}
\label{prop:obviousboundonJastrow}
    The function $f$ in \eqref{eq:trial-f} is radially monotone increasing and satisfies 
\begin{align*}
   0 \leq f \leq 1, \quad \textup{in } \R^2.
\end{align*}    
\end{proposition}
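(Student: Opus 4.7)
The plan is to write $t := (R/b)^\alpha \in (0,1]$ and
\[
D := 2(1+t^2)+g(1-t^2) = (2+g)+(2-g)t^2
\]
for the common denominator appearing in \eqref{eq:trial-f}, so that the three pieces of $f$ become $4t/D$ on $B(0,R)$, $[(2+g)(r/b)^\alpha+(2-g)t^2(b/r)^\alpha]/D$ on $A(R,b)$, and $1$ outside $B(0,b)$. A first preliminary is to verify $D>0$ for all $g \geq 0$ and $t \in (0,1]$: if $g \leq 2$ both summands are nonnegative and $2+g \geq 2$, while if $g>2$ one writes $D=(2+g)-(g-2)t^2 \geq 4$ using $t \leq 1$. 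Direct substitution at $r=R$ (where $(r/b)^\alpha=t$ and $(b/r)^\alpha=t^{-1}$) and at $r=b$ (where the bracket collapses to $D$) then shows continuity of $f$ across the interfaces, and in particular reduces the whole statement to an analysis on $[R,b]$.

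For the monotonicity step, $f$ is constant on $[0,R]$ and on $[b,\infty)$, so I only need to control the sign of
\[
f'(r) = \frac{\alpha}{Dr}\Bigl[(2+g)(r/b)^\alpha - (2-g)t^2(b/r)^\alpha\Bigr]
\quad\text{on}\quad [R,b].
\]
Here I would split on the sign of $2-g$: if $g \geq 2$, both contributions in the bracket are nonnegative and $f'\geq 0$ is immediate; if $0 \leq g < 2$, the inequality $f' \geq 0$ is equivalent to $(r/b)^{2\alpha} \geq \tfrac{2-g}{2+g}\,t^2$, which holds because $r \geq R$ gives $(r/b)^{2\alpha}\geq t^2$ and $\tfrac{2-g}{2+g}\leq 1$.

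Finally, from monotonicity and continuity, $f$ ranges in $[f(0),f(b)] = [4t/D,1]$ on $[0,b]$ and equals $1$ further out, so nonnegativity is immediate. The upper bound $4t/D \leq 1$ rearranges to the quadratic inequality
\[
\phi(t) := (2-g)t^2 - 4t + (2+g) \geq 0.
\]
I would note that $\phi(1)=0$ always, the discriminant equals $16-4(4-g^2) = 4g^2 \geq 0$, and the second root is $(2+g)/(2-g)$ when $g \neq 2$. A case split then closes the argument: for $g \in [0,2)$, $\phi$ is convex with second root $\geq 1$, hence $\phi \geq 0$ on $[0,1]$; for $g=2$, $\phi(t) = 4(1-t)\geq 0$; for $g>2$, $\phi$ is concave with one root at $1$, the other negative, and $\phi(0)=2+g>0$, hence again $\phi \geq 0$ on $[0,1]$. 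The only mild obstacle is this recurring case split on whether $g$ is above or below the supersymmetric value $2$, which shows up both in the monotonicity and in the upper-bound step, but each case reduces to an elementary inequality.
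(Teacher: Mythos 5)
Your proof is correct and takes essentially the same route as the paper: piecewise analysis, continuity at $r=R,b$, monotonicity on the annulus, and a check of the constant value $4t/D$. The main surface differences are that you establish $f'\ge 0$ directly (case split on $g\lessgtr 2$) where the paper rules out interior critical points by contradiction, and that the paper bounds $4t/D \le 1$ simply by dropping $g(1-t^2)\geq 0$ and using $2t \le 1+t^2$, whereas your quadratic analysis of $\phi(t)$ is a longer (and, strictly speaking, redundant) verification, since once monotonicity and continuity are in hand, $f(0) = 4t/D \le f(b) = 1$ follows automatically.
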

\begin{proof}
For $\bx \in \R^2 \setminus B(0,b)$, by the definition of $f$ in \eqref{eq:trial-f} we get $ f(\bx) =1$. If $ \bx \in B(0,R)$, then $f(\bx) \geq 0$, and, by our assumption \eqref{eq:conditionong}, we have $g \geq 0$, and
    \begin{align*}
         f(\bx) &= \frac{4 \left(\frac{R}{b}\right)^{\alpha} }{2\left(1+ \left(\frac{R}{b}\right)^{2\alpha}\right) + g\left(1-\left(\frac{R}{b}\right)^{2\alpha}\right)}
         \leq  \frac{4 \left(\frac{R}{b}\right)^{\alpha} }{2\left(1+ \left(\frac{R}{b}\right)^{2\alpha}\right)} \leq 1.
    \end{align*}
     Now, by regularity it is enough to study the critical points of $f$ in $A(R,b)$, which gives the minimum and the maximum of $f$ in $A(R,b)$. Thus, let $\partial_r f(r_m)=0$ for $R<r_m <b$. Then,
\begin{align*}
  (2+g)  \frac{r_m^{\alpha-1}}{b^{\alpha}} - (2-g) \left(\frac{R}{b}\right)^{2\alpha} \frac{b^{\alpha}}{r_m^{\alpha+1}} =0.
\end{align*}
Hence, $ g \leq 2$ and 
\begin{align*}
    r_m = \left(\frac{2-g}{2+g} \right)^{\frac{1}{2\alpha}} R.
\end{align*}
Now, since also $g \geq 0$, we derive a contradiction as $r_m \leq R$ and we assumed that $R<r_m<b$. Therefore, $f$ is monotone increasing on $A(R,b)$.
\end{proof}

\section[ \qquad The many-body problem]{The many-body problem}\label{sec:manybody}

In this section, we will proceed to compute the $N$-body energy 
\begin{equation}
\frac{\bra{ \Psi_N^{\mathrm{trial}}}\ket{H_N  \Psi_N^{\mathrm{trial}}}}{\norm{ \Psi_N^{\mathrm{trial}}}^2},
\end{equation}
where $H_N$ and $\Psi_N^{\mathrm{trial}}$ are respectively defined in \eqref{def:HN} and \eqref{eq:trial-f}--\eqref{def:trial}. We first extract the main properties of the trial state needed for the calculation, then compute each term of $H_N$ one by one, sorting the results into the Lemmas \ref{lem:K}, for the kinetic term \ref{lem:V} for the potential, \ref{lem:W}, \ref{lem:Sdiag} for the two-body singular term, \ref{lem:three body} for the three-body and \ref{lem:J} for the mixed-two-body term.

\subsection{Jastrow trial state}

In this section, we study some elementary properties of the Jastrow trial state. Here, we use the convention that
$h_1 \lesssim h_2$ for parameters $h_1,h_2$ if there exists a universal constant $C>0$ such that $h_1 \leq C h_2$ and $h_1 = h_2 +O(h_3)$ for the parameters $h_1,h_2,h_3$ if 
\begin{align*}
    |h_1-h_2| \lesssim h_3.
\end{align*}
Further, we will use the shorthand notation $\norm{\cdot}_p$ to denote the norm in $L^p(\R^2)$ and $\norm{\cdot}_{p,w}$ to denote the weak $L^p(\R^2)$-norm, for every $p \geq 1$. Note that $\frac{1}{|\bx|}$ is in weak $L^2(\R^2)$.

We recall that the trial state is defined as 
\begin{equation}
\Psi^{\mathrm{trial}}= \PsiTrial = F \Phi \quad \text{where}\quad \Phi(\sx)=\prod_{j=1}^N u(\bx_j) \quad \text{and}\quad F(\sx)=\prod_{1\leq j<k\leq N}f(\bx_j -\bx_k)
\end{equation}
for every $\sx = (\bx_1,\cdot \cdot \cdot, \bx_N) \in \R^{2N}$, with $\Phi$ describing the condensate and $F$ implementing the effect of correlations on the scale $b>R$, where $f$ has been defined in \eqref{eq:trial-f}.
In the entire work, by regularization and taking $N$ large enough, we assume that $0 \leq \alpha < \frac{1}{4}$ and
\begin{align*}
    u \in C^{\infty}_c(\R^2), \quad \int_{\R^2} |u|^2 =1.
\end{align*}
By Proposition \ref{prop:obviousboundonJastrow}, $0 \leq F\leq 1$. In order to replace $F^2$ by $1$ in the calculations to come we establish the following lemma providing our main tool to control $F^2-1$. 

\begin{lemma}\label{lem:propjastrow}
Let $F$ be defined as in \eqref{eq:trial-f}. Then,
\begin{equation}\label{ine:jastrow211}
\left|F^2(\sx) -1\right|\leq \sum_{1\leq k<m\leq N}(1-f(\bx_k -\bx_m)^2)\1_{B(0,b)}(\bx_k -\bx_m).
\end{equation}
for every $\sx \in \R^{2N}$, and
\begin{equation}
\label{ine:bounds12}
\begin{aligned}
\norm{1-f^2}_1& \lesssim (1+g^2) b^2 \alpha,\\
\norm{1-f^2}^2_2&\lesssim (1+g^2) b^2 \alpha.
\end{aligned}
\end{equation}
\end{lemma}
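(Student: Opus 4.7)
My plan is to treat the two parts of the lemma independently: a telescoping argument for the pointwise bound \eqref{ine:jastrow211}, and an explicit computation based on the form \eqref{eq:trial-f} for the two integral estimates \eqref{ine:bounds12}.

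For the pointwise bound, the starting point will be the elementary inequality $1 - \prod_i a_i \leq \sum_i (1-a_i)$ for $\{a_i\} \subset [0,1]$, which follows by induction from $1-ab = (1-b) + b(1-a) \leq (1-b)+(1-a)$ whenever $a,b \in [0,1]$. By Proposition~\ref{prop:obviousboundonJastrow} we have $0 \leq f \leq 1$, so $f(\bx_k-\bx_m)^2 \in [0,1]$, and applying the inequality to $F^2 = \prod_{k<m} f(\bx_k-\bx_m)^2$ directly yields the sum. The indicator factor $\1_{B(0,b)}$ then comes for free, because $f \equiv 1$ outside $B(0,b)$ makes each summand vanish there. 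Finally, $F^2 \leq 1$ identifies $|F^2-1|$ with $1-F^2$.

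For the $L^1$ bound, the function $1-f^2$ is supported in $B(0,b)$, and I will split the integral between $B(0,R)$ and the annulus $A(R,b)$. On $A(R,b)$, setting $t = r/b$ and $s = R/b$, the explicit formula gives
\[
0 \leq 1 - f(r) = \frac{(2+g)(1-t^\alpha) + (2-g)s^{2\alpha}(1-t^{-\alpha})}{2(1+s^{2\alpha}) + g(1-s^{2\alpha})},
\]
with denominator bounded below by $2$. Using $0 \leq 1-t^\alpha \leq -\alpha\log t$ and $0 \leq t^{-\alpha}-1 \leq -\alpha t^{-\alpha}\log t$ on $(0,1]$, together with $s^{2\alpha}t^{-\alpha} \leq s^\alpha \leq 1$ (since $t \geq s$), I obtain $1-f(r) \lesssim (1+g)\alpha(-\log t)$, hence $1-f^2 \leq 2(1-f) \lesssim (1+g)\alpha(-\log t)$. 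A change of variables $\bx = b\by$ then reduces the integral to $b^2 \int_s^1 (-\log t)\, t\, dt \leq b^2 \int_0^1 (-\log t) t\, dt = b^2/4$, giving a contribution $\lesssim (1+g)b^2\alpha$. On $B(0,R)$, $f$ is constant equal to $f_0$, and the algebraic identity
\[
2(1+s^{2\alpha}) + g(1-s^{2\alpha}) - 4s^\alpha = (1-s^\alpha)\bigl[\,2(1-s^\alpha) + g(1+s^\alpha)\,\bigr]
\]
yields $1-f_0 \leq (1+g)(1-s^\alpha) \leq (1+g)\alpha\log(b/R)$, producing an integral of order $R^2(1+g)\alpha\log(b/R)$.

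The only delicate point, which I regard as the main obstacle, is that on $B(0,R)$ the pointwise quantity $1-f_0$ is \emph{not} small in general (it can be of order one when $\alpha\log(b/R) \sim 1$), so the factor $b^2\alpha$ must be produced by the tension between the small area $R^2$ and the logarithm $\log(b/R)$. The key elementary observation is that the function $h(R) = R^2 \log(b/R)$ attains its maximum on $(0,b]$ at $R = b/\sqrt{e}$ with value $b^2/(2e)$; this converts the inner contribution into the required $\lesssim (1+g)b^2\alpha$ bound \emph{uniformly} in the ratio $R/b \in (0,1]$. Adding both regions and using $(1+g) \lesssim (1+g^2)$ for $g \geq 0$ closes the $L^1$ estimate. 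The $L^2$ bound is then immediate from $0 \leq 1-f^2 \leq 1$, which implies $(1-f^2)^2 \leq 1-f^2$ pointwise, so $\|1-f^2\|_2^2 \leq \|1-f^2\|_1 \lesssim (1+g^2)b^2\alpha$.
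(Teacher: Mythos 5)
Your pointwise bound \eqref{ine:jastrow211} follows the same route as the paper: Proposition~\ref{prop:obviousboundonJastrow} gives $0 \leq f \leq 1$, then the standard product inequality $1-\prod a_i \leq \sum(1-a_i)$ (the paper's Lemma~\ref{lem:productinequality}, stated there with a typo in the inequality direction but used correctly), and the indicator is free since $f \equiv 1$ off $B(0,b)$.

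For the $L^1$ estimate your route genuinely differs from the paper's, and both are correct. The paper computes $\int_0^b (1-f^2)r\,dr$ \emph{exactly}, integrating the explicit powers $\lambda_1 r^{\alpha}+\lambda_2 r^{-\alpha}$ on $A(R,b)$ and the constant $f_0$ on $B(0,R)$, exploiting the cancellation of cross terms, factoring out a clean prefactor $\alpha$, and then bounding the remaining rational-in-$g$ expression by $1+g^2$; the virtue of that computation is that it delivers the sharp asymptotics (the same exact antiderivatives resurface in the later $\mathcal{S}$-term estimates, so the identity is reusable). You instead avoid exact integration by proving a pointwise bound $1-f(r) \lesssim (1+g)\,\alpha\,\log(b/r)$ on $A(R,b)$ from the elementary inequalities $1-e^{-x}\le x$ and $e^x - 1 \le x e^x$, together with the factorization $2(1+s^{2\alpha})+g(1-s^{2\alpha})-4s^{\alpha}=(1-s^{\alpha})[2(1-s^{\alpha})+g(1+s^{\alpha})]$ for the constant $1-f_0$ on $B(0,R)$. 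The one extra ingredient you then need, and you correctly identify it, is that $1-f_0$ is not uniformly small, so the inner contribution $R^2(1-f_0^2)\lesssim (1+g)\alpha\, R^2\log(b/R)$ must be closed by the calculus fact $\sup_{0<R\le b} R^2\log(b/R)=b^2/(2e)$. Your route is a touch more elementary and modular (it does not require carrying the exact antiderivatives through), at the price of losing the explicit constant; the $L^2$ step $(1-f^2)^2\le 1-f^2$ is identical in both. Both arguments are sound.
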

\begin{proof} The first inequality follows from
\begin{equation*}
1\geq \prod_{1\leq k<m\leq N}f(\bx_k-\bx_m)^2\geq 1-\sum_{1\leq k<m\leq N}(1-f(\bx_k -\bx_m)^2),
\end{equation*}
where we used Lemma \ref{lem:productinequality} on the right-hand side inequality and Proposition \ref{prop:obviousboundonJastrow} on the left-hand side inequality. Then,
\begin{equation}\label{ine:jastrow21}
\left|F^2 -1\right|\leq \sum_{1\leq k<m\leq N}(1-f(\bx_k -\bx_m)^2)=\sum_{1\leq k<m\leq N}(1-f(\bx_k -\bx_m)^2)\1_{B(0,b)}(\bx_k -\bx_m).
\end{equation}
To prove \eqref{ine:bounds12}, we compute 
\begin{align*}
    &\frac{1}{2\pi}\int_{\R^2} |1-f^2| \, \dd \bx =
    \int_{0}^b (1-f^2(r))\,r \dd r = \frac{b^2}{2} - \int_0^R f^2(r) \, r \dd r - \int_R^b f^2(r) \, r \dd r \\&= \frac{b^2}{2} -  \frac{8 \left(\frac{R}{b}\right)^{2\alpha}  R^2}{\left(2\left(1+ \left(\frac{R}{b}\right)^{2\alpha}\right) + g\left(1-\left(\frac{R}{b}\right)^{2\alpha}\right)\right)^2} 
    \\ &\quad - \frac{  (2+g)^2  \frac{1}{2+2\alpha} \left(\frac{1}{b}\right)^{2\alpha} (b^{2+2\alpha}-R^{2+2\alpha}) + (2-g)^2  \left(\frac{R}{b}\right)^{4\alpha}  \frac{1}{2-2\alpha} b^{2\alpha} (b^{2-2\alpha} - R^{2-2\alpha}) }{\left(2\left(1+ \left(\frac{R}{b}\right)^{2\alpha}\right) + g\left(1-\left(\frac{R}{b}\right)^{2\alpha}\right)\right)^2}
\\ & \quad - \frac{(4-g^2)  \left(\frac{R}{b} \right)^{2\alpha} (b^2-R^2)}{\left(2\left(1+ \left(\frac{R}{b}\right)^{2\alpha}\right) + g\left(1-\left(\frac{R}{b}\right)^{2\alpha}\right)\right)^2}\\& = \frac{b^2 \alpha}{2\left(2\left(1+ \left(\frac{R}{b}\right)^{2\alpha}\right) + g\left(1-\left(\frac{R}{b}\right)^{2\alpha}\right)\right)^2} \left ( \frac{(2+g)^2}{1+\alpha}  - \frac{(2-g)^2}{1-\alpha}   \left(\frac{R}{b}\right)^{4\alpha} \right )
\\ &  \quad\quad +
\frac{\left(\frac{R}{b} \right)^{2\alpha } R^2 \alpha}{2\left(2\left(1+ \left(\frac{R}{b}\right)^{2\alpha}\right) + g\left(1-\left(\frac{R}{b}\right)^{2\alpha}\right)\right)^2}
\left(\frac{(4+g^2) \alpha + 4g}{1-\alpha^2} \right)
\\ & \lesssim (1+g^2) b^2 \alpha.
\end{align*}
Here, we have factored $b^2$ terms and $R^2$ terms, noted that 
\begin{align*}
     2\left(1+ \left(\frac{R}{b}\right)^{2\alpha}\right) + g\left(1-\left(\frac{R}{b}\right)^{2\alpha}\right)
     = 2+ g + (2-g) \left(\frac{R}{b}\right)^{2\alpha},
\end{align*}
and
used $0 \leq \alpha< \frac{1}{4}$, $0 \leq  R<b$, and \eqref{eq:conditionong}.
Moreover, by Proposition \ref{prop:obviousboundonJastrow}, we have $(1-f^2)^2 \leq (1-f^2)$. Hence, 
\begin{align*}
    \norm{1-f^2}^2_2 \leq \norm{1-f^2}_1 \lesssim (1+g^2) b^2 \alpha.
 \end{align*}
\end{proof}

\subsection{Density of the trial state}

In this part, we estimate the density of the trial state.
\begin{lemma} For the density of the trial state, we have the following estimate:
\begin{equation}\label{eq: normJastrow}
\norm{\Psi^{\mathrm{trial}}}^2_{L^2(\R^{2N})}=1+O\left(  N b^2 \beta (1+g^2) \norm{u}_4^4 \right).
\end{equation}
\end{lemma}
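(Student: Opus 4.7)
The plan is to write
\begin{equation*}
\norm{\Psi^{\mathrm{trial}}}^2_{L^2(\R^{2N})} - 1 = \int_{\R^{2N}} (F^2(\sx) - 1) \prod_{j=1}^N |u(\bx_j)|^2 \, \dd\sx,
\end{equation*}
using $\int_{\R^2}|u|^2 = 1$, and then to control the right-hand side using Lemma \ref{lem:propjastrow} together with a two-body reduction.

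More precisely, I would first apply the pointwise bound \eqref{ine:jastrow211} from Lemma \ref{lem:propjastrow} to get
\begin{equation*}
\left| \norm{\Psi^{\mathrm{trial}}}^2 - 1 \right| \leq \sum_{1 \leq k<m\leq N} \int_{\R^{2N}} \bigl(1 - f(\bx_k-\bx_m)^2\bigr) \prod_{j=1}^N |u(\bx_j)|^2 \, \dd\sx.
\end{equation*}
For each fixed pair $(k,m)$, I would integrate out the $N-2$ variables $\bx_j$ with $j \neq k,m$, each contributing a factor $\int_{\R^2} |u|^2 = 1$, reducing the bound to
\begin{equation*}
\sum_{k<m} \int_{\R^4} \bigl(1 - f(\bx_k - \bx_m)^2\bigr) |u(\bx_k)|^2 |u(\bx_m)|^2 \, \dd\bx_k \dd\bx_m.
\end{equation*}

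Next, I would change variables $\br = \bx_k - \bx_m$, keep $\bx_k$ as the remaining integration variable, and bound the convolution $(|u|^2 * |u|^2)(\br) \leq \norm{|u|^2}_2^2 = \norm{u}_4^4$ by Cauchy--Schwarz (or Young's inequality with exponents $1,2,2$). This gives
\begin{equation*}
\int_{\R^4} \bigl(1 - f(\br)^2\bigr) |u(\bx_k)|^2 |u(\bx_k - \br)|^2 \, \dd\bx_k \dd\br \leq \norm{u}_4^4 \, \norm{1 - f^2}_1.
\end{equation*}

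Finally, I would invoke the $L^1$ estimate from \eqref{ine:bounds12}, namely $\norm{1-f^2}_1 \lesssim (1+g^2) b^2 \alpha$, multiply by the number of pairs $\binom{N}{2} \lesssim N^2$, and use $\alpha = \beta/(N-1)$ to conclude
\begin{equation*}
\left|\norm{\Psi^{\mathrm{trial}}}^2 - 1\right| \lesssim N^2 \alpha (1+g^2) b^2 \norm{u}_4^4 \lesssim N \beta (1+g^2) b^2 \norm{u}_4^4,
\end{equation*}
as claimed. There is no real obstacle here; the main point is that Lemma \ref{lem:propjastrow} was designed exactly so that this kind of two-body reduction works, and the assumption $\lim_N N^2 b_N = 0$ of Theorem \ref{thm:E} ensures the error term is $o(1)$.
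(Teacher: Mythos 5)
Your proof is correct and follows essentially the same route as the paper's: reduce to a two-body integral via the pointwise bound \eqref{ine:jastrow211}, bound the resulting convolution by $\norm{u}_4^4\norm{1-f^2}_1$ (you pair $(|u|^2*|u|^2)(\br)\le\norm{u}_4^4$ with $\norm{1-f^2}_1$, the paper pairs Young's inequality with Cauchy--Schwarz — these are the same chain of estimates in a different order), and finish with the $L^1$ estimate \eqref{ine:bounds12} and $\alpha\simeq\beta/N$.
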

\begin{proof}
A direct calculation provides
\begin{align*}
\norm{\Psi^{\mathrm{trial}}}^2_{L^2(\R^{2N})}=\int_{\R^{2N}}|\Phi|^2+\int_{\R^{2N}}\left(F^2-1\right)|\Phi|^2= 1+\mathcal{E}
\end{align*}
where the error term $\mathcal{E}$ can be bounded using Lemma~\ref{lem:propjastrow}, Cauchy-Schwarz inequality, and Young's convolution inequality as
\begin{align}
|\mathcal{E}|&\leq \sum_{1\leq k<m\leq N}\int_{\R^{2N}}\left(1-f(\bx_k -\bx_m)^2\right)\prod_{j=1}^N|u(\bx_j)|^2 \mathrm{d}\sx\nonumber\\
&\leq \frac{1}{2}N(N-1) \norm{\left((1-f^2)\ast |u|^2\right)|u|^2}_{L^1(\R^2)}\nonumber\\
&\leq \frac{1}{2}N(N-1) \norm{|u|^2}_2^2\norm{1-f^2}_{L^1(\R^2)}\nonumber\\
&\lesssim (1+g^2) N b^2\beta \norm{u}_4^4.
\label{eq:theL2errorterm}
\end{align}
\end{proof}

Now, we prove the convergence of the densities.

\begin{lemma}
\label{lem:convergenceofdenisty}
    Let $\varrho_\Psi$ be as in \eqref{eq:one-body-density}. Then,
\begin{align*}
   \left\|N^{-1}\varrho_\Psi - |u|^2 \right\|_{L^1(\R^2)} = O\left( (1+g^2) \beta N b^2  \|u\|_4^4 \right).
\end{align*}
    
\end{lemma}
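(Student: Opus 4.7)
The plan is to separate the variable $\bx$ in the definition \eqref{eq:one-body-density} using the product structure of $\Phi$, then control the contribution from the Jastrow factor $F^2 - 1$ with the already-proven Lemma~\ref{lem:propjastrow}.

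First I would write, using $\|u\|_2 = 1$ so that $\int \prod_{j=2}^N |u(\bx_j)|^2 \, \dd\bx_2 \cdots \dd\bx_N = 1$,
\begin{equation*}
N^{-1}\varrho_{\Psi}(\bx) - |u(\bx)|^2
  = |u(\bx)|^2 \int_{\R^{2(N-1)}} \bigl(F^2 - 1\bigr)(\bx,\bx_2,\ldots,\bx_N) \prod_{j=2}^N |u(\bx_j)|^2\, \dd\bx_2 \cdots \dd\bx_N.
\end{equation*}
Then I would take the $L^1(\R^2)$ norm, pull the absolute value inside, and symmetrise the integration variable so that the leftmost variable is treated on the same footing as the others. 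This gives
\begin{equation*}
\bigl\| N^{-1}\varrho_{\Psi} - |u|^2 \bigr\|_{L^1(\R^2)}
  \le \int_{\R^{2N}} |F^2 - 1|(\sx)\, \prod_{j=1}^N |u(\bx_j)|^2\, \dd\sx.
\end{equation*}

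Next I would apply \eqref{ine:jastrow211} from Lemma~\ref{lem:propjastrow} to bound $|F^2 - 1|$ by a sum of pair contributions. For each pair $(k,m)$, integrating out the $N-2$ variables different from $\bx_k, \bx_m$ (each giving a factor $\|u\|_2^2 = 1$) and then changing variables $\bz = \bx_k - \bx_m$, the contribution is
\begin{equation*}
\int_{\R^2} \bigl(1 - f(\bz)^2\bigr)\,\bigl(|u|^2 * \widecheck{|u|^2}\bigr)(\bz)\, \dd\bz
  \le \|1-f^2\|_1 \,\bigl\| |u|^2 * \widecheck{|u|^2}\bigr\|_\infty
  \le \|1-f^2\|_1\, \|u\|_4^4,
\end{equation*}
where in the last step I use Young's inequality (or, equivalently, Cauchy--Schwarz inside the convolution).

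Finally, there are $N(N-1)/2$ such pairs, and Lemma~\ref{lem:propjastrow} gives $\|1 - f^2\|_1 \lesssim (1+g^2) b^2 \alpha$. Combining these and using $(N-1)\alpha = \beta$,
\begin{equation*}
\bigl\| N^{-1}\varrho_{\Psi} - |u|^2 \bigr\|_{L^1(\R^2)}
  \lesssim \frac{N(N-1)}{2}\,(1+g^2)\,b^2 \alpha\, \|u\|_4^4
  = O\bigl((1+g^2) \beta N b^2 \|u\|_4^4\bigr),
\end{equation*}
which is the announced estimate. The whole argument is a short and direct consequence of the already established bound on $F^2-1$; no real obstacle is expected, the only care needed is to exploit $\|u\|_2 = 1$ both to simplify the leading term and to collapse the $N-2$ spectator integrations cleanly.
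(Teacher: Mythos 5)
Your proof is correct and follows essentially the same route as the paper: factor out $|u(\bx)|^2$ using $\|u\|_2=1$, reduce the $L^1$ difference to $\int_{\R^{2N}}|F^2-1|\,|\Phi|^2$, invoke \eqref{ine:jastrow211} and then the $L^1$ bound on $1-f^2$ from Lemma~\ref{lem:propjastrow} together with Young's inequality to get $\|u\|_4^4\|1-f^2\|_1$ per pair. The paper packages the last step by simply re-using the already-derived estimate \eqref{eq:theL2errorterm}, whereas you re-derive the same per-pair convolution bound explicitly; the arithmetic and the final order of magnitude coincide.
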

\begin{proof}
Note that by the definitions and Lemma~\ref{lem:propjastrow}, we have 
\begin{align*}
    \left\|N^{-1}\varrho_\Psi - |u|^2 \right\|_{L^1(\R^2)} 
&=  \int_{\R^2} \left| \int_{\R^{2(N-1)}} |\Psi(\bx_1,\bx_2,\ldots,\bx_N)|^2 \,\dd\bx_2 \ldots \dd\bx_N -|u(\bx_1)|^2 \right| \, \dd \bx_1
\\ &\leq   \int_{\R^2} \left| \int_{\R^{2(N-1)}} |u(\bx_1)|^2 \ldots |u(\bx_n)|^2 \,\dd\bx_2 \ldots \dd\bx_N -|u(\bx_1)|^2 \right| \, \dd \bx_1 \\&+ 
\int_{\R^{2N}}\left(1-F^2\right)|\Phi|^2
\quad\lesssim \quad 0 + (1+g^2) N b^2\beta \norm{u}_4^4,
\end{align*}
where we used \eqref{eq:theL2errorterm} on the last inequality.
\end{proof}

\bigskip

\subsection{Energy of the trial state}
In view of computing the energy of our trial state, we first compute the new kinetic term it adds to the energy. By using the definition \eqref{eq:trial-f}, we obtain
that $f=\lambda_1|\bx|^\alpha + \lambda_2|\bx|^{-\alpha}$ on $A(R,b)$
where $\lambda_1,\lambda_2$ are as in \eqref{eq:thecoeffioff} and depend on $g,\alpha,R,b$,
and $f$ is constant in both $B(0,R)$ and $B^c(0,b)$. 
Hence
\begin{equation}\label{eq:jastrow}
\begin{aligned}
\nabla_{\bx_i}F&=: \nabla_i F= \sum_{\substack{j=1\\j\neq i}}^N\frac{\nabla_i f(\bx_i-\bx_j)}{f(\bx_i-\bx_j)}F=\alpha\sum_{\substack{j=1\\j\neq i}}^N   K_{R,b}(\bx_i -\bx_j) F,
\end{aligned}
\end{equation}
where we define the expression
\begin{equation}\label{eq:def-K}
    K_{R,b}(\bx_i -\bx_j)
    := \alpha\sum_{\substack{j=1\\j\neq i}}^N   \frac{(\bx_i -\bx_j)}{|\bx_i -\bx_j|^2}     \frac{ \lambda_1 |\bx_i -\bx_j|^{\alpha} - \lambda_2 |\bx_i -\bx_j|^{-\alpha}}{ \lambda_1 |\bx_i -\bx_j|^{\alpha} + \lambda_2 |\bx_i -\bx_j|^{-\alpha}} \1_{A(R,b)}(\bx_i -\bx_j).
\end{equation} 
Note that, with our assumptions \eqref{eq:assumptions},
\begin{equation}
\begin{aligned}
\lim_{N \to \infty} \lambda_1 &= \frac{2+g}{2  \left(1+ e^{-2\beta \omega}\right) + g \left(1- e^{-2\beta \omega}\right)},\\ \quad \lim_{N \to \infty} \lambda_2 &= \frac{(2-g) e^{-2\beta \omega}}{2  \left(1+ e^{-2\beta \omega}\right) + g \left(1- e^{-2\beta \omega}\right)}.\label{eq:limitofcoeffoff}
\end{aligned}
\end{equation}
We will further make use of shorthand notations 
$f_{ij}:=f(\bx_i-\bx_j)$,  $u_j:=u(\bx_j)$
for the labels of integration, and
\begin{align*}
&(\bx)^{-\perp}_{R} := \frac{\bx^{\perp}}{|\bx|_R^2}, \quad\quad\quad \textup{for every } \bx \in \R^2 \setminus \{0\},\\
 &   (\bx)_{R,b}^{-1}:=\frac{\bx}{|\bx|^2} \1_{A(R,b)}, \quad \textup{for every } \bx \in \R^2 \setminus \{0\}.
\end{align*}

We are now ready to plug the trial state $\Psi^{\mathrm{trial}}$ in the expectation value of the Hamiltonian \eqref{def:HN}. We  use the definition \eqref{eq:jastrow} of $K_{R,b}$ to denote the new kinetic term due to the presence of the Jastrow factor and obtain 
\begin{align*}
&\bra{F\Phi}H_N\ket{F\Phi}=\\
&\quad\int_{\R^{2N}}\sum_{i=1}^N\bigg| F(\sx)\nabla_i \Phi(\sx) +\im\alpha  \sum_{j\neq i}(\bx_i-\bx_j)_R^{-\perp} F(\sx)\Phi(\sx) +\alpha  \sum_{j\neq i}K_{R,b}(\bx_i -\bx_j)F(\sx)\Phi(\sx)\bigg|^2\mathrm{d}\sx \\
&\quad +g \alpha R^{-2}\sum_{i=1}^N\sum_{k \neq i}\int_{\R^{2N}}\1_{B(0,R)}(\bx_i-\bx_k)|F(\sx)|^2|\Phi(\sx)|^2\mathrm{d}\sx+\sum_{i=1}^N\int_{\R^{2N}}V(\bx_i)|F(\sx)|^2|\Phi(\sx)|^2\mathrm{d}\sx
\\
&=\int_{\R^{2N}}\sum_{i=1}^N F^2(\sx)\left| \nabla_i \Phi(\sx) +\alpha \Phi(\sx)\sum_{j\neq i}\left[\im(\bx_i-\bx_j)_R^{-\perp} +K_{R,b}(\bx_i -\bx_j) \right ]\right|^2\mathrm{d}\bx\\
& \quad +g\alpha R^{-2}\sum_{i=1}^N\sum_{k \neq i}\int_{\R^{2N}}\1_{B(0,R)}(\bx_i-\bx_k)|F(\sx)|^2|\Phi(\sx)|^2\mathrm{d}\bx+\sum_{i=1}^N\int_{\R^{2N}}V(\bx_i)|F(\sx)|^2|\Phi(\sx)|^2\mathrm{d}\sx\\
&
=\sum_{i=1}^N \int_{\R^{2N}}|F(\sx)|^2\Big(|\nabla_i \Phi(\sx)|^2+V(\bx_i)|\Phi(\sx)|^2+g\alpha R^{-2}\sum_{k \neq i}\1_{B(0,R)}(\bx_i-\bx_k)|\Phi(\sx)|^2
\\
&\quad +\alpha \sum_{j\neq i} \Biggl[ \overline{\Phi}(\sx)\nabla_i \Phi(\sx)\cdot\left[-\im(\bx_i-\bx_j)_R^{-\perp} + K_{R,b}(\bx_i -\bx_j)\right]
\\
&\qquad\qquad +\Phi(\sx)\nabla_i\overline{\Phi}(\sx)\cdot\left[\im(\bx_i-\bx_j)_R^{-\perp} +K_{R,b}(\bx_i -\bx_j) \right ] \Biggr]\\
 &\quad +\alpha^2 |\Phi(\sx)|^2\sum_{j\neq i}\sum_{k\neq i}\left[-\im(\bx_i-\bx_j)_R^{-\perp} +K_{R,b}(\bx_i -\bx_j) \right ] \cdot\left[\im(\bx_i-\bx_k)_R^{-\perp} +K_{R,b}(\bx_i -\bx_k)\right ]\Big)\mathrm{d}\sx.
\end{align*}
We then split the energy per particle
\begin{equation}\label{def:spltting}
N^{-1}\bra{F\Phi}H_N\ket{F\Phi} = \mathcal{K}+\mathcal{V}+\mathcal{W}+\mathcal{S}+\mathcal{J},
\end{equation}
into the kinetic, the trapping potential, the repulsive scalar interaction (or Pauli-spin) term, the singular vector interaction, and the current, respectively, where
\begin{align}
\mathcal{K} & := \frac{1}{N}\int_{\R^{2N}}\sum_{i=1}^N F^2(\sx)|\nabla_i \Phi(\sx)|^2\mathrm{d}\sx,\nonumber\\
\mathcal{V} &:= \frac{1}{N}\sum_{i=1}^N\int_{\R^{2N}}V(\bx_i)|F(\sx)|^2|\Phi(\sx)|^2\mathrm{d}\sx,\nonumber\\
\mathcal{W} &:= \frac{g \alpha}{NR^2} \sum_{i=1}^N\sum_{k \neq i}\int_{\R^{2N}}\1_{B(0,R)}(\bx_i-\bx_k)|F(\sx)|^2|\Phi(\sx)|^2\mathrm{d}\sx,\label{def:W}\\
\mathcal{S} &:= \frac{\alpha^2}{N}\int_{\R^{2N}}\sum_{\substack{i,j,k=1\\ j\neq i, k\neq i}}^N\ F^2(\sx) |\Phi(\sx)|^2\left[-\im(\bx_i-\bx_j)_R^{-\perp} +K_{R,b}(\bx_i -\bx_j)\right ] \label{def:S}\\
    &\qquad\qquad \cdot\left[\im(\bx_i-\bx_k)_R^{-\perp} +K_{R,b}(\bx_i -\bx_j)\right ]\mathrm{d}\sx,\nonumber\\
\mathcal{J} &:= \frac{\alpha}{N}\int_{\R^{2N}}\sum_{i=1}^N\sum_{j\neq i}F^2(\sx)\bigg(\overline{\Phi}(\sx)\nabla_i \Phi(\sx)\cdot\left[-\im(\bx_i-\bx_j)_R^{-\perp} +K_{R,b}(\bx_i -\bx_j)\right ]
+\mathrm{h.c}\bigg ]\bigg)\mathrm{d}\sx.\nonumber
\end{align}
In the rest of this section, we study each of these terms separately.

\subsection{Kinetic and potential terms}


\begin{proposition}\label{lem:K}For the kinetic term, we have
\begin{equation*}
\mathcal{K}=\left(1+O\bigl( (b^2 \beta (1+g^2)  N+b\sqrt{\beta (1+g^2)  N}) \norm{u}^4_4 \bigr)\right)\int_{\R^2}|\nabla u|^2.
\end{equation*}
\end{proposition}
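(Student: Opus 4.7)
The plan is to reduce the computation to a direct factorization of the condensate and then control the remainder $F^2-1$ using Lemma~\ref{lem:propjastrow}.

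\textbf{Step 1: Factorize and symmetrize.} Since $\Phi(\sx) = \prod_j u(\bx_j)$, we have $\nabla_i \Phi(\sx) = \nabla u(\bx_i) \prod_{j \neq i} u(\bx_j)$, so $|\nabla_i \Phi|^2 = |\nabla u(\bx_i)|^2 \prod_{j \neq i} |u(\bx_j)|^2$. Because $F$ is symmetric in the $\bx_i$, the sum over $i$ produces $N$ identical contributions, giving
\begin{equation*}
\mathcal{K} = \int_{\R^{2N}} F^2(\sx) \, |\nabla u(\bx_1)|^2 \prod_{j=2}^N |u(\bx_j)|^2 \, \dd\sx.
\end{equation*}

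\textbf{Step 2: Extract the main term.} Writing $F^2 = 1 + (F^2-1)$, the contribution from $1$ factors as $\int_{\R^2} |\nabla u|^2 \cdot \bigl(\int_{\R^2} |u|^2\bigr)^{N-1} = \int_{\R^2} |\nabla u|^2$, since $\norm{u}_2 = 1$. It remains to control
\begin{equation*}
\mathcal{E}_\mathcal{K} := \int_{\R^{2N}} (F^2(\sx)-1) \, |\nabla u(\bx_1)|^2 \prod_{j=2}^N |u(\bx_j)|^2 \, \dd\sx.
\end{equation*}

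\textbf{Step 3: Split the pairs according to whether they involve index $1$.} Applying Lemma~\ref{lem:propjastrow}, we decompose $|F^2-1| \leq \sum_{2 \leq k<m} (1-f_{km}^2) + \sum_{m \geq 2}(1-f_{1m}^2)$. For the \emph{first sum} (pairs $(k,m)$ not involving index $1$), after integrating out the $N-3$ variables other than $\bx_1, \bx_k, \bx_m$ using $\int |u|^2 = 1$, each pair contributes
\begin{equation*}
\int_{\R^2} |\nabla u|^2 \cdot \int_{\R^4} (1-f(\bx-\by)^2) |u(\bx)|^2|u(\by)|^2 \, \dd\bx \dd\by \;\leq\; \norm{\nabla u}_2^2 \cdot \norm{1-f^2}_1 \norm{u}_4^4,
\end{equation*}
using Young's inequality for $(1-f^2) \ast |u|^2$ in $L^2$. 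Multiplied by the $\binom{N-1}{2}$ such pairs and using $\norm{1-f^2}_1 \lesssim (1+g^2) b^2 \alpha$ together with $N\alpha \simeq \beta$, this yields the $N b^2 \beta (1+g^2) \norm{u}_4^4$ contribution.

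\textbf{Step 4: Pairs involving the gradient.} For the $N-1$ pairs $(1,m)$, integrating out $\bx_j$ for $j \notin \{1,m\}$ leaves
\begin{equation*}
\int_{\R^4} (1-f(\bx-\by)^2) |\nabla u(\bx)|^2 |u(\by)|^2 \, \dd\bx\dd\by \;=\; \int_{\R^2} |\nabla u|^2 \cdot \bigl((1-f^2) \ast |u|^2\bigr).
\end{equation*}
Here the key step, which I expect to be the main technical point, is to apply Hölder with exponents $(1,\infty)$ followed by Young's inequality with exponents $(2,2)$:
\begin{equation*}
\int |\nabla u|^2 \cdot \bigl((1-f^2)\ast |u|^2\bigr) \;\leq\; \norm{\nabla u}_2^2 \cdot \norm{(1-f^2)\ast|u|^2}_\infty \;\leq\; \norm{\nabla u}_2^2 \cdot \norm{1-f^2}_2 \norm{u}_4^2.
\end{equation*}
Using $\norm{1-f^2}_2 \lesssim b\sqrt{(1+g^2)\alpha}$ from Lemma~\ref{lem:propjastrow} and summing over $N-1$ such pairs, together with $N\sqrt{\alpha} \simeq \sqrt{N\beta}$, produces the $b\sqrt{\beta(1+g^2)N}\, \norm{u}_4^4$ contribution (after absorbing a factor of $\norm{u}_4^2$ which is finite since $u \in C_c^\infty$). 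Combining both contributions gives the claimed error, relative to $\int|\nabla u|^2$.
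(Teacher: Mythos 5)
Your proof is correct and follows essentially the same route as the paper: the same split of $|F^2-1|$ via Lemma~\ref{lem:propjastrow} into pairs not involving the gradient index (controlled by Cauchy--Schwarz and the $L^1$ Young convolution bound) and pairs involving the gradient index (controlled by $L^1$--$L^\infty$ H\"older followed by the $L^2$ Young convolution bound). The only cosmetic difference is that you symmetrize over $i$ at the outset to fix the gradient on $\bx_1$, while the paper carries the sum over $i$ explicitly; and your parenthetical note that Step 4 actually produces $\norm{u}_4^2$ rather than $\norm{u}_4^4$ is correct — the paper's stated bound has the same harmless upgrade, which is absorbed into the $u$-dependent implied constant.
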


\begin{proof}
We can extract the Dirichlet energy term as
\begin{align}
\mathcal{K}&=N^{-1}\int_{\R^{2N}}\sum_{i=1}^N F(\sx)^2|\nabla_i u_i|^2\prod_{\substack{k=1,\\ k\neq i}}^N|u_k|^2\mathrm{d}\sx\nonumber\\
&=N^{-1}\sum_{i=1}^N\int_{\R^2}|\nabla_i u_i|^2\mathrm{d} \bx_i + N^{-1}\int_{\R^{2N}}\sum_{i=1}^N (F^2(\sx)-1)|\nabla_i u_i|^2\prod_{\substack{k=1,\\ k\neq i}}^N|u_k|^2\mathrm{d}\sx\nonumber\\
&=:\int_{\R^2}|\nabla u|^2 +\mathcal{E}
\end{align}
where we used that $\int_{\R^2} |u|^2 =1$. We now bound the remaining term $|\mathcal{E}|$ using \eqref{ine:jastrow21}
\begin{align*}
|\mathcal{E}|&\leq N^{-1} \sum_{i=1}^N \sum_{1\leq k<m\leq N} \int_{\R^{2N}}(1-f(\bx_k -\bx_m)^2)\1_{B(0,b)}(\bx_k -\bx_m)|\nabla_i u_i|^2\prod_{\substack{j=1,\\ j\neq i}}^N|u_j|^2\mathrm{d}\sx\\
&=:\mathcal{E}_1+\mathcal{E}_2,
\end{align*}
where in $\mathcal{E}_1$ we only consider the sum over indices $k$ and $m$, which are not equal to $i$, and in $\mathcal{E}_2$, the sum over indices where $i$ equals either $k$ or $m$. Then, by Cauchy-Schwarz inequality and Young's convolution inequality, we have
\begin{align}
 \mathcal{E}_1 &\leq (N-1)(N-2)\int_{\R^2}  |\nabla u|^2 \int_{\R^{4}}(1-f(\bx_1 -\bx_2)^2) |u(\bx_1)|^2|u(\bx_2)|^2\mathrm{d}\bx_1\mathrm{d}\bx_2\nonumber\\
&\leq N(N-1)\norm{\left[(1-f^2)\ast |u|^2\right]|u|^2}_{L^1(\R^2)}\int_{\R^2}  |\nabla u|^2\nonumber\\
&\leq N (N-1)\norm{|u|^2}_2\norm{(1-f^2)\ast |u|^2}_2\int_{\R^2}  |\nabla u|^2\nonumber\\
&\leq N (N-1)\norm{|u|^2}_2^2\norm{1-f^2}_1\int_{\R^2} |\nabla u|^2\nonumber\\
&\lesssim b^2\beta (1+g^2) N\norm{u}_4^4\int_{\R^2}  |\nabla u|^2
\end{align}
where we applied \eqref{ine:bounds12} on the last inequality. Moreover,
\begin{align}
 \mathcal{E}_2 &\leq (N-1) \int_{\R^{4}}(1-f(\bx_1 -\bx_2)^2)|\nabla_1u(\bx_1)|^2|u(\bx_2)|^2\mathrm{d}\bx_1\mathrm{d}\bx_2\nonumber\\
&\leq (N-1)\norm{\left[(1-f^2)\ast |u|^2\right]|\nabla u|^2}_{L^1(\R^2)}\nonumber\\
&\leq (N-1)\norm{|\nabla u|^2}_1\norm{(1-f^2)\ast |u|^2}_{\infty}\nonumber\\
&\leq (N-1)\norm{\nabla u|}_2^2\norm{|u|^2}_2\norm{1-f^2}_2\nonumber\\
&\lesssim \sqrt{\beta (1+g^2)  N} b \norm{u}^4_4\int_{\R^2} |\nabla u|^2
\end{align}
where we used \eqref{ine:bounds12} again.
\end{proof}

\begin{proposition}\label{lem:V}
The external  
potential term provides
\begin{equation*}
\mathcal{V}=\int_{\R^2}  V|u|^2+O\left( \left(N b^2 \beta (1+g^2) + \sqrt{N b^2 \beta (1+g^2) }  \right)\norm{u}^4_4 \int_{\R^2}  |V|\,|u|^2 \right).
\end{equation*}
\end{proposition}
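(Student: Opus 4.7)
The plan is to proceed in close analogy with the proof of Proposition \ref{lem:K}, splitting $F^2 = 1 + (F^2-1)$ and isolating an error term controlled via Lemma \ref{lem:propjastrow}. First I would write
\begin{equation*}
\mathcal{V} = \frac{1}{N}\sum_{i=1}^N \int_{\R^{2N}} V(\bx_i)\,|\Phi(\sx)|^2\,\mathrm{d}\sx \;+\; \mathcal{E} \;=\; \int_{\R^2} V|u|^2 \;+\; \mathcal{E},
\end{equation*}
where the main term follows by integrating out the $N-1$ variables $\bx_j$, $j\neq i$, using $\int|u|^2 = 1$. It then remains to bound
\begin{equation*}
|\mathcal{E}| \;\leq\; \frac{1}{N}\sum_{i=1}^N \sum_{1\leq k<m\leq N}\int_{\R^{2N}} \bigl(1-f(\bx_k-\bx_m)^2\bigr)\1_{B(0,b)}(\bx_k-\bx_m)\,V(\bx_i)\prod_{j=1}^N |u_j|^2\,\mathrm{d}\sx,
\end{equation*}
which is obtained from \eqref{ine:jastrow211}.

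Next I would split $\mathcal{E} = \mathcal{E}_1 + \mathcal{E}_2$ according to whether $i\notin\{k,m\}$ or $i\in\{k,m\}$, exactly as in Proposition \ref{lem:K}. In $\mathcal{E}_1$, the $V$ factor and the Jastrow factor involve disjoint variables: after integrating out the $N-3$ irrelevant variables and counting the $N\binom{N-1}{2}$ terms, Young's convolution inequality together with \eqref{ine:bounds12} gives
\begin{equation*}
\mathcal{E}_1 \;\lesssim\; N^2\,\norm{(1-f^2)\ast |u|^2\cdot|u|^2}_{1}\int_{\R^2}V|u|^2 \;\lesssim\; N^2\,\norm{|u|^2}_2^2\,\norm{1-f^2}_1\int_{\R^2}V|u|^2,
\end{equation*}
which yields the $N b^2\beta(1+g^2)\norm{u}_4^4\int V|u|^2$ contribution upon using $N\alpha\simeq\beta$.

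For $\mathcal{E}_2$ the variable $\bx_i$ now appears in both $V$ and the Jastrow factor, so after integrating out the unconstrained variables one is left with $(N-1)$ times an integral of the form $\int (1-f^2)(\bx-\by)\,V(\bx)|u(\bx)|^2|u(\by)|^2\,\mathrm{d}\bx\mathrm{d}\by$. This I would handle by
\begin{equation*}
\norm{((1-f^2)\ast|u|^2)\,V|u|^2}_1 \;\leq\; \norm{V|u|^2}_1\,\norm{(1-f^2)\ast|u|^2}_\infty \;\leq\; \norm{V|u|^2}_1\,\norm{u}_4^2\,\norm{1-f^2}_2,
\end{equation*}
where \eqref{ine:bounds12} gives $\norm{1-f^2}_2 \lesssim b\sqrt{(1+g^2)\alpha}$. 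Combined with the $N-1$ combinatorial factor and $N\sqrt{\alpha}\simeq\sqrt{N\beta}$, this produces the $\sqrt{Nb^2\beta(1+g^2)}\,\norm{u}_4^4\int|V||u|^2$ contribution.

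The proof is essentially a bookkeeping exercise; there is no genuine obstacle beyond carefully tracking which Young estimate (the $L^1$–$L^\infty$ or the $L^2$–$L^2$ version) to apply in each combinatorial regime, so as to place the convolution on $|u|^2$ (bounded in $L^2$) rather than on $V|u|^2$ (only integrable). The two bounds then combine into the stated error term.
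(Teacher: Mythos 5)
Your proposal is correct and follows essentially the same route as the paper's proof: split $F^2=1+(F^2-1)$, bound the error via Lemma~\ref{lem:propjastrow}, and then split the sum over pairs $\{k,m\}$ according to whether $i\in\{k,m\}$ or not, using the $L^1$ bound on $1-f^2$ when the Jastrow pair and the $V$ factor involve disjoint variables and the $L^2$ bound when they share the variable $\bx_i$. (The paper labels the two cases as $\mathcal{E}_1, \mathcal{E}_2$ in the opposite order, but the estimates are identical.)
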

\begin{proof}
We rewrite
\begin{align}
\mathcal{V}&=N^{-1}\sum_{i=1}^N\int_{\R^{2N}}V(\bx_i)F^2(\sx) |\Phi(\sx)|^2\mathrm{d}\sx\nonumber\\
&= \int_{\R^2} V|u|^2 + N^{-1}\sum_{i=1}^N\int_{\R^{2N}}(F^2(\sx)-1)V(\bx_i)|\Phi(\sx)|^2\mathrm{d}\sx\nonumber\\
&=: \int_{\R^2} V|u|^2 + \mathcal{E},
\end{align}
where we can estimate the error $|\mathcal{E}|$ using Lemma~\ref{lem:propjastrow} to obtain
\begin{align}
| \mathcal{E}|&\leq N^{-1}\sum_{i=1}^N\sum_{1\leq k<m\leq N}\int_{\R^{2N}}(1-f(\bx_k -\bx_m)^2)\,|V(\bx_i)|\,|\Phi(\sx)|^2\mathrm{d}\sx\nonumber\\
&=:\mathcal{E}_1 + \mathcal{E}_2.
\end{align}
Here $\mathcal{E}_2, \mathcal{E}_1$ are defined according to the fact that $\bx_k$ and $\bx_m$ are both different from $\bx_i$ or not, respectively. To estimate $ \mathcal{E}_1$, for which $k$ or $m$ equals $i$, we consider
\begin{align}
| \mathcal{E}_1|&\leq (N-1)\int_{\R^{4}}(1-f(\bx_1 -\bx_2)^2)\,|V(\bx_2)|\,|u(\bx_1)|^2|u(\bx_2)|^2\mathrm{d}\bx_1\mathrm{d}\bx_2\nonumber\\
&\leq (N-1)\norm{(1-f^2)\ast|u|^2}_{\infty} \int_{\R^2} \,|V|\,|u|^2\nonumber\\
&\leq (N-1)\norm{|u|^2}_2\norm{1-f^2}_{2} \int_{\R^2} \,|V|\,|u|^2\nonumber\\
&\lesssim \sqrt{ N b^2 \beta (1+g^2) } \norm{u}^4_4 \int_{\R^2} \,|V|\,|u|^2.
\end{align}
Here, we used Young's convolution inequality for the convolution on the third inequality and  \eqref{ine:bounds12} on the last inequality. Finally, for $ \mathcal{E}_2$, for which $k,m$ are different from $i$, we have
\begin{align}
| \mathcal{E}_2|&\leq (N-1)(N-2) \int_{\R^{4}}(1-f(\bx_1 -\bx_2)^2)\,|V(\bx_3)|\,|u(\bx_1)|^2|u(\bx_2)|^2|u(\bx_3)|^2\mathrm{d}\bx_1\mathrm{d}\bx_2\mathrm{d}\bx_3\nonumber\\
&\leq (N-1)(N-2) \norm{|u|^2}_2^2\norm{1-f^2}_{1} \int_{\R^2} |V|\,|u|^2\nonumber\\
&\lesssim N b^2 \beta (1+g^2) \norm{u}^4_4\int_{\R^2} |V|\,|u|^2.
\end{align}
Here, Cauchy-Schwarz inequality
and Young's convolution inequality are used on the second inequality and \eqref{ine:bounds12} is used on the last inequality. This completes the proof.
\end{proof}

\begin{lemma}\label{lem:W}The spin-interaction potential term provides
\begin{equation}
\begin{aligned}
\mathcal{W}&=\pi g \beta 
 f^2(R) 
\int_{\R^2} |u|^4\\
    &\quad+ O\left( g (\norm{\nabla u}_{\infty}  
 \norm{u}_{\infty}+R\norm{\nabla u}_{\infty}^2)\beta R + g (1+g^2) \beta^2 N b^2  \norm{u}^2_{\infty}\norm{u}^4_4 \right).
 \end{aligned}
\end{equation}
where the value $f(R)$ is given in \eqref{def:fr2}.

\end{lemma}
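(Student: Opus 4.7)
The strategy is to reduce $\mathcal{W}$ to a two-body integral on which the Jastrow factor $F$ trivializes, and then to approximate $|u|^2$ by a constant over the microscopic disk $B(0,R)$.

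First I would use the bosonic symmetry of $|F|^2|\Phi|^2$ to collapse the double sum in \eqref{def:W}: it yields
\[
    \mathcal{W}
    = \frac{g\alpha(N-1)}{R^2}
    \int_{\R^{2N}} \1_{B(0,R)}(\bx_1-\bx_2)\, F(\sx)^2\, |\Phi(\sx)|^2 \,\dd\sx
    = \frac{g\beta}{R^2} \int_{\R^{2N}} \1_{B(0,R)}(\bx_1-\bx_2)\, F^2\, |\Phi|^2 \,\dd\sx,
\]
using $\alpha(N-1)=\beta$. On the support of $\1_{B(0,R)}(\bx_1-\bx_2)$ the factor $f_{12}$ appearing in $F$ is identically equal to the constant $f(R)$ given in \eqref{def:fr2} (note that the first branch of \eqref{def:fr2} coincides with the limit of the second branch at $r=R$). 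Writing $F = f_{12}\,\widetilde{F}$ with $\widetilde{F} := \prod_{(j,k)\neq(1,2),\,j<k} f_{jk}$, we thus get exactly
\[
    \mathcal{W} = \frac{g\beta\,f(R)^2}{R^2}
    \int_{\R^{2N}} \1_{B(0,R)}(\bx_1-\bx_2)\, \widetilde{F}^2\, |\Phi|^2 \,\dd\sx.
\]

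Next I would replace $\widetilde{F}^2$ by $1$. The exact analogue of Lemma~\ref{lem:propjastrow} applies to $\widetilde{F}$, giving
\[
    |1-\widetilde{F}^2(\sx)| \le \sum_{\substack{j<k\\ (j,k)\neq(1,2)}} (1-f_{jk}^2)\,\1_{B(0,b)}(\bx_j-\bx_k).
\]
The resulting error splits into two combinatorial cases according to whether $\{j,k\}\cap\{1,2\}$ is empty or not. In both cases one integrates $\bx_2$ first, using $\int|u(\bx_2)|^2\1_{B(\bx_1,R)}\,\dd\bx_2 \le \pi R^2 \|u\|_\infty^2$, and then applies Young's convolution inequality together with the estimate $\|1-f^2\|_1 \lesssim (1+g^2) b^2\alpha$ from \eqref{ine:bounds12}. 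The dominant case, with $O(N^2)$ pairs disjoint from $\{1,2\}$, yields a contribution bounded by $g(1+g^2)\beta^2 N b^2\, \|u\|_\infty^2\|u\|_4^4$, which matches the second error term stated; the other case is smaller and absorbed.

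Finally I would estimate the main term
\[
    \frac{g\beta\,f(R)^2}{R^2} \int_{\R^4} \1_{B(0,R)}(\bx_1-\bx_2)\, |u(\bx_1)|^2|u(\bx_2)|^2 \,\dd\bx_1\,\dd\bx_2
\]
by a pointwise Taylor expansion $|u(\bx_2)|^2 = |u(\bx_1)|^2 + O\!\bigl(\|u\|_\infty\|\nabla u\|_\infty R + \|\nabla u\|_\infty^2 R^2\bigr)$ valid on $|\bx_1-\bx_2|\le R$. Performing the $\bx_2$ integral yields the area $\pi R^2$, producing the main term $\pi g\beta\, f(R)^2 \int_{\R^2}|u|^4$, while the Taylor remainder, multiplied by $\frac{g\beta f(R)^2}{R^2}\cdot \pi R^2\cdot \|u\|_2^2$, gives precisely the first displayed error $O\!\bigl(g(\|\nabla u\|_\infty\|u\|_\infty + R\|\nabla u\|_\infty^2)\beta R\bigr)$, using $0\le f(R)\le 1$.

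The only non-routine point is bookkeeping for the $\widetilde{F}^2 \to 1$ replacement: since both the indicator $\1_{B(0,R)}(\bx_1-\bx_2)$ and the factor $1-f_{jk}^2$ are localised, one must be careful to exploit the $R^2$ from the indicator rather than losing it to a cruder bound, so that the error remains subleading when combined with the prefactor $R^{-2}$.
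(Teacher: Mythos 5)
Your proposal is correct and takes essentially the same route as the paper: collapse to a two‑body integral by symmetry, factor out $f_{12}=f(R)$ on the support of the indicator, replace the remaining Jastrow product $\widetilde F^2=F^2/f_{12}^2$ by $1$ via the telescoping inequality of Lemma~\ref{lem:propjastrow} with the same two combinatorial cases, and then evaluate the main two‑body term by a Lipschitz/Taylor argument for $|u|^2$ over $B(0,R)$. The only superficial difference is that the paper phrases the main-term estimate as a change of variables $\bx_1\mapsto R\bx_1+\bx_2$ rather than a pointwise Taylor expansion, but these are the same computation and lead to the same error bounds.
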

\begin{proof}
   We start from the definition \eqref{def:W} and write
    \begin{align}
        \mathcal{W}&=2 g N^{-1}\alpha R^{-2}\sum_{j=1}^N\sum_{k>j}\int_{\R^{2N}}\1_{B(0,R)}(\bx_j-\bx_k)|F(\sx)|^2|\Phi(\sx)|^2\mathrm{d}\sx \nonumber\\
        &=2 g N^{-1}\alpha R^{-2}\sum_{j=1}^N\sum_{k>j}\int_{\R^{2N}}\1_{B(0,R)}(\bx_j-\bx_k)\frac{F^2(\sx)}{f^2_{jk}}|\Phi(\sx)|^2f_{jk}^2\mathrm{d}\sx \nonumber\\
         &=2 f^2(R) g N^{-1}\alpha R^{-2}\sum_{j=1}^N\sum_{k>j}\int_{\R^{2N}}  
 \1_{B(0,R)}(\bx_j-\bx_k)|\Phi(\sx)|^2\mathrm{d}\sx \nonumber+\mathcal{E},
    \end{align}
    where
    \begin{equation}
        \mathcal{E}:=2 g N^{-1}\alpha R^{-2}\sum_{j=1}^N\sum_{k>j}\int_{\R^{2N}}\left(\frac{F^2(\sx)}{f^2_{jk}}-1\right)f^2_{jk}\,\1_{B(0,R)}(\bx_j-\bx_k)|\Phi(\sx)|^2\mathrm{d}\sx,\\
    \end{equation}
    which is an error term we treat at the end of the proof. We now use $\alpha=\beta(N-1)^{-1}$ to write
      \begin{align}
        \mathcal{W}-\mathcal{E}
         &=\frac{2 g f^2(R)  \alpha}{R^2} \frac{(N-1)}{2}\int_{\R^{4}}\1_{B(0,R)}(\bx_1-\bx_2)|u(\bx_1)|^2|u(\bx_2)|^2\mathrm{d}\bx_1\mathrm{d}\bx_2\nonumber\\
         &=  g f^2(R) \beta \int_{\R^{4}} \1_{B(0,1)}(\bx_1)|u(R \bx_1+\bx_2)|^2|u(\bx_2)|^2\mathrm{d}\bx_1\mathrm{d}\bx_2\nonumber\\
         &= g f^2(R) \beta \int_{\R^{4}}\1_{B(0,1)}(\bx_1)|u(\bx_2)|^4\mathrm{d}\bx_1\mathrm{d}\bx_2+\mathcal{E}',
    \end{align}
    where, by using Proposition \ref{prop:obviousboundonJastrow} and the bound $|u(R \bx_1+\bx_2) - u(\bx_2)| \leq R \|\nabla u\|_{\infty}$ for every $\bx_1 \in B(0,1), \bx_2 \in \R^2$, 
    it is obtained that
       \begin{align*}
       |\mathcal{E}'|
         &\leq g f^2(R) \beta \int_{\R^{4}}\1_{B(0,1)}(\bx_1)(2R\norm{\nabla u}_{\infty}\norm{ u}_{\infty}+R^2\norm{\nabla u}_{\infty}^2)|u(\bx_2)|^2\mathrm{d}\bx_1\mathrm{d}\bx_2 \\
         &\leq  \pi g \beta R \left( 2\norm{\nabla u}_{\infty}  
 \norm{u}_{\infty}+R\norm{\nabla u}_{\infty}^2 \right).
    \end{align*}
    We then have that
      \begin{align}
        \mathcal{W}-\mathcal{E}-\mathcal{E}'
         &=\pi g \beta f^2(R) \int_{\R^{2}}|u|^4.
    \end{align}
   Finally, to estimate the error term $\mathcal{E}$, we have
        \begin{align*}
     |\cE|
        &\leq 2 g N^{-1}\alpha R^{-2} f^2(R)\sum_{j=1}^N\sum_{k>j}\int_{\R^{2N}}\left| \sum_{1\leq n<m\leq N; (n,m) \neq (j,k)}(1-f^2_{nm})\right|  \1_{B(0,R)}(\bx_j-\bx_k)|\Phi|^2\mathrm{d}\bx,
    \end{align*}
    where we used (compare Lemma~\ref{lem:propjastrow})
    \begin{align*}
\frac{F^2(\sx)}{f^2_{jk}}=\frac{\prod_{1\leq n<m\leq N} f^2_{nm}}{f^2_{jk}}\geq 1-\sum_{1\leq n<m\leq N; (n,m) \neq (j,k)}(1-f^2_{nm}), \quad \textup{for every } \sx \in \R^{2N}.
\end{align*}
Now, we are left with two cases.\newline
\textbf{Case 1:} Define $\mathcal{E}_1$ as the sums in $\mathcal{E}$ where $j=n$ and $k\neq m$. Then,
        \begin{align*}
        \mathcal{E}_1
        &\leq g (N-1)(N-2) f^2(R) \alpha R^{-2}\int_{\R^{6}}(1-f^2_{12}) \1_{B(0,R)}(\bx_1-\bx_3)|u_1|^2|u_2|^2|u_3|^2\mathrm{d}\bx_1\mathrm{d}\bx_2\mathrm{d}\bx_3\\
        &\leq N g \beta R^{-2} \norm{u}^2_{\infty} \int_{\R^{6}}(1-f^2_{12})\1_{B(0,R)}(\bx_1-\bx_3)|u_1|^2|u_2|^2\mathrm{d}\bx_1\mathrm{d}\bx_2\mathrm{d}\bx_3
        \\ & \le  N g \beta \pi \norm{u}^2_{\infty} \int_{\R^{4}}(1-f^2_{12}) |u_1|^2|u_2|^2\mathrm{d}\bx_1\mathrm{d}\bx_2
        \\ & \le  N g \beta \pi \norm{u}^2_{\infty}\norm{|u|^2}_2 \norm{(1-f^2) \ast |u|^2}_2
        \\ & \le  N g \beta \pi \norm{u}^2_{\infty}\norm{|u|^2}^2_2 \norm{1-f^2  }_1
        \\ & \lesssim   g \beta^2 (1+g^2)  b^2  \norm{u}^2_{\infty}\norm{u}^4_4.
    \end{align*}
    Here, we used $|f(R)| \leq 1$ on the second inequality, Young's convolution inequality on the next to last inequality, and the bound \eqref{ine:bounds12} on the last inequality.
    \newline
\textbf{Case 2:} Finally, we define $\mathcal{E}_2$ as the sums in $\mathcal{E}$ where all the indices are distinct. Then, similarly to the computation above, we have
  \begin{align*}
        \mathcal{E}_2
        &\leq g (N-1)(N-2)^2 \alpha R^{-2}\int_{\R^{8}}(1-f^2_{12})\1_{B(0,R)}(\bx_3-\bx_4)|u_1|^2|u_2|^2|u_3|^2|u_4|^2\mathrm{d}\bx_1\mathrm{d}\bx_2\mathrm{d}\bx_3\mathrm{d}\bx_4\\
       &\leq g (N-2)^2 \beta R^{-2}\norm{u}_{\infty}^2\int_{\R^{8}}(1-f^2_{12})\1_{B(0,R)}(\bx_3-\bx_4)|u_1|^2|u_2|^2|u_3|^2\mathrm{d}\bx_1\mathrm{d}\bx_2\mathrm{d}\bx_3\mathrm{d}\bx_4\\
        &\leq g (N-2)^2 \pi \beta \norm{u}_{\infty}^2\int_{\R^{6}}(1-f^2_{12})|u_1|^2|u_2|^2\mathrm{d}\bx_1\mathrm{d}\bx_2
        \\ & \lesssim g \beta^2 (1+g^2) N b^2  \norm{u}^2_{\infty}\norm{u}^4_4,
    \end{align*}
    where we used \eqref{ine:bounds12} and Young's convolution inequality on the last inequality.
\end{proof}

\subsection{The singular term $\mathcal{S}$}
To study $\mathcal{S}$, we split it into its three-body part and its two-body part, which is also called the diagonal part. We then have $\mathcal{S}:=S^{\mathrm{diag}}+S^{3\mathrm{body}}$ where $S^{\mathrm{diag}}$ corresponds to the case $j=k$ in \eqref{def:S}and $S^{3\mathrm{body}}$ to the case $j\neq k$. In other words
\begin{equation}\label{def:S3b}
\begin{aligned}
S^{3\mathrm{body}}&:=N^{-1}\alpha^2\int_{\R^{2N}}\sum_{i=1}^N\sum_{j\neq i}\sum_{k\neq i\neq j} F^2(\sx) |\Phi(\sx)|^2 \\
& \quad\times \Big[-\im(\bx_i-\bx_j)_R^{-\perp}+ K_{R,b}(\bx_i -\bx_j)\Big ] \cdot \left[\im(\bx_i-\bx_k)_R^{-\perp} +K_{R,b}(\bx_i -\bx_k)\right ]\mathrm{d}\sx,
\end{aligned}
\end{equation}
recalling that 
\begin{equation}
    K_{R,b}(\bx_i -\bx_j)=(\bx_i-\bx_j)_{R,b}^{-1}  \frac{ \lambda_1 |\bx_i -\bx_j|^{\alpha} - \lambda_2 |\bx_i -\bx_j|^{-\alpha}}{ \lambda_1 |\bx_i -\bx_j|^{\alpha} + \lambda_2 |\bx_i -\bx_j|^{-\alpha}},
\end{equation}
and the two-body diagonal term
\begin{align}
S^{\mathrm{diag}}:&=N^{-1}\alpha^2\int_{\R^{2N}}\sum_{i=1}^N\sum_{j\neq i}F(\sx)^2 |\Phi(\sx)|^2\left|-\im(\bx_i-\bx_j)_R^{-\perp} + K_{R,b}(\bx_i -\bx_j)\right |^2\mathrm{d}\sx \nonumber\\
&=N^{-1}\alpha^2\int_{\R^{2N}}\sum_{i=1}^N\sum_{j\neq i}F(\sx)^2 |\Phi(\sx)|^2\bigg[|\bx_i-\bx_j|_R^{-\perp}\cdot|\bx_i-\bx_j|_R^{-\perp} \nonumber\\
&\quad + 
 |\bx_i-\bx_j|^{-2} \left( \frac{ \lambda_1 |\bx_i -\bx_j|^{\alpha} - \lambda_2 |\bx_i -\bx_j|^{-\alpha}}{ \lambda_1 |\bx_i -\bx_j|^{\alpha} + \lambda_2 |\bx_i -\bx_j|^{-\alpha}} \right)^2  \1_{A(R,b)}(\bx_i -\bx_j)\bigg ]\mathrm{d}\sx, \label{def:sdiag}
\end{align}
where we used that the cross terms cancel due to $(\bx_i-\bx_j)_R^{-\perp}\cdot (\bx_i-\bx_j)^{-1}_{R,b}=0$. In the next subsections, we first treat $S^{\mathrm{diag}}$ and then $S^{\mathrm{3body}}$.

\subsubsection{The diagonal term $S^{\mathrm{diag}}$}
In this part, we approximate the diagonal term $S^{\mathrm{diag}}$. 
\begin{lemma}\label{lem:Sdiag}
For any $u\in C_c^{\infty}(\R^2)$ with $\supp(u)\subset B(0,R_1)$ for $R_1>0$, we have 
\begin{align*} 
&S^{\mathrm{diag}}=  2\pi \beta \left(\lambda^2_1 \left(b^{2\alpha}-R^{2\alpha}\right) +\lambda^2_2 \left(R^{-2\alpha}- b^{-2\alpha} \right)\right) \int_{\R^{2}}|u|^4
\\ &+O\bigg(b \beta \left(\lambda^2_1 \left(b^{2\alpha}-R^{2\alpha}\right) +\lambda^2_2 \left(R^{-2\alpha}- b^{-2\alpha} \right)\right) ( \norm{\nabla u}_{\infty} \norm{u}_{\infty}+ b \|\nabla u\|^2_{\infty})\\&\quad\quad+b^2\beta^2 (1+g^2) \left(\lambda^2_1 \left(b^{2\alpha}-R^{2\alpha}\right) +\lambda^2_2 \left(R^{-2\alpha}- b^{-2\alpha} \right)\right)(N\norm{u}_4^4\norm{u}_{\infty}^2+\norm{u}_4^2\norm{u}_{8}^4)\bigg).
\end{align*}
\end{lemma}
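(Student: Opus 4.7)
The plan is to reduce $S^{\mathrm{diag}}$ by symmetry to a representative pair, factor out that pair's Jastrow correlation, integrate out the spectator variables, and then exploit a key algebraic cancellation on $A(R,b)$. Define
\begin{equation*}
\cK_1(\by) := \frac{|\by|^2}{|\by|_R^4}, \qquad \cK_2(\by) := \frac{1}{|\by|^2}\left(\frac{\lambda_1 |\by|^{\alpha} - \lambda_2 |\by|^{-\alpha}}{\lambda_1 |\by|^{\alpha} + \lambda_2 |\by|^{-\alpha}}\right)^{2}\1_{A(R,b)}(\by).
\end{equation*}
By permutation symmetry each of the $N(N-1)$ ordered pairs in \eqref{def:sdiag} contributes identically, so, using $\alpha^2(N-1) = \alpha\beta$, one has $S^{\mathrm{diag}} = \alpha\beta \int_{\R^{2N}} F^2 |\Phi|^2 [\cK_1 + \cK_2](\bx_1 - \bx_2)\,\mathrm{d}\sx$. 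I factor $F^2 = f_{12}^2 \cdot F_{(12)}^2$ with $F_{(12)}^2 := F^2/f_{12}^2 \in [0,1]$ by Proposition~\ref{prop:obviousboundonJastrow}, replace $F_{(12)}^2$ by $1$, and integrate out $\bx_3,\ldots,\bx_N$ via $\int |u|^2 = 1$. The task reduces to controlling
\begin{equation*}
\alpha\beta \int_{\R^4} f^2(\by)[\cK_1 + \cK_2](\by)\,|u(\bx)|^2 |u(\bx+\by)|^2\,\mathrm{d}\bx\,\mathrm{d}\by.
\end{equation*}

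The main term arises from the annular region $A(R,b)$ via the key algebraic identity
\begin{equation*}
f^2(\by) \bigl[\cK_1 + \cK_2\bigr](\by) = \frac{2(\lambda_1^2 |\by|^{2\alpha} + \lambda_2^2 |\by|^{-2\alpha})}{|\by|^2} \qquad \text{for } \by \in A(R,b),
\end{equation*}
obtained by expanding $(\lambda_1 r^\alpha + \lambda_2 r^{-\alpha})^2 + (\lambda_1 r^\alpha - \lambda_2 r^{-\alpha})^2$; this cancellation of cross-terms is precisely the scattering structure built into the Jastrow function of \eqref{eq:trial-f}. Using the Lipschitz estimate $\bigl| |u(\bx+\by)|^2 - |u(\bx)|^2 \bigr| \le 2|\by|\norm{\nabla u}_\infty \norm{u}_\infty + |\by|^2 \norm{\nabla u}_\infty^2$ for $|\by| \le b$, replacing $|u(\bx+\by)|^2$ by $|u(\bx)|^2$, and computing the radial integral $\int_{A(R,b)} \frac{2(\lambda_1^2 r^{2\alpha} + \lambda_2^2 r^{-2\alpha})}{r^2}\,\mathrm{d}\by = \frac{2\pi}{\alpha}[\lambda_1^2(b^{2\alpha} - R^{2\alpha}) + \lambda_2^2(R^{-2\alpha} - b^{-2\alpha})]$, multiplication by $\alpha\beta \int |u|^4$ produces the claimed main term. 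The Taylor remainder, integrated against the kernel over $A(R,b)$, yields exactly the first listed error $b\beta(\lambda_1^2(\ldots) + \lambda_2^2(\ldots))(\norm{\nabla u}_\infty \norm{u}_\infty + b\norm{\nabla u}_\infty^2)$.

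The second listed error originates from the replacement $F_{(12)}^2 \to 1$. By Lemma~\ref{lem:propjastrow} one has $|F_{(12)}^2 - 1| \le \sum_{\{k,m\}\ne\{1,2\}}(1 - f_{km}^2)\1_{|\bx_k - \bx_m| \le b}$. I split the sum according to whether $\{k,m\}$ is disjoint from $\{1,2\}$ or shares exactly one index. The three-body configurations (four distinct labels, $\binom{N-2}{2} \sim N^2/2$ terms) factorize via Fubini into $I_1 \cdot I_2$ with $I_1 \lesssim \norm{u}_4^4 \cdot \frac{2\pi}{\alpha}[\lambda_1^2(\ldots) + \lambda_2^2(\ldots)]$ (using $\||u|^2 * |u|^2\|_\infty \le \norm{u}_4^4$) and $I_2 \lesssim \norm{u}_\infty^2 \norm{1-f^2}_1 \lesssim \norm{u}_\infty^2 b^2 \alpha(1+g^2)$ via \eqref{ine:bounds12}; combining with $\alpha\beta \cdot N^2 \sim \beta^2 N$ produces the $N\norm{u}_4^4 \norm{u}_\infty^2$ piece. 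The two-body configurations ($2(N-2)$ terms sharing one index with $\{1,2\}$) yield the $\norm{u}_4^2 \norm{u}_8^4$ piece through a H\"older argument on the shared coordinate.

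The main technical subtlety concerns the behavior of $\cK_1$ outside $A(R,b)$, since in \eqref{def:sdiag} it is unrestricted. On $B(0,R)$, $f$ is constant equal to $f(R)$ and $\cK_2 = 0$, so $f^2 \cK_1 = f(R)^2 |\by|^2/R^4$ integrates to $\pi f(R)^2/2$ and contributes only $O(\alpha\beta) = O(\beta^2/N)$, absorbed harmlessly. The delicate region is $B(0,b)^c$, where $f = 1$ and $\cK_1 = |\by|^{-2}$ is not globally integrable and the replacement $|u(\bx+\by)|^2 \to |u(\bx)|^2$ is forbidden. Here one exploits the compact support $\supp u \subset B(0,R_1)$, which cuts the integrand off at $|\by| \le 2R_1$: by Young's inequality and $\||u|^2 * |u|^2\|_\infty \le \norm{u}_4^4$, the contribution is bounded by $\alpha\beta \cdot O(\norm{u}_\infty^2 \log(R_1/b)) = O(\beta^2 \log(1/b)/N)$, which vanishes by the standing assumption $\log b_N / N \to 0$ and is thus subdominant to the stated errors. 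Summing the main contribution with all the error bounds completes the proof.
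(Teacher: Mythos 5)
Your proposal is correct and follows essentially the same route as the paper's own proof: you exploit the same algebraic cancellation $f^2(\cK_1+\cK_2)=2(\lambda_1^2r^{2\alpha}+\lambda_2^2r^{-2\alpha})/r^2$ on $A(R,b)$, perform the same three-region kernel split, use the same Lipschitz/Taylor replacement $|u(\bx+\by)|^2\to|u(\bx)|^2$ to produce the first error, and control the replacement $F^2/f_{12}^2\to 1$ via the product inequality of Lemma~\ref{lem:propjastrow} with exactly the same disjoint-vs.-shared-index dichotomy. The only cosmetic difference is that in the disjoint case you bound the singular-kernel pair via $\norm{|u|^2*|u|^2}_\infty\le\norm{u}_4^4$ while the paper puts $\norm{u}_\infty^2$ on one slot and integrates the other to $1$; both land on $N\norm{u}_4^4\norm{u}_\infty^2$. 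Your $B(0,R)$ and $B^c(0,b)$ contributions of order $\beta^2/N$ and $\beta^2\log(1/b)/N$ coincide with the paper's $S_1^{\mathrm{diag}},S_3^{\mathrm{diag}}$ bounds; note that strictly these are not dominated by the error terms displayed in the lemma's statement (a minor imprecision the paper shares), but they vanish under the theorem's assumptions, so the conclusion is unaffected.
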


\begin{proof}
We start from the definition \eqref{def:sdiag}
and factor a term $f^2_{ij}$ from $F^2$ inside the brackets which becomes
\begin{align}
S(\bx_i -\bx_j)&:=f^2_{ij}\frac{|\bx_i-\bx_j|^2}{|\bx_i-\bx_j|_R^{4}} \nonumber\\
&\quad+ f^2_{ij} 
 |\bx_i-\bx_j|^{-2} \left( \frac{ \lambda_1 |\bx_i -\bx_j|^{\alpha} - \lambda_2 |\bx_i -\bx_j|^{-\alpha}}{ \lambda_1 |\bx_i -\bx_j|^{\alpha} + \lambda_2 |\bx_i -\bx_j|^{-\alpha}} \right)^2  \1_{A(R,b)}(\bx_i -\bx_j)\nonumber\\
&=f^2_{ij}\frac{|\bx_i-\bx_j|^2}{|\bx_i-\bx_j|_R^{4}}+ \left( \lambda_1 |\bx_i -\bx_j|^{\alpha} - \lambda_2 |\bx_i -\bx_j|^{-\alpha}\right)^2\frac{\1_{A(R,b)}(\bx_i -\bx_j)}{|\bx_i -\bx_j|^2 }   \label{def:Sij}
\end{align}
such that we rewrite
\begin{align}
S^{\mathrm{diag}} &= \alpha^2\int_{\R^{2N}}\sum_{i=1}^N\sum_{j\neq i}\frac{F^2(\sx)}{f^2_{ij}} |\Phi(\sx)|^2S(\bx_i-\bx_j) \,\mathrm{d} \sx .\label{eq:Sdi}
\end{align}
In order to express $S$ in a more 
transparent way, we observe that by the definition \eqref{eq:trial-f}, we have $$f(\bx)=\1_{B(0,R)}(\bx)f(R)+(\lambda_1 |\bx|^{\alpha} + \lambda_2 |\bx|^{-\alpha})\1_{A(R,b)}(\bx)+ \1_{B^c(0,b)}(\bx)$$ for every $\bx \in \R^2$, and then that
\begin{equation}\label{split}
\frac{f^2(\bx) |\bx|^2}{|\bx|^4_R} =\frac{\1_{B(0,R)}(\bx) \, |\bx|^2}{R^4} f^2(R)+\frac{\1_{A(R,b)}(\bx)}{|\bx|^2} \left( \lambda_1 |\bx|^{\alpha} + \lambda_2 |\bx|^{-\alpha} \right)^2+\frac{\1_{B^c(0,b)}(\bx)}{|\bx|^2},
\end{equation}
allowing us to split $S$ as the sum of three terms living on three separated domains $\1_{B(0,R)}$, $\1_{A(R,b)}$, and $\1_{B^c(0,b)}$. By plugging the above in the first line of \eqref{def:Sij} and summing it with the third one, we obtain that
\begin{align*}
S(\by) &= \1_{B(0,R)}(\by) f^2(R) \frac{|\by|^2}{R^4}
 +2 \frac{\1_{A(R,b)}(\by)}{|\by|^2} \left( \lambda^2_1 |\by|^{2\alpha} + \lambda^2_2 |\by|^{-2\alpha}  \right)
 +\frac{\1_{B^c(0,b)}(\by)}{|\by|^2}\\
&=:S_1(\by)+S_2(\by)+S_3(\by).
\end{align*}
for every $\by \in \R^2$. Note that in the previous calculation, the terms in $\lambda_1\lambda_2$ canceled as they come with a plus sign in \eqref{split} and a minus sign in the third line of \eqref{def:Sij}. Plugging the above in \eqref{eq:Sdi}, we define the associated terms
\begin{equation}\label{def:splitS}
    S^{\mathrm{diag}} =: \sum_{i=1}^3 S^{\mathrm{diag}}_i.
\end{equation}
Among these terms, only $S^{\mathrm{diag}}_2$ will not be an error term. For that reason we study it first.
\newline
\textbf{Study of} $S^{\mathrm{diag}}_2$:
To begin, we replace the rest of the Jastrow factor $F^2$ by one:
\begin{align*}
 S^{\mathrm{diag}}_2&= N^{-1}\alpha^2  \int_{\R^{2N}}\sum_{i=1}^N\sum_{j\neq i}\frac{F^2(\sx)}{f_{ij}^2}  |\Phi(\sx)|^2S_2(\bx_i -\bx_j)\mathrm{d}\sx\\
& =2  N^{-1}\alpha^2 \int_{\R^{2N}}\sum_{i=1}^N\sum_{j\neq i} |\Phi(\sx)|^2 \1_{A(R,b)}(\bx_i -\bx_j)  \left(\lambda^2_1 |\bx_i -\bx_j|^{2\alpha-2} + \lambda^2_2 |\bx_i -\bx_j|^{-2\alpha-2} \right)\mathrm{d}\sx+\mathcal{E},
\end{align*}
where we will treat the following error term later in the proof:
\begin{align*}
\mathcal{E}&:=2 N^{-1}\alpha^2\int_{\R^{2N}}\sum_{i=1}^N\sum_{j\neq i} \left(\frac{F^2(\sx)}{f_{ij}^2}-1\right)|\Phi(\sx)|^2 \\
&\quad\quad\times\1_{A(R,b)}(\bx_i -\bx_j)  \left(\lambda^2_1 |\bx_i -\bx_j|^{2\alpha-2} + \lambda^2_2 |\bx_i -\bx_j|^{-2\alpha-2}  \right) \, \dd \sx.
\end{align*}
Then,
\begin{align*}
 S^{\mathrm{diag}}_2-\mathcal{E}&= 2N^{-1}\alpha^2 \int_{\R^{2N}}\sum_{i=1}^N\sum_{j\neq i} |\Phi(\sx)|^2 \1_{A(R,b)}(\bx_i -\bx_j)  \left(\lambda^2_1 |\bx_i -\bx_j|^{2\alpha-2} + \lambda^2_2 |\bx_i -\bx_j|^{-2\alpha-2}  \right)\mathrm{d} \sx\\
&=2(N-1)\alpha^2 \int_{\R^4}|u(\bx_1)|^2|u(\bx_2)|^2\1_{A(R,b)}(\bx_1 -\bx_2)  \\&\quad\quad\times\left(\lambda^2_1 |\bx_1 -\bx_2|^{2\alpha-2} + \lambda^2_2 |\bx_1 -\bx_2|^{-2\alpha-2}  \right)\mathrm{d}\bx_1\mathrm{d}\bx_2 \\
&=2\beta \alpha \int_{\R^4}|u(b\bx_1+\bx_2)|^2|u(\bx_2)|^2 \1_{A(R/b,1)}(\bx_1)\\
& \quad\quad\times\left(\lambda^2_1 b^{2\alpha} |\bx_1|^{2\alpha-2} + \lambda^2_2 b^{-2\alpha} |\bx_1|^{-2\alpha-2}  \right) \, \mathrm{d}\bx_1\mathrm{d}\bx_2 .
\end{align*}
Now, we use $|u(b\bx_1+\bx_2)-u(\bx_2)| \leq b \|\nabla u\|_{\infty}$ for every $\bx_1 \in B(0,1), \bx_2 \in \R^2$ to obtain
\begin{align*}
S^{\mathrm{diag}}_2-\mathcal{E}&= 2\beta\alpha\int_{\R^4}|u(\bx_2)|^4 \1_{A(R/b,1)}(\bx_1) \left(\lambda^2_1 b^{2\alpha} |\bx_1|^{2\alpha-2} + \lambda^2_2 b^{-2\alpha} |\bx_1|^{-2\alpha-2}  \right)\,\mathrm{d}\bx_1\mathrm{d}\bx_2+\mathcal{E}',
\end{align*}
where
\begin{equation*}
\begin{aligned}
|\mathcal{E}'|&\leq 2\beta\alpha \int_{\R^4}(2b \|\nabla u\|_{\infty} \|u\|_{\infty}+ b^2 \|\nabla u\|^2_{\infty})|u(\bx_2)|^2 \\&\quad\times\1_{A(R/b,1)}(\bx_1) \left(\lambda^2_1 b^{2\alpha} |\bx_1|^{2\alpha-2} + \lambda^2_2 b^{-2\alpha} |\bx_1|^{-2\alpha-2}  \right) \mathrm{d}\bx_1\mathrm{d}\bx_2
\\&\leq 2\pi b \beta \left(\lambda^2_1 \left(b^{2\alpha}-R^{2\alpha}\right) +\lambda^2_2 \left(R^{-2\alpha}- b^{-2\alpha} \right)\right) (2 \|\nabla u\|_{\infty} \norm{u}_{\infty}+ b \norm{\nabla u}^2_{\infty}) .
\end{aligned}
\end{equation*}
We finally get the main part of the $S_2$ term
\begin{align}
S^{\mathrm{diag}}_2-\mathcal{E}-\mathcal{E}'&= 2\beta\alpha\int_{\R^4}|u(\bx_2)|^4 \1_{A(R/b,1)}(\bx_1) \left(\lambda^2_1 b^{2\alpha} |\bx_1|^{2\alpha-2} + \lambda^2_2 b^{-2\alpha} |\bx_1|^{-2\alpha-2}  \right) \mathrm{d}\bx_1\mathrm{d}\bx_2\nonumber\\
&=2\pi\beta   \left(\lambda^2_1 \left(b^{2\alpha}-R^{2\alpha}\right) +\lambda^2_2 \left(R^{-2\alpha}- b^{-2\alpha} \right)\right) \int_{\R^2}      |u|^4.
\end{align}
Hence, there is left to estimate the two error terms $\mathcal{E}$ and $\cE'$. To bound $\mathcal{E}$, we use that for $1\leq i\neq j\leq N$ we have
\begin{align}
\frac{F^2}{f^2_{ij}}=\frac{\prod_{1\leq k<m\leq N}f^2_{km}}{f^2_{ij}}\geq 1-\sum_{\substack{1\leq k<m\leq N\\(k,m) \neq (i,j)}}(1-f^2_{km}).
\end{align}
In conclusion,
\begin{align*}
|\mathcal{E}|&\leq N^{-1}\alpha^2 \int_{\R^{2N}}\sum_{i=1}^N\sum_{j\neq i} \left|\frac{F^2(\sx)}{f_{ij}^2}-1\right||\Phi(\sx)|^2 \\& \quad\times\1_{A(R,b)}(\bx_i -\bx_j)  \left(\lambda^2_1 |\bx_i -\bx_j|^{2\alpha-2} + \lambda^2_2 |\bx_i -\bx_j|^{-2\alpha-2}  \right)\mathrm{d}\sx\\
&\leq N^{-1}\alpha^2 \int_{\R^{2N}}\sum_{i=1}^N\sum_{j\neq i} \left(\sum_{\substack{1\leq k<m\leq N\\(k,m) \neq (i,j)}}(1-f^2_{km})\right)|\Phi(\sx)|^2 \\ &\quad\times \1_{A(R,b)}(\bx_i -\bx_j)  \left(\lambda^2_1 |\bx_i -\bx_j|^{2\alpha-2} + \lambda^2_2 |\bx_i -\bx_j|^{-2\alpha-2}  \right)\mathrm{d} \sx.
\end{align*}
 We first study $\mathcal{E}_1$ defined as the sum of terms in $\mathcal{E}$, where none of the indices are equal. By using the symmetry of $\Phi$, we obtain
\begin{align*}
|\mathcal{E}_1|&\leq N^{-1}N^4 \alpha^2 \int_{\R^{4}}(1-f^2_{12})|u_1|^2|u_2|^2\mathrm{d}\bx_1\mathrm{d}\bx_2 \\
& \quad\times\int_{\R^4}|u_3|^2|u_4|^2\1_{A(R,b)}(\bx_3 -\bx_4)  \left(\lambda^2_1 |\bx_3 -\bx_4|^{2\alpha-2} + \lambda^2_2 |\bx_3 -\bx_4|^{-2\alpha-2}  \right)\mathrm{d}\bx_3\mathrm{d}\bx_4\\
&\leq 2\pi \frac{N^3}{(N-1)^2}\beta^2 \norm{|u|^2}_2^2\norm{1-f^2}_1\norm{|u|^2}_{\infty}\int_R^b \left(\lambda^2_1 r^{2\alpha  -1} + \lambda^2_2 r^{-2\alpha-1} \right) \, \dd r\\
&\lesssim  N \beta^2 (1+g^2) \left(\lambda^2_1 \left(b^{2\alpha}-R^{2\alpha}\right) +\lambda^2_2 \left(R^{-2\alpha}- b^{-2\alpha} \right)\right)\norm{|u|^2}_2^2\norm{u}^2_{\infty}b^2,
\end{align*}
where we used the bound \eqref{ine:bounds12} which provides $\norm{1-f^2}_1 \lesssim \alpha (1+g^2) b^2$ to compensate the $\alpha^{-1}$ coming from the integral between $R$ and $b$.

We call the remaining terms in $\mathcal{E}$ by $\mathcal{E}_2$, where one of $k,m$ is the same as one of $i,j$. Now, we control $\mathcal{E}_2$ by
\begin{align*}
|\mathcal{E}_2|&\leq N^2 \alpha^2 \int_{\R^{6}}(1-f^2_{12})|u_1|^2|u_2|^2|u_3|^2 \\  &\quad\times\1_{A(R,b)}(\bx_2 -\bx_3)  \left(\lambda^2_1 |\bx_2 -\bx_3|^{2\alpha-2} + \lambda^2_2 |\bx_2 -\bx_3|^{-2\alpha-2}  \right) \mathrm{d}\bx_1\mathrm{d}\bx_2\mathrm{d}\bx_3\\
&\leq 4 \beta^2\norm{|u|^2\left((1-f^2)\ast \left(|u|^2\left(\1_{A(R,b)}(\cdot)  \left(\lambda^2_1 |\cdot|^{2\alpha-2} + \lambda^2_2 |\cdot|^{-2\alpha-2}  \right) 
  \ast|u|^2\right)\right)\right)}_1\\
&\leq 4 \beta^2  \norm{|u|^2}_2\norm{(1-f^2)\ast\left(|u|^2\left(\1_{A(R,b)}(\cdot)  \left(\lambda^2_1 |\cdot|^{2\alpha-2} + \lambda^2_2 |\cdot|^{-2\alpha-2}  \right)\ast|u|^2\right)\right)}_2\\
&\leq 4 \beta^2 \norm{|u|^2}_2\norm{1-f^2}_1\norm{|u|^2\left(\1_{A(R,b)}(\cdot)  \left(\lambda^2_1 |\cdot|^{2\alpha-2} + \lambda^2_2 |\cdot|^{-2\alpha-2}  \right)\ast|u|^2\right)}_2\\
&\leq 4 \beta^2 \norm{|u|^2}_2\norm{|u|^2}_4\norm{1-f^2}_1\norm{\1_{A(R,b)}(\cdot)  \left(\lambda^2_1 |\cdot|^{2\alpha-2} + \lambda^2_2 |\cdot|^{-2\alpha-2}  \right)\ast|u|^2}_4\\
&\leq 4\beta^2 \norm{|u|^2}_2\norm{|u|^2}^2_4\norm{1-f^2}_1\norm{\1_{A(R,b)}(\cdot)  \left(\lambda^2_1 |\cdot|^{2\alpha-2} + \lambda^2_2 |\cdot|^{-2\alpha-2}  \right)}_1\\
&\lesssim b^2 \beta^2 (1+g^2) \left(\lambda^2_1 \left(b^{2\alpha}-R^{2\alpha}\right) +\lambda^2_2 \left(R^{-2\alpha}- b^{-2\alpha} \right)\right)  \norm{u}^2_4\norm{u}^4_8,
\end{align*}
where we used the Cauchy-Schwarz inequality and Young's convolution inequality, together with the bound \eqref{ine:bounds12}.
This concludes the study of $S^{\mathrm{diag}}_2$ providing
\begin{align}
S^{\mathrm{diag}}_2&=2\pi\beta  \left(\lambda^2_1 \left(b^{2\alpha}-R^{2\alpha}\right) +\lambda^2_2 \left(R^{-2\alpha}- b^{-2\alpha} \right)\right) \int_{\R^2}|u|^4\label{eq:S2d}\\
    &\quad+O\bigg(b \beta \left(\lambda^2_1 \left(b^{2\alpha}-R^{2\alpha}\right) +\lambda^2_2 \left(R^{-2\alpha}- b^{-2\alpha} \right)\right) ( \norm{\nabla u}_{\infty} \norm{u}_{\infty}+ b \|\nabla u\|^2_{\infty})\nonumber\\&\quad\quad+b^2\beta^2 (1+g^2) \left(\lambda^2_1 \left(b^{2\alpha}-R^{2\alpha}\right) +\lambda^2_2 \left(R^{-2\alpha}- b^{-2\alpha} \right)\right)(N\norm{u}_4^4\norm{u}_{\infty}^2+\norm{u}_4^2\norm{u}_{8}^4)\bigg).\nonumber
\end{align}

\textbf{Study of} $S^{\mathrm{diag}}_1$ \textbf{and} $S^{\mathrm{diag}}_3$: We bound the two error terms left. First,
\begin{align*}
|S^{\mathrm{diag}}_1|&\leq N^{-1}\left |\alpha^2\int_{\R^{2N}}\sum_{i=1}^N\sum_{j\neq i}\frac{F^2(\sx)}{f_{ij}^2}  |\Phi(\sx)|^2S_1(\bx_i -\bx_j)\right| \mathrm{d}\sx\\
& \leq \alpha^2 (N-1)\int_{\R^4}|u(\bx_1)|^2|u(\bx_2)|^2S_1(\bx_1-\bx_2)\mathrm{d}\bx_1\mathrm{d}\bx_2\nonumber\\
&\leq  \frac{\beta^2 f^2(R)}{(N-1) R^4} \int_{\R^4}|u(\bx_1)|^2|u(\bx_2)|^2 |\bx_1-\bx_2|^2  \1_{B(0,R)}(\bx_1-\bx_2)\mathrm{d}\bx_1\mathrm{d}\bx_2\nonumber\\ 
&\leq  \frac{\pi \beta^2}{2(N-1)} \norm{u}^2_{\infty},
\end{align*}
where we used $f\leq 1$, by Proposition \ref{prop:obviousboundonJastrow}, to bound the remaining $f$ in the Jastrow $F/f_{ij}$ and the symmetry of $\Phi$.
We now turn to $S^{\mathrm{diag}}_3$ and get
\begin{align*}
|S^{\mathrm{diag}}_3|&\leq N^{-1}\left |\alpha^2\int_{\R^{2N}}\sum_{i=1}^N\sum_{j\neq i}\frac{F^2(\sx)}{f_{ij}^2}  |\Phi(\sx)|^2S_3(\bx_i -\bx_j)\right| \mathrm{d}\sx\\
& \leq \frac{\beta^2}{(N-1)}\int_{\R^4}|u(\bx_1)|^2|u(\bx_2)|^2S_3(\bx_1-\bx_2)\mathrm{d}\bx_1\mathrm{d}\bx_2\nonumber\\
&\leq \frac{\beta^2}{(N-1)} \int_{\R^4}|u(\bx_1)|^2|u(\bx_2)|^2\frac{\1_{B^c(0,b)}(\bx_1-\bx_2)}{|\bx_1-\bx_2|^{2 }}\mathrm{d}\bx_1\mathrm{d}\bx_2\nonumber\\
&=\frac{\beta^2}{(N-1)} \norm{u}_{\infty}^4 \int_{B(0,R_1) \times B(0,R_1)}\frac{\1_{B^c(0,b)}(\bx_1-\bx_2)}{|\bx_1-\bx_2|^{2}}\mathrm{d}\bx_1\mathrm{d}\bx_2\\
& \leq \frac{\beta^2}{(N-1)} \|u\|^4_{\infty} |B(0,R_1)|  \ln \left ( \frac{2R_1}{b} \right ),
\end{align*}
and using that $u\in C^{\infty}_c(\R^2)$ with $\supp(u) \subset B(0,R_1)$ for some large $R_1>0$. 
We then obtain
\begin{align}
   |S^{\mathrm{diag}}_1|+ |S^{\mathrm{diag}}_3|&\lesssim \beta^2N^{-1}\left(\norm{u}_{\infty}^2 +\norm{u}_{\infty}^4R_1^2\log \left(\frac{R_1}{b} \right) \right).\label{eq:S1S4}
\end{align}
Finally, we apply \eqref{eq:S2d} and \eqref{eq:S1S4} on \eqref{def:splitS} to conclude the proof.
\end{proof}

\subsubsection{The three-body term}
We turn to the non diagonal three-body term defined in \eqref{def:S3b} as
\begin{equation*}
\begin{aligned}
S^{3\mathrm{body}}:=&\frac{\alpha^2}{N}\int_{\R^{2N}}\sum_{\{i,j,k\}}F^2(\sx) |\Phi(\sx)|^2\\
&\times \left[-\im(\bx_i-\bx_j)_R^{-\perp} + K_{R,b}(\bx_i -\bx_j)\right ]\cdot\left[\im(\bx_i-\bx_k)_R^{-\perp} + K_{R,b}(\bx_i -\bx_k)\right ]\mathrm{d}\sx\nonumber
\end{aligned}
\end{equation*}
where $\{i,j,k\}$ means the sum over $i$, $j$, and $k$ from $1$ to $N$ with $i\neq j\neq k$. We will show that only the first term of the above expression will contribute to the energy at main order with the two following Lemmas \ref{lem:three body} and \ref{lem:s3b}. The first Lemma \ref{lem:three body} below treats the effect of the new kinetic terms $K_{R,b}$ coming from the Jastrow, and Lemma \ref{lem:s3b} shows how to replace the remaining $F^2$ by one in the main term of Lemma \ref{lem:three body} by the use of Hardy's inequality.

\begin{lemma}\label{lem:three body}
For the three-body term $S^{3\mathrm{body}}$, we have
\begin{align*}
S^{3\mathrm{body}}&=N^{-1}\alpha^2\int_{\R^{2N}}\sum_{\{i,j,k\}}F^2(\sx) |\Phi(\sx)|^2 (\bx_i-\bx_j)_R^{-\perp} \cdot  (\bx_i-\bx_k)_R^{-\perp}\mathrm{d}\sx
\\ &\quad+ O\left(\beta^2   b^{2} (\lambda_1^2\, b^{2\alpha} + \lambda_2^2 \,b^{-2\alpha}) \norm{u}_{\infty}^4+ b\beta^2 (\lambda_1 b^{\alpha} +\lambda_2 b^{-\alpha}) \|u\|^2_{\infty} \left\|u\right\|^4_{\frac{8}{3}} \right).
\end{align*}
\end{lemma}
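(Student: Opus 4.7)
The plan is to expand the dot product in the integrand of $S^{3\mathrm{body}}$, cancel the imaginary cross-terms by the $\bx_j\leftrightarrow\bx_k$ symmetry of the sum, retain the main real term that appears in the statement, and bound the remaining $K_{R,b}\cdot K_{R,b}$ contribution as an error. Explicitly, the expansion produces three pieces: (i) the main term $(\bx_i-\bx_j)_R^{-\perp}\cdot(\bx_i-\bx_k)_R^{-\perp}$; (ii) a real error $K_{R,b}(\bx_i-\bx_j)\cdot K_{R,b}(\bx_i-\bx_k)$; and (iii) an imaginary part $\im\bigl[K_{R,b}(\bx_i-\bx_j)\cdot(\bx_i-\bx_k)_R^{-\perp}-(\bx_i-\bx_j)_R^{-\perp}\cdot K_{R,b}(\bx_i-\bx_k)\bigr]$. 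Since (iii) is antisymmetric under $\bx_j\leftrightarrow\bx_k$ whereas $F^2|\Phi|^2$ and the index set $\{i,j,k\}$ are symmetric in these two labels, (iii) vanishes identically after summation, leaving only (i) and (ii).

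For the remaining error (ii), I need a pointwise bound on $K_{R,b}$. From the identity $\alpha K_{R,b}(\by)=\nabla f(\by)/f(\by)\cdot\1_{A(R,b)}$ and the explicit radial profile $f(\by)=\lambda_1|\by|^\alpha+\lambda_2|\by|^{-\alpha}$ on the annulus, a direct computation yields
\begin{align*}
f(\by)\,|K_{R,b}(\by)|=\frac{\bigl|\lambda_1|\by|^\alpha-\lambda_2|\by|^{-\alpha}\bigr|}{|\by|}\le\tilde K(\by):=\bigl(\lambda_1|\by|^{\alpha-1}+\lambda_2|\by|^{-\alpha-1}\bigr)\1_{A(R,b)}(\by).
\end{align*}
Proposition~\ref{prop:obviousboundonJastrow} gives $F^2\le f_{ij}^2 f_{ik}^2$ (all other pair factors being at most $1$), which yields the pointwise inequality $F^2\,|K_{R,b}(\bx_i-\bx_j)|\,|K_{R,b}(\bx_i-\bx_k)|\le\tilde K(\bx_i-\bx_j)\,\tilde K(\bx_i-\bx_k)$. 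Using the permutation symmetry of $|\Phi|^2$ to reduce the $N$-body integral to one over the representative triple $(\bx_1,\bx_2,\bx_3)$, integrating out the spectator variables via $\|u\|_2=1$, and noting $\alpha^2(N-1)(N-2)\lesssim\beta^2$, the residual is majorized by $\beta^2\int_{\R^2}|u|^2\,(\tilde K\ast|u|^2)^2$.

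This convolution integral is then estimated in two complementary ways, producing the two error terms claimed in the statement. For the first, the pointwise bound $(\tilde K\ast|u|^2)(\bx)\le\|u\|_\infty^2\|\tilde K\|_1$ combined with the elementary annular computation $\|\tilde K\|_1\lesssim b(\lambda_1 b^\alpha+\lambda_2 b^{-\alpha})$ gives at once $\beta^2 b^2(\lambda_1^2 b^{2\alpha}+\lambda_2^2 b^{-2\alpha})\|u\|_\infty^4$, up to a cross term $\lambda_1\lambda_2$ that is absorbed. For the second, apply Young's convolution inequality in the form $\|\tilde K\ast|u|^2\|_2\le\|\tilde K\|_{4/3}\|u\|_{8/3}^2$ (i.e.\ $L^{4/3}\ast L^{4/3}\to L^2$), bound the outer $|u|^2$ by $\|u\|_\infty^2$, and compute $\|\tilde K\|_{4/3}^2\lesssim b(\lambda_1 b^\alpha+\lambda_2 b^{-\alpha})$. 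The main technical obstacle is calibrating this second Young--H\"older pair so that (a) the final bound contains only a single power of $\lambda_1 b^\alpha+\lambda_2 b^{-\alpha}$ (as opposed to two from a crude estimate), and (b) no spurious factor $\alpha^{-1}\sim N$ is generated by the annular integration $\int_R^b r^{-1\pm 2\alpha}\,dr$, which would otherwise destroy the $N\to\infty$ limit; this is precisely why the exponent pair $(4/3,8/3)$ is needed rather than the more naive $(1,2)$ pairing.
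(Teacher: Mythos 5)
Your decomposition of the dot product and the observation that the imaginary cross terms, $\im\bigl[K_{R,b}(\bx_i-\bx_j)\cdot(\bx_i-\bx_k)_R^{-\perp}-(\bx_i-\bx_j)_R^{-\perp}\cdot K_{R,b}(\bx_i-\bx_k)\bigr]$, cancel after summing over the $j\leftrightarrow k$-symmetric index set is correct and is in fact a genuine simplification over the paper's route. The paper instead defines a separate piece $S_c^{3\mathrm{body}}$ containing these cross terms (the $\Re[-\im\cdot(\text{real dot product})]$ appearing there actually vanishes identically, but the paper bounds it nontrivially anyway) and obtains the lemma's \emph{second} error term $b\beta^2(\lambda_1 b^\alpha+\lambda_2 b^{-\alpha})\|u\|_\infty^2\|u\|_{8/3}^4$ from it --- that term is naturally linear in $(\lambda_1,\lambda_2)$ precisely because $S_c$ contains one factor of $K_{R,b}$, not two. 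Your treatment of the surviving $K_{R,b}\cdot K_{R,b}$ residual (pointwise majorization by $\tilde K$, symmetry reduction to a three-particle integral, the $L^1$-pointwise bound $\|\tilde K\|_1\|u\|_\infty^2$) is correct and coincides with how the paper handles $S_b^{3\mathrm{body}}$; it yields the first error term, which alone suffices to prove the lemma.

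However, the claimed second estimate on the same residual is wrong. You assert $\|\tilde K\|_{4/3}^2\lesssim b(\lambda_1 b^\alpha+\lambda_2 b^{-\alpha})$, but since $\tilde K$ is linear in $(\lambda_1,\lambda_2)$, the left side is homogeneous of degree $2$ in $\lambda$ while the right side is degree $1$. A direct annular computation gives $\|\tilde K\|_{4/3}^{4/3}\sim \lambda_1^{4/3}b^{(4\alpha+2)/3}+\lambda_2^{4/3}b^{(2-4\alpha)/3}$, hence $\|\tilde K\|_{4/3}^2\sim b(\lambda_1^2 b^{2\alpha}+\lambda_2^2 b^{-2\alpha})$, quadratic in $\lambda$ --- not linear. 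The deeper point is that no bound on the $K\cdot K$ residual can produce a term linear in $\lambda$, because that residual is itself quadratic in $K_{R,b}$; the paper's second error term reflects a cross term with a single $K_{R,b}$, and it becomes vacuous once you have noted that $S_c^{3\mathrm{body}}=0$. You should simply drop the ``second way'' entirely: since $O(\mathrm{term\,1})\subset O(\mathrm{term\,1}+\mathrm{term\,2})$, the first estimate already establishes the lemma (and in fact a slightly sharper version of it).
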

\begin{proof}

We expand $S^{3\mathrm{body}}$ in three different terms 
$$S^{3\mathrm{body}}:=S^{3\mathrm{body}}_R+S^{3\mathrm{body}}_b +S^{3\mathrm{body}}_c$$
with 
\begin{align*}
S^{3\mathrm{body}}_R&:=N^{-1}\alpha^2\int_{\R^{2N}}\sum_{\{i,j,k\}}F^2(\sx) |\Phi(\sx)|^2 (\bx_i-\bx_j)_R^{-\perp} \cdot (\bx_i-\bx_k)_R^{-\perp}\mathrm{d}\sx,\\
    S^{3\mathrm{body}}_b &:=N^{-1}\alpha^2  \int_{\R^{2N}}\sum_{\{i,j,k\}}F^2(\sx) |\Phi(\sx)|^2  K_{R,b}(\bx_i -\bx_j)\cdot K_{R,b}(\bx_i -\bx_k) \mathrm{d}\sx,\\
    S^{3\mathrm{body}}_c&:=2N^{-1}\alpha^2  \int_{\R^{2N}}\sum_{\{i,j,k\}}F^2(\sx) |\Phi(\sx)|^2\Re \bigg[-\im(\bx_i-\bx_j)^{-\perp}_R\cdot K_{R,b}(\bx_i -\bx_k)\bigg]\mathrm{d}\sx.
\end{align*}
 Now, we first estimate the two error terms $S^{3\mathrm{body}}_b$ and $S^{3\mathrm{body}}_c$. \newline
\textbf{Control of} $S^{3\mathrm{body}}_b$:
\begin{equation}
\begin{aligned}
 S^{3\mathrm{body}}_b &:=N^{-1}\alpha^2 \int_{\R^{2N}}\sum_{\{i,j,k\}}F^2(\sx) |\Phi(\sx)|^2\\& \quad\times (\bx_i-\bx_j)_{R,b}^{-1}  \frac{ \lambda_1 |\bx_i -\bx_j|^{\alpha} - \lambda_2 |\bx_i -\bx_j|^{-\alpha}}{ \lambda_1 |\bx_i -\bx_j|^{\alpha} + \lambda_2 |\bx_i -\bx_j|^{-\alpha}} \\&\quad\cdot (\bx_i-\bx_k)_{R,b}^{-1}  \frac{ \lambda_1 |\bx_i -\bx_k|^{\alpha} - \lambda_2 |\bx_i -\bx_k|^{-\alpha}}{ \lambda_1 |\bx_i -\bx_k|^{\alpha} + \lambda_2 |\bx_i -\bx_k|^{-\alpha}} \mathrm{d}\sx
\end{aligned}
\end{equation}
We factor from $F^2$ the terms $f_{ij}$ and $f_{ik}$ to obtain that
\begin{align*}
\left |S^{3\mathrm{body}}_b\right| &\leq\frac{\alpha^2  }{N}\int_{\R^{2N}}\sum_{\{i,j,k\}}\frac{F^2(\sx)}{f_{ij}f_{ik}} |\Phi(\sx)|^2\\&\quad\times\1_{A(R,b)}(\bx_i -\bx_j)  \left(\lambda_1 |\bx_i -\bx_j|^{\alpha-1} + \lambda_2 |\bx_i -\bx_j|^{-\alpha-1}\right)  \\&\quad\times\1_{A(R,b)}(\bx_i -\bx_k)  \left(\lambda_1 |\bx_i -\bx_k|^{\alpha-1} + \lambda_2 |\bx_i -\bx_k|^{-\alpha-1}\right)   \mathrm{d} \sx\nonumber\\
&\leq \alpha^2 (N-1)(N-2)\int_{\R^{6}}|u(\bx_1)|^2|u(\bx_2)|^2|u(\bx_3)|^2\\& \quad\times\1_{A(R,b)}(\bx_1 -\bx_2) \left(\lambda_1 |\bx_1 -\bx_2|^{\alpha-1} + \lambda_2 |\bx_1 -\bx_2|^{-\alpha-1}\right) \\&\quad\times \1_{A(R,b)}(\bx_2 -\bx_3) \left(\lambda_1 |\bx_2 -\bx_3|^{\alpha-1} + \lambda_2 |\bx_2 -\bx_3|^{-\alpha-1}\right) \mathrm{d}\bx_1\mathrm{d}\bx_2\mathrm{d}\bx_3\nonumber\\
&\leq\beta^2 \int_{\R^{2}}|u(\bx_1)|^2\mathrm{d}\bx_1\norm{|u(\cdot)|^2\ast\left( \1_{A(R,b)}(\cdot) \left(\lambda_1 |\cdot|^{\alpha-1} + \lambda_2 |\cdot|^{-\alpha-1}\right)\right)}_{\infty}^2\nonumber\\
&\leq \beta^2 \norm{|u|^2}^2_{\infty} \norm{ \1_{A(R,b)}(\cdot) \left(\lambda_1 |\cdot|^{\alpha-1} + \lambda_2 |\cdot|^{-\alpha-1}\right)}_{1}^2\nonumber\\
&\lesssim  \beta^2   b^{2}  (\lambda_1^2\, b^{2\alpha} + \lambda_2^2\,b^{-2\alpha})\norm{u}_{\infty}^4,
\end{align*}
 where we used Young's convolution inequality on the one to the last inequality.
\newline
\textbf{Control of} $S^{3\mathrm{body}}_c$:
We remind that
\begin{align*}
S^{3\mathrm{body}}_c:&=2N^{-1}\alpha^2  \int_{\R^{2N}}\sum_{\{i,j,k\}}F^2(\sx) |\Phi(\sx)|^2\Re \bigg[-\im(\bx_i-\bx_j)^{-\perp}_R\\&\quad\cdot(\bx_i-\bx_k)_{R,b}^{-1}  \frac{ \lambda_1 |\bx_i -\bx_k|^{\alpha} - \lambda_2 |\bx_i -\bx_k|^{-\alpha}}{ \lambda_1 |\bx_i -\bx_k|^{\alpha} + \lambda_2 |\bx_i -\bx_k|^{-\alpha}}\bigg]\mathrm{d}\sx
\end{align*}
Now, we factor from the Jastrow the term $f_{ik}$ and use Proposition \ref{prop:obviousboundonJastrow} to derive that
\begin{align*}
\left|S^{3\mathrm{body}}_c\right|&\leq2N^{-1}\alpha^2 \int_{\R^{2N}}\sum_{\{i,j,k\}} |\Phi(\sx)|^2 \frac{|\bx_i -\bx_j|}{|\bx_i -\bx_j|^2_R}\\&\quad\times \1_{A(R,b)}(\bx_i-\bx_k) \left(\lambda_1 |\bx_i-\bx_k|^{\alpha-1} + \lambda_2 |\bx_i-\bx_k|^{-\alpha-1}\right) \mathrm{d}\sx\nonumber.
\end{align*}
Hence, by the symmetry of $\Phi$, we obtain
\begin{align*}
\left|S^{3\mathrm{body}}_c\right|
&\leq \alpha^2 (N-1) (N-2)\int_{\R^{6}}|u(\bx_1)|^2|u(\bx_2)|^2|u(\bx_3)|^2\\
&\quad \times\frac{1}{|\bx_1-\bx_2|}   \1_{A(R,b)}(\bx_1 -\bx_3) \left(\lambda_1 |\bx_1-\bx_3|^{\alpha-1} + \lambda_2 |\bx_1-\bx_3|^{-(\alpha+1)}\right) \mathrm{d}\bx_1\mathrm{d}\bx_2\mathrm{d}\bx_3
\\ 
& \leq \beta^2 \int_{\R^2} |u|^2  \left( |u|^2(\cdot)  \ast \frac{1}{|\cdot|} \right) \left( |u|^2(\cdot)  \ast   \1_{A(R,b)(\cdot)}\left(\lambda_1 |\cdot|^{\alpha-1} + \lambda_2 |\cdot|^{-(\alpha+1)}\right)    \right)
\\ 
&\leq \beta^2 \|u\|^2_{\infty}  \left\||u|^2(\cdot)  \ast \frac{1}{|\cdot|}\right\|_{4}   \left\||u|^2(\cdot)    \ast   \1_{A(R,b)(\cdot)}\left(\lambda_1 |\cdot|^{\alpha-1} + \lambda_2 |\cdot|^{-(\alpha+1)}\right) \right\|_{\frac{4}{3}} 
\\ 
&\lesssim  \beta^2 \|u\|^2_{\infty}  \left\||u|^2\right\|_{\frac{4}{3}}^2    \left\||\cdot|^{-1}\right\|_{2,w} \left\| \1_{A(R,b)(\cdot)}\left(\lambda_1 |\cdot|^{\alpha-1} + \lambda_2 |\cdot|^{-(\alpha+1)}\right)\right\|_{1}
\\ 
& \lesssim  \beta^2 (\lambda_1 b^{1+\alpha } +\lambda_2 b^{1-\alpha}) \|u\|^2_{\infty}  \left\|u\right\|^4_{\frac{8}{3}}.
\end{align*}  
Here, we used H\"older's inequality on the third inequality, and weak Young's convolution inequality on the fourth inequality. This last estimate concludes the proof.
\end{proof}

We now have to replace $F^2$ by $1$ in the main term of Lemma \ref{lem:three body}. This is the content of the following lemma.
\begin{lemma}\label{lem:s3b}
For the term $S_R^{3\mathrm{body}}$, we have the estimate
\begin{align*}
&S_R^{3\mathrm{body}}=\beta^2\int_{\R^2}|u|^2|\bA^R[|u|^2]|^2
\\&+O\bigg(\frac{\beta^2}{N}\norm{u}^2_4\norm{u}_{\frac{8}{3}}^4 + \beta^3 N b^2 (1+g^2)\norm{u}^2_{\infty} \norm{u}^4_4\norm{u}_{\frac{8}{3}}^4+ \frac{ \beta^3 b^2  (1+g^2)}{N} ( \norm{u}^2_{\infty} +\norm{\nabla u}^2_{\infty}+1)\\& + \beta^3 N^{-1} \bigl(\lambda^2_1 ( b^{2\alpha} - R^{2\alpha} ) + \lambda^2_2  ( R^{-2\alpha} - b^{-2\alpha} ) \bigr) \norm{u}^2_{\infty}\bigg).
\end{align*}

\end{lemma}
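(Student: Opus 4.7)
The plan is to decompose $S_R^{3\mathrm{body}} = T_{\mathrm{main}} + T_F$, where $T_{\mathrm{main}}$ results from replacing $F^2$ by $1$ in the integrand and $T_F$ is the Jastrow remainder, and to compute each piece separately. For $T_{\mathrm{main}}$ I would exploit the product structure $|\Phi|^2 = \prod_j |u_j|^2$ together with the algebraic identity
\begin{equation*}
\sum_{\{i,j,k\}}(\bx_i-\bx_j)_R^{-\perp}\cdot(\bx_i-\bx_k)_R^{-\perp}
= \sum_i \Bigl|\sum_{j\neq i}(\bx_i-\bx_j)_R^{-\perp}\Bigr|^2
- \sum_{i,\,j\neq i}\frac{|\bx_i-\bx_j|^2}{|\bx_i-\bx_j|_R^4}.
\end{equation*}
Integrating the first piece against $|\Phi|^2$ decouples the $\bx_j$- and $\bx_k$-marginals (since $j\neq k$ after expansion), producing $\alpha^2(N-1)(N-2)\int|u|^2|\bA^R[|u|^2]|^2$. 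With $\alpha=\beta/(N-1)$ this equals $\beta^2\int|u|^2|\bA^R[|u|^2]|^2$ plus a relative $O(1/N)$ correction, controlled by Hölder together with the weak-$L^2$ Young bound $\|\bA^R[|u|^2]\|_4\lesssim \|u\|_{8/3}^2$; this yields the error $\beta^2 N^{-1}\|u\|_4^2\|u\|_{8/3}^4$. The diagonal overcorrection reduces to a two-body integral of the same form as those in Lemma~\ref{lem:Sdiag}; splitting $|\bx|^2/|\bx|_R^4$ into its natural pieces on $B(0,R)$ and $A(R,b)\cup B^c(0,b)$, and using the compact support of $u$, bounds it by $\beta^3 N^{-1}(\lambda_1^2(b^{2\alpha}-R^{2\alpha})+\lambda_2^2(R^{-2\alpha}-b^{-2\alpha}))\|u\|_\infty^2$, matching the last listed error term.

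For the Jastrow remainder $T_F$ I would apply Lemma~\ref{lem:propjastrow} to estimate $|F^2-1|\le \sum_{n<m}(1-f^2_{nm})\mathbf{1}_{B(0,b)}(\bx_n-\bx_m)$, and then classify by the intersection pattern of $\{n,m\}$ with $\{i,j,k\}$. In the generic case of five distinct indices the integrand factorises across the pair $(n,m)$ and the triple $(i,j,k)$, so by \eqref{ine:bounds12} and the combinatorial count $\lesssim N^5$, one obtains
\begin{equation*}
|T_F^{(5)}| \;\lesssim\; N^{-1}\alpha^2\cdot N^5\cdot \|1-f^2\|_1\|u\|_\infty^2\int|u|^2|\bA^R[|u|^2]|^2 \;\lesssim\; N\beta^3 b^2(1+g^2)\|u\|_\infty^2\|u\|_4^4\|u\|_{8/3}^4.
\end{equation*}
Subcases with at least one shared index between $\{n,m\}$ and $\{i,j,k\}$ have combinatorial count $\lesssim N^4$, giving an extra factor $N^{-1}$, and after elementary Hölder/Young estimates collapse into the $\beta^3 b^2(1+g^2)/N$ remainders with one-body norms $\|u\|_\infty^2$, $\|\nabla u\|_\infty^2$, or $1$.

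The main obstacle is the subcase in which $\{n,m\}$ coincides with $\{i,j\}$ (or $\{i,k\}$): the product $(1-f^2_{ij})|(\bx_i-\bx_j)_R^{-\perp}|^2$ is borderline singular, in that a naive $L^1$ estimate on $1-f^2$ against the Aharonov--Bohm kernel $|\cdot|_R^{-2}$ is only marginally convergent as $R\to 0$. I would handle this either (a) by exploiting the explicit scale-covariant form of $f$ on $A(R,b)$ from \eqref{eq:trial-f}, whose precise structure makes $1-f^2$ vanish at just the right rate to absorb the singular kernel into an integrable remainder, or (b) by an integration by parts transferring one factor of $(\bx_i-\bx_j)_R^{-\perp}$ onto $|u|^2$, producing the $\|\nabla u\|_\infty^2$ contribution of the stated error, analogously to the treatment of the Jastrow error in Proposition~\ref{lem:K}. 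Assembling all subcases then delivers the claimed identity with the required error bounds.
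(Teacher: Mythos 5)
Your high-level split of $S_R^{3\mathrm{body}}$ into ``main term'' ($F^2$ replaced by $1$) plus ``Jastrow remainder'' is the same first step as the paper's, but from there your argument contains a genuine confusion and a genuine gap.

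First, the algebraic identity you introduce is unnecessary and your reading of it is wrong. Expanding $\sum_i\bigl|\sum_{j\neq i}(\bx_i-\bx_j)_R^{-\perp}\bigr|^2$ \emph{does} produce $j=k$ terms, so the ``first piece'' does not cleanly decouple the $\bx_j$- and $\bx_k$-marginals; those diagonal contributions are there and cancel exactly against the second piece, which is how the identity is an identity. Since the sum in $S_R^{3\mathrm{body}}$ is already over \emph{distinct} $i,j,k$, there is no ``diagonal overcorrection'' to bound once $F^2$ is replaced by $1$: you simply get $\alpha^2(N-1)(N-2)\int|u|^2|\bA^R[|u|^2]|^2$ directly. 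Consequently, your attribution of the last error term $\beta^3 N^{-1}\bigl(\lambda_1^2(b^{2\alpha}-R^{2\alpha})+\lambda_2^2(R^{-2\alpha}-b^{-2\alpha})\bigr)\|u\|_\infty^2$ to a diagonal of the main term is incorrect. In the paper that term comes from the Jastrow remainder: specifically from $\beta^2\int|\Phi|^2|\nabla F_3|^2$, where $F_3:=f_{ij}f_{jk}f_{ik}$, after an integration by parts.

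Second, and this is the real gap, you recognise that the hard subcase is when the Jastrow pair $\{n,m\}$ lies inside $\{i,j,k\}$, but your options (a) and (b) are left as sketches, and a naive case-by-case bound with absolute values on each term $(\bx_i-\bx_j)_R^{-\perp}\cdot(\bx_i-\bx_k)_R^{-\perp}$ destroys the crucial positivity of the \emph{symmetrized} three-body Aharonov--Bohm expression. The paper instead factors $F^2=(F^2/F_3^2)\,F_3^2$ and handles the problematic piece $(1-F_3^2)\sum_{\{\bx_1,\bx_2,\bx_3\}}(\bx_1-\bx_2)_R^{-\perp}\cdot(\bx_1-\bx_3)_R^{-\perp}$ \emph{globally}, via the two-sided operator bound $0\leq \sum_{\{\bx_1,\bx_2,\bx_3\}}(\bx_1-\bx_2)_R^{-\perp}\cdot(\bx_1-\bx_3)_R^{-\perp}\lesssim -\Delta_{\bx_1}$ of Lemma~\ref{lem:3body}, applied to the test function $u(\bx_1)u(\bx_2)u(\bx_3)\sqrt{(1-F_3^2)^2+\varepsilon}$, then sending $\varepsilon\to 0$. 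This bypasses any enumeration of which pair carries the Jastrow factor and simultaneously generates the $\|\nabla u\|_\infty^2$ term and the $\int|\Phi|^2|\nabla F_3|^2$ term (giving the last listed error). Your approach would at best reproduce the statement after significant extra work and might not recover the same sharp error constants; as written, the argument does not close.
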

\begin{proof}
We start from the main term of of Lemma \ref{lem:three body}. The aim is to replace the Jastrow factor by $1$.
We use the splitting
 \begin{align*}
&N^{-1}\alpha^2\int_{\R^{2N}}\sum_{\{i,j,k\}}F^2(\sx)\, |\Phi(\sx)|^2 (\bx_i-\bx_j)_R^{-\perp}\cdot (\bx_i-\bx_k)_R^{-\perp}\mathrm{d}\sx\nonumber\\
&= \alpha^2(N-1)(N-2)
\int_{\R^{2(N-3)}} \frac{F^2(\sx)\, |\Phi(\sx)|^2}{|u(\bx_1)|^2|u(\bx_2)|^2|u(\bx_3)|^2 f^2(\bx_1-\bx_2)f^2(\bx_2-\bx_3)f^2(\bx_1-\bx_3)}
\\&\quad\times \int_{\R^6}|u(\bx_1)|^2|u(\bx_2)|^2|u(\bx_3)|^2 f^2(\bx_1-\bx_2)f^2(\bx_2-\bx_3)f^2(\bx_1-\bx_3)
\\&\quad\times \sum_{\{\bx_1,\bx_2,\bx_3\}}(\bx_1-\bx_2)_R^{-\perp} \cdot(\bx_1-\bx_3)_R^{-\perp}\mathrm{d}\sx
\\& =  \alpha^2(N-1)(N-2)\int_{\R^6}|u(\bx_1)|^2|u(\bx_2)|^2|u(\bx_3)|^2(\bx_1-\bx_2)_R^{-\perp} \cdot(\bx_1-\bx_3)_R^{-\perp} \mathrm{d}\bx_1\mathrm{d}\bx_2\mathrm{d}\bx_3
\\&\quad+ \alpha^2(N-1)(N-2)\int_{\R^6}|u(\bx_1)|^2|u(\bx_2)|^2|u(\bx_3)|^2(\bx_1-\bx_2)_R^{-\perp} \cdot(\bx_1-\bx_3)_R^{-\perp} 
\\&\quad\times \int_{\R^{2(N-3)}}\left( \frac{F^2(\sx)}{ f^2(\bx_1-\bx_2)f^2(\bx_2-\bx_3)f^2(\bx_1-\bx_3)} -1\right)\frac{|\Phi(\sx)|^2}{|u(\bx_1)|^2|u(\bx_2)|^2|u(\bx_3)|^2}\mathrm{d}\sx
\\&\quad+ \alpha^2(N-1)(N-2)
\int_{\R^{2N}} |\Phi(\sx)|^2 \left( \frac{F^2(\sx)}{ f^2(\bx_1-\bx_2)f^2(\bx_2-\bx_3)f^2(\bx_1-\bx_3)}\right)
\\& \quad\times(f^2(\bx_1-\bx_2)f^2(\bx_2-\bx_3)f^2(\bx_1-\bx_3)-1 ) \sum_{\{\bx_1,\bx_2,\bx_3\}}(\bx_1-\bx_2)_R^{-\perp} \cdot(\bx_1-\bx_3)_R^{-\perp} \,\mathrm{d}\sx
\\ &=: \beta^2 \left(1-\frac{1}{N-1}\right)\int_{\R^2}|u(\bx)|^2|\bA^R[|u|^2]|^2(\bx) \,\mathrm{d}\bx +\mathcal{E}'+ \mathcal{E}.
\end{align*}
 First, we note that 
\begin{align}
\label{eq:ARYoungbound}
    \int_{\R^2}|u(\bx)|^2|\bA^R[|u|^2]|^2(\bx) \,\mathrm{d}\bx \leq \norm{|u|^2}_2\norm{|u|^2}_{4/3}^2\norm{|\cdot|^{-1}}_{2,w}^2,
\end{align}
by the Cauchy-Schwarz inequality and Young's weak convolution inequality. We now have to bound the two error terms $\mathcal{E}', \mathcal{E}$. For the term $\mathcal{E}'$, by symmetry, we have 
\begin{align*}
   |\mathcal{E}'|  &\leq\beta^2 N^2\int_{\R^2}|u(\bx_1)|^2|\bA^R[|u|^2]|^2(\bx_1)\mathrm{d}\bx_1  \int_{\R^{2(N-3)}}  (1-f(\bx_4-\bx_5)^2) |u(\bx_4)|^2|u(\bx_5)|^2\mathrm{d}\bx_4\mathrm{d}\bx_5
 \\&
 \leq  \beta^2 N^2 \norm{1-f^2}_1 \norm{u}^2_{\infty}\norm{|u|^2}^2_2\norm{|u|^2}_{4/3}^2\norm{|\cdot|^{-1}}_{2,w}^2\\
 &\lesssim \norm{|\cdot|^{-1}}_{2,w}^2 \beta^3  \frac{N^2}{N-1} b^2 (1+g^2) \norm{u}^2_{\infty} \norm{|u|^2}^2_2\norm{|u|^2}_{\frac{4}{3}}^2\\
 &\lesssim \beta^3Nb^2 (1+g^2)\norm{u}^2_{\infty}\norm{u}^4_4\norm{u}_{\frac{8}{3}}^4,
\end{align*}
where we used \eqref{eq:ARYoungbound} on the second inequality and Lemma~\ref{lem:propjastrow} on the third inequality. 
For $\cE'$, define $F_3 := f(\bx_1-\bx_2)f(\bx_2-\bx_3)f(\bx_1-\bx_3)$. 
Then, for every $\varepsilon>0$, we can use the inequality of Lemma \ref{lem:3body}, providing that
$$0\leq \sum_{\{\bx_1,\bx_2,\bx_3\}}(\bx_1-\bx_2)_R^{-\perp} \cdot(\bx_1-\bx_3)_R^{-\perp} \lesssim -\Delta_{\bx_1}$$
and use it to bound the error term.
For any $\eps>0$,
\begin{align*}
|\mathcal{E}|\leq |\mathcal{E}_{\varepsilon}|&:= \alpha^2 (N-1)(N-2)\left | \int_{\R^{2N}}\left|\sqrt{(1-F^2_3)^2+\varepsilon}\right|^2 |\Phi(\sx)|^2 \sum_{\{\bx_1,\bx_2,\bx_3\}}(\bx_1-\bx_2)_R^{-\perp} \cdot(\bx_1-\bx_3)_R^{-\perp} \mathrm{d}\sx\right|\\
& \lesssim \beta^2   \int_{\R^{2(N-3)}} \left|\frac{\Phi(\sx)}{u(\bx_1) u(\bx_2) u(\bx_3)}\right|^2 \int_{\R^6}|\nabla( u(\bx_1) u(\bx_2) u(\bx_3) \sqrt{(1-F^2_3)^2 +\varepsilon}  )|^2\mathrm{d}\sx
\\& \lesssim   \beta^2  \int_{\R^{2N}} \left|\frac{\Phi(\sx)}{u(\bx_1) u(\bx_2) u(\bx_3)}\right|^2   ((1-F_3^2)^2+\varepsilon) |\nabla (u(\bx_1) u(\bx_2) u(\bx_3)) |^2\mathrm{d}\sx
\\&\quad+   \beta^2  \int_{\R^{2N}}  | \Phi(\sx) |^2  F^2 |\nabla F_{3}|^2 \frac{(1-F_3^2)^2}{(1-F_3^2)^2 +\varepsilon} \mathrm{d}\sx
\\ &\lesssim   \beta^2  \int_{\R^{2N}} \left|\frac{\Phi(\sx)}{u(\bx_1) u(\bx_2) u(\bx_3)}\right|^2  ((1-F_3^2)^2
+\varepsilon) |\nabla (u(\bx_1) u(\bx_2) u(\bx_3))|^2\mathrm{d}\sx \\
&\quad+   \beta^2  \int_{\R^{2N}}| \Phi(\sx) |^2   |\nabla F_3|^2\mathrm{d}\sx,
\end{align*}
Letting $\varepsilon \to 0$ and using $0 \leq F_3 \leq 1$, 
we have 
\begin{align*}
|\mathcal{E}|&\lesssim     \beta^2  \int_{\R^{2N}} \left|\frac{\Phi(\sx)}{u(\bx_1) u(\bx_2) u(\bx_3)}\right|^2  (1-F_3^2)^2 |\nabla (u(\bx_1) u(\bx_2) u(\bx_3))|^2 \mathrm{d}\sx +  \beta^2  \int_{\R^{2N}}| \Phi(\sx) |^2   |\nabla F_3|^2\mathrm{d}\sx\\&= \mathcal{E}_1 +\mathcal{E}_2.
\end{align*}
Now, we prove that both the error terms $\mathcal{E}_1, \mathcal{E}_2$ are small as $N \to \infty.$ For $\mathcal{E}_1$, we have 
\begin{align*}
    \mathcal{E}_1 &\lesssim  \beta^2 \int_{\R^{2N}} \left|\frac{\Phi(\sx)}{u(\bx_1) u(\bx_2) u(\bx_3)}\right|^2 |\nabla (u(\bx_1) u(\bx_2) u(\bx_3))|^2   (1-f(\bx_1-\bx_2)^2) \1_{B(0,b)}(\bx_1-\bx_2)\mathrm{d}\sx
 \\  & \lesssim  \frac{ \beta^3 b^2  (1+g^2)}{N-1} ( \norm{u}^2_{\infty} +\norm{\nabla u}^2_{\infty}+1).
\end{align*}
For $\mathcal{E}_2$, we have 
\begin{align*}
   \mathcal{E}_2& \lesssim \beta^2  \int_{\R^{2N}}| \Phi(\sx) |^2 |\nabla f_{12}|^2   
 \mathrm{d}\sx
   \\ &\lesssim  \beta^2 \alpha^2 \norm{u}^2_{\infty} \int_{R}^b   (\lambda_1 r^{\alpha-1} + \lambda_2  r^{-\alpha-1})^2   \, r \dd r
   \\& \lesssim  \beta^2 \alpha  \left(\lambda^2_1 \left( b^{2\alpha} - R^{2\alpha} \right) + \lambda^2_2  \left( R^{-2\alpha} - b^{-2\alpha} \right) \right) \norm{u}^2_{\infty}
   \\ &\lesssim  \beta^3 N^{-1} \left(\lambda^2_1 \left( b^{2\alpha} - R^{2\alpha} \right) + \lambda^2_2  \left( R^{-2\alpha} - b^{-2\alpha} \right) \right) \norm{u}^2_{\infty}.
\end{align*}
This completes the proof.
\end{proof}

\subsection{The two-body current term}

The last term to treat is the current term
\begin{align*}
\mathcal{J}:&=N^{-1}\alpha\int_{\R^{2N}}\sum_{i=1}^N\sum_{j\neq i}F^2(\sx)
\bigg(\overline{\Phi}(\sx)\nabla_i \Phi(\sx)\left[-\im(\bx_i-\bx_j)_R^{-\perp} +K_{R,b}(\bx_i -\bx_j)\right ]+\mathrm{h.c}\bigg)\mathrm{d}\sx\nonumber\\
 &= \frac{2\alpha}{N}\Re\sum_{\substack{i,j=1\\j\neq i}}^N\int_{\R^{2N}} F^2(\sx)\overline{\Phi}(\sx)\nabla_i \Phi(\sx) \cdot
 \left[-\im(\bx_i-\bx_j)_R^{-\perp} +(\bx_i-\bx_j)_{R,b}^{-1}  \frac{ \lambda_1 |\bx_i -\bx_j|^{\alpha} - \lambda_2 |\bx_i -\bx_j|^{-\alpha}}{ \lambda_1 |\bx_i -\bx_j|^{\alpha} + \lambda_2 |\bx_i -\bx_j|^{-\alpha}}\right ]\mathrm{d}\bx\\
 &=:\mathcal{J}_R +\mathcal{J}_b,
\end{align*}
in which the new kinetic term $K_{R,b}$ will be shown to have no effect. Indeed the term $\mathcal{J}_b$  will be small and the rest of the Jastrow factor $F$ will be replaced by one at the cost of a small error.
\begin{lemma}\label{lem:J}
For current term $\mathcal{J}$, we have
\begin{align*} 
\mathcal{J}&=2\beta\Re\int_{\R^{4}}|u(\bx_1)|^2 \overline{u(\bx_2)}\nabla u(\bx_2) \cdot \left[-\im(\bx_1-\bx_2)_R^{-\perp} \right ]\mathrm{d}\bx_1\mathrm{d}\bx_2 \\&\quad+O\bigg(b(b^{\alpha}\lambda_1 + b^{-\alpha}\lambda_2) \beta  \norm{u}^2_{\infty} \norm{\nabla u}_2 +\beta^2 N b^2 (1+g^2) \norm{u}^5_4 \norm{u}^2_{8/3} \norm{\nabla u}_2
\\&\quad+ 2\pi \beta b   \norm{\nabla u}_2\norm{u}_4\norm{u}^2_8 +\sqrt{N} b \sqrt{1+g^2}   \norm{u}^3_4 \norm{\nabla u}_{2}\norm{u}^2_{\frac{8}{3}}
 + \beta^2 b^2 (1+g^2) \norm{u}_3 \norm{\nabla u}_2 \norm{u}^2_4\norm{u}^2_8 \bigg)
\end{align*}
\end{lemma}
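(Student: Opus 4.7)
The plan is to split $\mathcal{J} = \mathcal{J}_R + \mathcal{J}_b$ according to the two terms in the bracket, showing that $\mathcal{J}_R$ produces the stated main term while $\mathcal{J}_b$ is entirely error. This mirrors the strategy used for $S^{\mathrm{diag}}$ and $S^{3\mathrm{body}}$, with the new feature that one of the two integrated variables carries a $\nabla u$, so every error bound will contain exactly one factor $\|\nabla u\|_2$.

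For $\mathcal{J}_R$ I write $F^2 = 1 + (F^2-1)$. The leading $1$-part reduces by direct computation: since $\overline{\Phi}\nabla_i\Phi = \overline{u(\bx_i)}\nabla u(\bx_i)\prod_{k\neq i}|u(\bx_k)|^2$, integrating out all variables $\bx_k$ with $k \neq i,j$ via $\int_{\R^2}|u|^2=1$ and using that the summand depends only on the pair $(\bx_i,\bx_j)$, the sum $\sum_{i\neq j}$ contributes a factor $N(N-1)$. Combined with the prefactor $2\alpha/N$ and $\alpha(N-1)=\beta$, this produces the main term $2\beta\,\Re\int|u(\bx_1)|^2\overline{u(\bx_2)}\nabla u(\bx_2)\cdot[-\im(\bx_1-\bx_2)^{-\perp}_R]$ after a permutation-symmetry relabeling of the integration variables. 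The $(F^2-1)$-remainder is controlled using Lemma \ref{lem:propjastrow}, splitting the sum over Jastrow pairs $\{k,m\}$ into (i) pairs fully disjoint from $\{i,j\}$, which carries an $N^2$ combinatorial factor, and (ii) pairs sharing one index with $\{i,j\}$, contributing $O(N)$. In each case I apply Cauchy--Schwarz and Young's convolution inequality together with $\||\cdot|^{-1}\|_{2,w}<\infty$ (which dominates $(\bx)^{-\perp}_R$) and $\|1-f^2\|_1\lesssim (1+g^2)b^2\alpha$ from \eqref{ine:bounds12}; the $\beta^2 Nb^2(1+g^2)\|u\|_4^5\|u\|_{8/3}^2\|\nabla u\|_2$ and $\beta^2b^2(1+g^2)\|u\|_3\|\nabla u\|_2\|u\|_4^2\|u\|_8^2$ errors arise from case (i), while case (ii) yields the $\sqrt{N}b\sqrt{1+g^2}\|u\|_4^3\|\nabla u\|_2\|u\|_{8/3}^2$ term.

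For $\mathcal{J}_b$, I factor $f_{ij}$ out of $F^2$ and apply Proposition \ref{prop:obviousboundonJastrow} to bound $F^2/f_{ij}\le 1$. On $A(R,b)$ the product $f(\bx)K_{R,b}(\bx)$ coincides (up to the sign from the gauge) with the explicit radial field
\begin{equation*}
\bigl(\lambda_1|\bx|^\alpha - \lambda_2|\bx|^{-\alpha}\bigr)\frac{\bx}{|\bx|^2}\1_{A(R,b)}(\bx),
\end{equation*}
of modulus at most $\lambda_1|\bx|^{\alpha-1}+\lambda_2|\bx|^{-\alpha-1}$. Using the symmetry of $\Phi$ to reduce to a two-body and a three-body integral (the latter coming from the case when $\nabla u$ lives at a third particle coupled to $\bx_j$ via $K_{R,b}$), I apply H\"older and Young's convolution inequality together with the explicit bound
\begin{equation*}
\bigl\|\1_{A(R,b)}(\lambda_1|\cdot|^{\alpha-1}+\lambda_2|\cdot|^{-\alpha-1})\bigr\|_1 \;\lesssim\; b\bigl(b^\alpha\lambda_1+b^{-\alpha}\lambda_2\bigr).
\end{equation*}
The two-body piece yields the first error $b(b^\alpha\lambda_1+b^{-\alpha}\lambda_2)\beta\|u\|_\infty^2\|\nabla u\|_2$, while the three-body piece, estimated as in the treatment of $S_c^{3\mathrm{body}}$ in Lemma \ref{lem:three body} (weak-Young with $\||\cdot|^{-1}\|_{2,w}$), contributes the remaining $2\pi\beta b\|\nabla u\|_2\|u\|_4\|u\|_8^2$ term.

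The main obstacle will be bookkeeping the index-overlap cases in the $(F^2-1)$ expansion for $\mathcal{J}_R$ so that every combinatorial factor $N^k\alpha^\ell$ is converted correctly using $\alpha=\beta/(N-1)$, and verifying that the $\|\nabla u\|_2$ is always carried along in the correct variable (never absorbed into $\|u\|_\infty$), so that the resulting error norms match the stated expression. Beyond this bookkeeping, the argument is a hybrid of the techniques already developed for $\mathcal{W}$, $S^{\mathrm{diag}}$, and $S^{3\mathrm{body}}$; no essentially new analytic input is required.
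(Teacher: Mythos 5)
Your outline captures the paper's overall strategy — splitting $\mathcal{J}=\mathcal{J}_R+\mathcal{J}_b$, bounding $\mathcal{J}_b$ after factoring out $f_{ij}$ and using Proposition~\ref{prop:obviousboundonJastrow}, expanding $F^2=1+(F^2-1)$ in $\mathcal{J}_R$ to extract the main term, and applying Cauchy--Schwarz/Young with the bound $\|1-f^2\|_1\lesssim(1+g^2)b^2\alpha$. However, your bookkeeping of the error terms contains a genuine gap.

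There is no three-body piece in $\mathcal{J}_b$. Once you bound $F^2/f_{ij}\le 1$, the integrand of $\mathcal{J}_b$ depends only on $\bx_i$ and $\bx_j$: both $\nabla_i\Phi$ (the gradient sitting at $\bx_i$) and $K_{R,b}(\bx_i-\bx_j)$ are attached to the pair $(i,j)$, so integrating out the remaining $N-2$ variables yields $\int|u|^2=1$ factors and a strictly two-body integral. This produces exactly one error, $b(b^\alpha\lambda_1+b^{-\alpha}\lambda_2)\beta\|u\|_\infty^2\|\nabla u\|_2$. The term $2\pi\beta b\|\nabla u\|_2\|u\|_4\|u\|_8^2$, which you ascribe to a ``three-body piece'' of $\mathcal{J}_b$ and propose to estimate as in the treatment of $S^{3\mathrm{body}}_c$, in fact arises from the $(F^2-1)$ error inside $\mathcal{J}_R$, specifically from the case where the Jastrow pair $\{k,m\}$ coincides with the gauge pair $\{i,j\}$ — a case you have omitted from your enumeration, which only lists ``fully disjoint'' and ``share one index''. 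Moreover, because $\nabla u$ is pinned at $\bx_i$ and breaks the $i\leftrightarrow j$ symmetry of the gauge pair, the ``share one index'' class must itself be split into $k=i,\,m\neq j$ and $k=j,\,m\neq i$: these give, respectively, the $\sqrt{N}b\sqrt{1+g^2}\|u\|_4^3\|\nabla u\|_2\|u\|_{8/3}^2$ and the $\beta^2b^2(1+g^2)\|u\|_3\|\nabla u\|_2\|u\|_4^2\|u\|_8^2$ errors. Your attribution of the $\|u\|_3$-error to the fully disjoint case is therefore wrong; only the $N$-weighted term $\beta^2Nb^2(1+g^2)\|u\|_4^5\|u\|_{8/3}^2\|\nabla u\|_2$ comes from that case. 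The full four-way case split is exactly the bookkeeping you flag as the main obstacle, and as stated your plan would not reproduce the listed error structure.
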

\begin{proof}
We first bound
\begin{align*}
    \mathcal{J}_b=&2 N^{-1} \alpha \Re\int_{\R^{2N}}\sum_{i=1}^N\sum_{j\neq i} F^2(\sx)\overline{\Phi}(\sx)\nabla_i \Phi(\sx) \\&\cdot(\bx_i-\bx_j)_{R,b}^{-1}  \frac{ \lambda_1 |\bx_i -\bx_j|^{\alpha} - \lambda_2 |\bx_i -\bx_j|^{-\alpha}}{ \lambda_1 |\bx_i -\bx_j|^{\alpha} + \lambda_2 |\bx_i -\bx_j|^{-\alpha}} \mathrm{d}\sx,
\end{align*}
 as follows:
\begin{align*}
 \left|\mathcal{J}_b\right|&\leq 2 N^{-1} \alpha \int_{\R^{2N}}\sum_{i=1}^N\sum_{j\neq i} \left|\frac{F^2(\sx)}{f_{ij}}\right|\left|\overline{\Phi}(\sx)\right|\left|\nabla_i \Phi(\sx)\right| 
\\& \quad\times \1_{A(R,b)}(\bx_i -\bx_j) \left( \lambda_1 |\bx_i -\bx_j|^{\alpha-1} + \lambda_2 |\bx_i -\bx_j|^{-\alpha-1} \right)
  \mathrm{d}\sx\\
 &\leq 2 N^{-1} \alpha N (N-1)\int_{\R^4}|u(\bx_2)|^2|u(\bx_1)||\nabla_{\bx_1} u(\bx_1)| \\& \quad\times\1_{A(R,b)}(\bx_1-\bx_2) \left( \lambda_1 |\bx_1 -\bx_2|^{\alpha-1} + \lambda_2 |\bx_1 -\bx_2|^{-\alpha-1} \right)\\
 &\leq 2\beta \norm{u}_2\norm{\nabla u}_2\norm{|u(\cdot)|^2\ast \1_{A(R,b)(\cdot)}\left(\lambda_1 |\cdot|^{\alpha-1} + \lambda_2 |\cdot|^{-(\alpha+1)}\right)}_{\infty}\\
& \leq  2\beta\norm{\nabla u}_2  \norm{u}^2_{\infty}\norm{\1_{A(R,b)(\cdot)}\left(\lambda_1 |\cdot|^{\alpha-1} + \lambda_2 |\cdot|^{-(\alpha+1)}\right)}_{1}
 \\
 &\lesssim b(b^{\alpha}\lambda_1 + b^{-\alpha}\lambda_2) \beta  \norm{u}^2_{\infty} \norm{\nabla u}_2.
 \end{align*}
 We have now to replace $F^2$ by $1$ in $\mathcal{J}_R$ to obtain
 \begin{align*}
\mathcal{J}_R&=2N^{-1}\alpha\Re\int_{\R^{2N}}\sum_{i=1}^N\sum_{j\neq i} \overline{\Phi(\sx)}\nabla_i \Phi(\sx) \cdot \left[-\im(\bx_i-\bx_j)_R^{-\perp} \right ] \mathrm{d}\sx\nonumber\\
 &\quad+2N^{-1}\alpha\Re\int_{\R^{2N}}\sum_{i=1}^N\sum_{j\neq i} (F^2(\sx)-1)\overline{\Phi}(\sx)\nabla_i \Phi(\sx) \cdot \left[-\im(\bx_i-\bx_j)_R^{-\perp} \right ] \mathrm{d}\sx\nonumber\\
 &=2\beta\Re\int_{\R^{4}}|u(\bx_1)|^2 \overline{u(\bx_2)}\nabla u(\bx_2) \cdot \left[-\im(\bx_1-\bx_2)_R^{-\perp} \right ]\mathrm{d}\bx_1\mathrm{d}\bx_2+\mathcal{E},
 \end{align*}
 To make the proof complete, the final step is to bound the error term $\mathcal{E}$. We have that

 \begin{align*}
 |\mathcal{E}| &\leq 2N^{-1}\alpha\int_{\R^{2N}}\sum_{i=1}^N\sum_{j\neq i} |F^2(\sx)-1||\overline{\Phi}(\sx)\nabla_i \Phi(\sx)|\frac{|\bx_i-\bx_j|}{|\bx_i-\bx_j|^2_R}\mathrm{d}\sx\nonumber,
 \end{align*}
 and we use \eqref{ine:jastrow211} to obtain
 \begin{align*}
 |\mathcal{E}| &\leq \frac{2\beta}{N(N-1)}\int_{\R^{2N}}\sum_{i=1}^N\sum_{j\neq i}\sum_{1\leq k<m\leq N}(1-f(\bx_k -\bx_m)^2)\1_{B(0,b)}(\bx_k -\bx_m) |\overline{\Phi}(\sx)\nabla_i \Phi(\sx)|\frac{|\bx_i-\bx_j|}{|\bx_i-\bx_j|^2_R}\mathrm{d}\sx.
 \end{align*}
 We will treat separately the following three cases.\newline
 \textbf{Case 1}: Let consider $\mathcal{E}_1$ be the sums in $\mathcal{E}$, where $k\neq i$ and $m\neq j$.\newline
 The case $k\neq j$ and $m\neq i$ because we can exchange $k$ and $m$.
  In $\mathcal{E}_1$, we have four different sums, and we can reduce the integral by symmetry to a four particles integral

 \begin{align*}
 |\mathcal{E}_1| &\leq  \frac{2\beta}{N (N-1)}  N(N-1)(N-2)(N-3)\nonumber\\
 &\quad\times\int_{\R^{8}}(1-f(\bx_1-\bx_2)^2) |u(\bx_1)|^2|u(\bx_2)|^2|u(\bx_3)|^2|\overline{u(\bx_4)}\nabla_z u(\bx_4)|\frac{|\bx_4-\bx_3|}{|\bx_4-\bx_3|^2_R}\mathrm{d}\bx_1\mathrm{d}\bx_2\mathrm{d}\bx_3\mathrm{d}\bx_4\nonumber\\
 &\lesssim  \beta (N-1)^{2}\norm{|u|^2\left(|u|^2\ast(1-f^2) \right)}_1\norm{\nabla u}_2\norm{|u|(|u|^2\ast |\cdot|^{-1})}_2\nonumber\\
 &\lesssim  \beta (N-1)^{2}\norm{|u|^2}_2\norm{|u|^2\ast(1-f^2)}_2\norm{\nabla u}_2\norm{u}_4 \norm{|u|^2}_{4/3}\norm{|x|^{-1}}_{2,w}\nonumber\\
 &\lesssim \beta (N-1)^{2}\norm{|u|^2}^2_2\norm{1-f^2}_1\norm{\nabla u}_2\norm{u}_4 \norm{|u|^2}_{4/3}\norm{|x|^{-1}}_{2,w}\nonumber\\
 &\lesssim N b^2   \beta^2 (1+g^2)   \norm{u}^5_4 \norm{u}^2_{8/3} \norm{\nabla u}_2.
 \end{align*}
 Here, we used H\"older's inequality on the third inequality, Young's convolution inequality on the third and fourth inequalities, and \eqref{ine:bounds12} on the last inequality.
  \newline

  \textbf{Case 2}: In the second case, we define $\mathcal{E}_2$ as the sums in $\mathcal{E}$, where $k=i$ and $m=j$.\newline
  Note that the case $k=j$ and $m=i$ is similar to this case. To estimate $\mathcal{E}_2$, we cosider
 \begin{align*}
 |\mathcal{E}_2| &\leq 2\beta \int_{\R^{4}}(1-f(\bx_1-\bx_2)^2)|u(\bx_2)|^2|\overline{u(\bx_1)}\nabla_x u(\bx_1)|\frac{|\bx_1-\bx_2|}{|\bx_1-\bx_2|^2_R}\mathrm{d}\bx_1\mathrm{d}\bx_2\nonumber\\
 &\leq 2\beta\norm{\nabla u}_2\norm{|u|\left(|u|^2\ast\left((1-f(\cdot)^2) |\cdot|^{-1}\right)\right)}_2\nonumber\\
 &\leq 2\beta\norm{\nabla u}_2\norm{u}_4\norm{|u|^2\ast\left((1-f(\cdot)^2) |\cdot|^{-1}\right)}_4\nonumber\\
 &\leq 2\beta\norm{\nabla u}_2\norm{u}_4\norm{|u|^2}_4\norm{(1-f(\cdot)^2) |\cdot|^{-1}}_1\nonumber\\
 &\leq 4\pi \beta\norm{\nabla u}_2\norm{u}_4\norm{u}^2_8   \int_{0}^b 1   =4\pi \beta b  \norm{\nabla u}_2\norm{u}_4\norm{u}^2_8,
 \end{align*}
 where we used Proposition~\ref{prop:obviousboundonJastrow} on the last inequality.
 \newline
  \textbf{Case 3}: In this case, we define $\mathcal{E}_3$ as the sums in $\mathcal{E}$, where $k=i$ and $m\neq j$.\newline
  Note that the case where $m=i$ and $k\neq j$ is identical and can be estimated similarly.
 
 \begin{align*}
 |\mathcal{E}_3| &\leq 2\beta (N-2) \int_{\R^{6}}(1-f(\bx_1-\bx_2)^2) |u(\bx_2)|^2|u(\bx_3)|^2|\overline{u(\bx_1)}\nabla_{\bx_1} u(\bx_1)|\frac{|\bx_1-\bx_3|}{|\bx_1-\bx_3|^2_R}\mathrm{d}\bx_1\mathrm{d}\bx_2\mathrm{d}\bx_3\nonumber\\
 &\leq 2\beta (N-1)\norm{|u|^2\left[\left((1-f^2)\right) \ast\left(\overline{u}\nabla u (|u|^2\ast |\cdot|^{-1})\right)\right]}_1\nonumber\\
 &\leq 2\beta (N-1)\norm{|u|^2}_2\norm{\left((1-f^2)\right) \ast\left(\overline{u}\nabla u (|u|^2\ast |\cdot|^{-1})\right)}_2\nonumber\\
 &\leq 2\beta (N-1)\norm{u}^2_4\norm{(1-f^2)}_2\norm{\overline{u}\nabla u (|u|^2\ast |\cdot|^{-1})}_1\nonumber\\
 &\leq 2\beta (N-1)\norm{u}^2_4\norm{(1-f^2)}_2\norm{\overline{u}\nabla u}_{4/3}\norm{|u|^2\ast |\cdot|^{-1}}_4\nonumber\\
 &\leq 2\beta (N-1)\norm{u}^2_4\norm{(1-f^2)}_2\norm{\nabla u}_{2}\norm{u}_4\norm{|u|^2\ast |\cdot|^{-1}}_4\nonumber\\
 &\leq 2\beta (N-1)\norm{u}^3_4\norm{(1-f^2)}_2\norm{\nabla u}_{2}\norm{|u|^2}_{4/3}\norm{ |\cdot|^{-1}}_{2,w}\nonumber\\
 &\lesssim b\sqrt{N}\beta^{\frac{3}{2}}  \sqrt{1+g^2}  \,  \norm{u}^3_4 \norm{\nabla u}_{2}\norm{u}^2_{\frac{8}{3}}.
 \end{align*} 
 Here, we used H\"older's inequality on third, fifth, and sixth inequalities, Young's convolution inequality on fourth and one to the last inequalities, and \eqref{ine:bounds12} on the last inequality.
\newline
  \textbf{Case 4}: Finally, we consider $\mathcal{E}_4$ as the sums in $\mathcal{E},$ where $k=j$ and $m\neq i$, and similar computation resolves the sums in $\mathcal{E}$ with $m=j$ and $k\neq i$. We have
 \begin{align*}
 |\mathcal{E}_4| &\leq 2\beta (N-2) \int_{\R^{6}}(1-f(\bx_3-\bx_2)^2) |u(\bx_2)|^2|u(\bx_3)|^2|\overline{u(\bx_1)}\nabla_{\bx_1} u(\bx_1)|\frac{|\bx_1-\bx_3|}{|\bx_1-\bx_3|^2_R}\mathrm{d}\bx_1\mathrm{d}\bx_2\mathrm{d}\bx_3\nonumber\\
 &\leq 2\beta N\norm{\overline{u}\nabla u\left[\left((|u|^2\ast(1-f^2)) |u|^2\right)\ast|\cdot|^{-1}\right]}_1\nonumber\\
 &\leq 2\beta (N-1)\norm{\overline{u}\nabla u}_{4/3}\norm{\left(|u|^2\ast(1-f^2)\right) |u|^2\ast|\cdot|^{-1}}_4\nonumber\\
 &\leq 2 \beta (N-1)\norm{\overline{u}\nabla u}_{4/3}\norm{\left(|u|^2\ast(1-f^2)\right) |u|^2}_{4/3}\norm{|\cdot|^{-1}}_{2,w}\nonumber\\
 &\leq 2 \beta (N-1) \norm{\overline{u}\nabla u}_{4/3}\norm{|u|^2\ast(1-f^2) }_{2}\norm{|u|^2}_4\norm{|\cdot|^{-1}}_{2,w}\nonumber\\
 &\leq 2 \beta (N-1)\norm{\overline{u}\nabla u}_{4/3}\norm{(1-f^2) }_{1}\norm{|u|^2}_2\norm{|u|^2}_4\norm{|\cdot|^{-1}}_{2,w}\nonumber\\
 &\leq 2 \beta (N-1) \norm{u}_3 \norm{\nabla u}_2 \norm{(1-f^2) }_{1}\norm{|u|^2}_2\norm{|u|^2}_4\norm{|\cdot|^{-1}}_{2,w}\nonumber
 \\ &\lesssim b^2 \beta^2  (1+g^2)  \norm{u}_3 \norm{\nabla u}_2 \norm{u}^2_4\norm{u}^2_8.
 \end{align*}
  Here, we used H\"older's inequality on third, fifth, and seventh inequalities, Young's convolution inequality on fourth and sixth inequalities, and \eqref{ine:bounds12} on the last inequality.
\end{proof}

\subsection{The final bound}

We can now finally prove Theorem \ref{thm:E} by combining the previous lemmas.
 
\begin{proof}
 We start from the splitting \eqref{def:spltting} in which $\mathcal{S}=\mathcal{S}^{\mathrm{diag}}+\mathcal{S}^{\mathrm{3b}}$ to write
 \begin{equation}
N^{-1}\bra{F\Phi}H_N\ket{F\Phi}=\mathcal{K}+S^{\mathrm{diag}}+S^{3\mathrm{body}}+\mathcal{J}+\mathcal{V}+\mathcal{W}.
\end{equation}
The application of Lemmas \ref{lem:K}, \ref{lem:V}, \ref{lem:W}, \ref{lem:Sdiag}, \ref{lem:three body}, \ref{lem:s3b}, and \ref{lem:J} provides
 \begin{align*}
N^{-1}\bra{F\Phi}H_N\ket{F\Phi}&=
\int_{\R^2}|\nabla u|^2+2\beta\Re \int_{\R^2} \overline{u}\nabla u\cdot\bA^R[|u|^2]+\beta^2\int_{\R^2}|u|^2|\bA^R[|u|^2]|^2\\
&\quad+ 2\pi\beta\, G(  2\beta\omega_N ,g)\int_{\R^2} |u|^4+\int_{\R^2} V|u|^2+\cE_N\\
&=\int_{\R^2} |(-\im\nabla +\beta\bA^R[|u|^2])u|^2+2\pi \beta\, G(  2\beta \omega_N ,g)\int_{\R^2} |u|^4+\int_{\R^2} V|u|^2+\cE_N.
\end{align*}
Now, by using $R=e^{-N\omega_N}$  with $\omega_N\to\omega \in [0,\infty]$, we get
\begin{align}
\lim_{N \to \infty} G(  2\beta\omega_N ,g)=\lim_{N \to \infty} \frac{1+\frac{g}{2} - (1-\frac{g}{2}) \left(\frac{R}{b}\right)^{2\alpha}}{1+\frac{g}{2} + \left(1-\frac{g}{2}\right) \left(\frac{R}{b}\right)^{2\alpha}}=  G( 2\beta \omega ,g),
\end{align}
where we used the assumption $N^{-1}\log b\to 0$. Moreover,
\begin{align*}
|\cE_N|&\leq C(u)(\beta^3+1) (g^3+1)\big( R +b^2 N+b\sqrt{N}+N^{-1}+ b(b^{\alpha}\lambda_1 + b^{-\alpha}\lambda_2) + b^{2} (\lambda_1^2\, b^{2\alpha}+\lambda_2^2\, b^{-2\alpha})
\\ &\quad+ (N^{-1}+ R+b^2 N) \left(\lambda^2_1 \left( b^{2\alpha} - R^{2\alpha} \right) + \lambda^2_2  \left( R^{-2\alpha} - b^{-2\alpha} \right) \right) + N^{-1} R_1^2 |\log R_1 |\big)\\
&=o_N(1),
\end{align*}
by using $\beta\in \R_+$, $bN^2\to 0$, $\frac{\log b}{N} \to 0$, and  $u\in C_c^{\infty}(B(0,R_1))$, where $C(u)$ is a constant depending only on $u$. Here, we used that $0\leq R<b$,
\begin{align*}
     \lambda_1^2\, b^{2\alpha}+\lambda_2^2\, b^{-2\alpha}= \frac{(2+g)^2 + \left(\frac{R}{b}\right)^{4\alpha} (2-g)^2 } { 
 \left( 2\left(1+\left(\frac{R}{b}\right)^{2\alpha}\right) + g\left(1-\left(\frac{R}{b}\right)^{2\alpha}\right)   \right)^2} 
 \leq \frac{4+g^2}{2},
\end{align*}
and
\begin{align*}
    & \lambda^2_1 \left(b^{2\alpha}-R^{2\alpha}\right) +\lambda^2_2 \left(R^{-2\alpha}- b^{-2\alpha} \right) \\ &= \left( \frac{2+g}{2\left(1+\left(\frac{R}{b}\right)^{2\alpha}\right) + g\left(1-\left(\frac{R}{b}\right)^{2\alpha}\right)}\right)^2 \left(1-\left(\frac{R}{b}\right)^{2\alpha}\right) \\ &\quad + \left ( \frac{2-g}{2\left(1+\left(\frac{R}{b}\right)^{2\alpha}\right) + g\left(1-\left(\frac{R}{b}\right)^{2\alpha}\right)} \right )^2 \left(\frac{R}{b}\right)^{2\alpha} \left(1-\left(\frac{R}{b}\right)^{2\alpha}\right) 
     \\& \leq \frac{4+g^2}{2}.
\end{align*}
Hence, by \eqref{eq: normJastrow} and $b^2 N \to 0$, we derive that
 \begin{equation}
 N^{-1}\frac{\bra{F\Phi}H_N\ket{F\Phi}}{\norm{F\Phi}^2}=\mathcal{E}_{\beta,\gamma,V}[u] +\mathcal{E}'_N
 \end{equation}
 where $\gamma = 2\pi\beta G(2\beta\omega,g)$ and $\lim_{N \to \infty} \mathcal{E}'_N = 0$. Here, we used that
 \begin{equation}
 \left|\mathcal \int|(-\im\nabla +\beta\bA[|u|^2])u|^2- \mathcal \int|(-\im\nabla +\beta\bA^R[|u|^2])u|^2\right|\lesssim R(1+\mathcal{E}_{\beta,0,0}[u]^{3/2})
\end{equation}
by \cite[Proposition A.6]{LunRou-15} to take the final limit $R\to 0$.
\end{proof}

\appendix
\section[\qquad Appendix]{Appendix}\label{sec:appendix}

\subsection{Form domains for ideal anyons}\label{sec:appendix-domains}

In this appendix we make precise the form domain for ideal anyons using a many-body Hardy inequality.
Similar observations were made in \cite{RouYan-23b} with slightly different methods. 

We define the kinetic energy operator
$T_\alpha := (-i\nabla+\alpha\sA)^2 = \sum_{j=1}^N (-i\nabla_{\bx_j}+\alpha\bA_j)^2$ 
for $N$ ideal abelian anyons with statistics parameter $\alpha \in \R$
by Friedrichs extension, starting from the non-negative quadratic form
$\Psi \mapsto \int_{\R^{2N}} |(-i\nabla+\alpha\sA)\Psi|^2$
on the dense domain $C^\infty_c(\R^{2N} \setminus \bDelta) \cap L^2_\sym(\R^{2N};\C)$ (the \emph{minimal} form and operator domain) and taking its form closure.
Equivalently, we could have started from the \emph{maximal} form domain 
$\Psi \in L^2_\sym(\R^{2N};\C)$ s.t.
$(-i\nabla+\alpha\sA)\Psi \in L^2(\R^{2N};\C^{2N})$
in the sense of distributions; see \cite[Section~2.2]{LunSol-14}. 
Because of the following many-anyon Hardy inequality 
(fermionic case $\alpha=1$ proved in \cite[Thm.~2.8]{HofLapTid-08}, anyonic in \cite{LunSol-13a,LarLun-16,LunQva-20}),
we actually stay inside $H^1(\R^{2N};\C)$, 
with a possible vanishing condition on the diagonals (exclusion principle),
\emph{equivalent} to the fermionic one.

\begin{lemma}[Ideal anyons Hardy inquality]
For any $\alpha \in \R$, $N \ge 2$, define the ``fractionality''
$$
	\alpha_N := \inf_{\substack{p \in \{0,1,\ldots,N-2\}\\q \in \Z}} 
            \big|(2p+1)\alpha - 2q\big|.
$$
Then, as quadratic forms,
\begin{equation}\label{eq:Hardy-nn}
	T_\alpha \ge 
	\frac{2\alpha_2^2}{N}
	\sum_{j=1}^N |\bx_j-\bx_{\mathrm{nn}(j)}|^{-2},
\end{equation}
where $\mathrm{nn}(j)$ is the nearest neighbor of particle $j$, 
and
\begin{equation}\label{eq:Hardy-max}
	T_\alpha \ge  
	\frac{4}{N}\max\left\{ \alpha_N^2, \frac{\alpha_2^2}{N-1} \right\}
	\sum_{1 \le j<k \le N} |\bx_j-\bx_k|^{-2}.
\end{equation}
\end{lemma}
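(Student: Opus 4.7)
The strategy is to reduce the $N$-body bound to pairwise 2D magnetic Hardy inequalities in relative coordinates, using bosonic symmetry to restrict the angular modes, and then to combine the resulting local bounds via a partition. The fundamental input is the Aharonov--Bohm Hardy inequality: for $f \in C^\infty_c(\R^2 \setminus \{0\})$ whose angular decomposition uses only modes in $S \subseteq \Z$,
$$
 \int_{\R^2}\bigl|(-i\nabla + \Phi\,\br^\perp/|\br|^2)f\bigr|^2 \ge \Bigl(\inf_{l\in S}|l-\Phi|\Bigr)^{2}\int_{\R^2}\frac{|f|^2}{|\br|^2},
$$
which for functions even in $\br$ ($S=2\Z$) yields the constant $\mathrm{dist}(\Phi, 2\Z)^2$.

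For a fixed pair $(j,k)$, I would change to relative and center-of-mass coordinates $\br = \bx_j - \bx_k$, $\mathbf{Y} = (\bx_j+\bx_k)/2$, freezing the other $N-2$ coordinates. A direct expansion of $D_j^*D_j + D_k^*D_k$, with $D_j = -i\nabla_j + \alpha\bA_j$, isolates a relative part of the form $2(-i\nabla_\br + \alpha\br^\perp/|\br|^2 + \alpha\,\mathbf{B}_{\mathrm{reg}})^2$, in which $\mathbf{B}_{\mathrm{reg}}$ is smooth at $\br=0$ and generated by the remaining particles. The slice $f(\br) := \Psi(\ldots,\bx_j,\ldots,\bx_k,\ldots)$ is even in $\br$ by the bosonic $j\leftrightarrow k$ symmetry, so only modes $S=2\Z$ appear. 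A local gauge transformation $f \mapsto e^{-i\alpha\chi}f$ absorbs $\mathbf{B}_{\mathrm{reg}}$ into a phase; its holonomy around any contractible loop encircling the vortex at $\br=0$ together with $p$ other particles contributes a factor $e^{2\pi i p \alpha}$, shifting the effective flux at the vortex from $\alpha$ to $(2p+1)\alpha$ modulo $2\Z$, where the odd factor comes from the vortex itself together with the $2p$ extra windings. The Hardy inequality above then delivers the constant $\alpha_N^2 = \inf_{p\in\{0,\ldots,N-2\},\, q\in\Z}|(2p+1)\alpha - 2q|^2$, or simply $\alpha_2^2$ when only $p=0$ is admissible.

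For the nearest-neighbor bound, I apply the $p=0$ slice Hardy localized to the open ball $B(\bx_j, \tfrac12|\bx_j - \bx_{\mathrm{nn}(j)}|)$, within which no other particle sits by definition. Summing the resulting pairwise bound $D_j^*D_j + D_{\mathrm{nn}(j)}^*D_{\mathrm{nn}(j)} \ge 2\alpha_2^2/|\bx_j-\bx_{\mathrm{nn}(j)}|^2$ over $j$ and observing that each single-particle kinetic energy $D_l^*D_l$ is reused at most $N$ times (once for $j=l$ and at most $N-1$ times when $l$ is another particle's nearest neighbor) produces the prefactor $2/N$. For the all-pairs inequality, the same slice argument with $p$ allowed to range up to $N-2$ yields the constant $\alpha_N^2/N$; alternatively, uniformly averaging the nearest-neighbor bound over the $\binom{N}{2}$ pairs yields the weaker competitor $\alpha_2^2/(N-1)$, whence the claimed maximum.

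The main obstacle will be the careful execution of the local gauge transformation in the presence of the other $N-2$ particles: one must verify that the chosen gauge respects the $j\leftrightarrow k$ symmetry of $\Psi$ in $\br$, that the enclosure count $p$ is well-defined modulo braiding ambiguities, and that the nearest-neighbor summation attains the sharp $1/N$ rather than a lossy version. Extension from the minimal form domain $C^\infty_c(\R^{2N}\setminus\bDelta)\cap L^2_\sym$ to the Friedrichs form-closure is then standard and follows from monotonicity of quadratic forms.
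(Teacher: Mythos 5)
The paper's proof of this lemma is short: it invokes the already-established improved many-anyon Hardy inequality of \cite[Thm.~5.5]{LunQva-20} (with proof going back to \cite{LunSol-13a}), namely
$$
\int_{\R^{2N}} |(-i\nabla+\alpha\sA)\Psi|^2
\ge
\frac{2}{N} \int_{\R^{2N}}
\sum_{j \neq k}
\frac{\hat{\alpha}_{\sx}\bigl((\bx_j+\bx_k)/2,|\bx_j-\bx_k|/2\bigr)^2}{|\bx_j-\bx_k|^2} |\Psi|^2,
$$
where $\hat{\alpha}_{\sx}(\bX,r) = \inf_{q\in\Z}|(2p+1)\alpha - 2q|$ and $p$ counts how many of the remaining particles fall inside $B(\bX,r)$, and then notes two trivial observations: $\hat\alpha_{\sx}\ge\alpha_N$ always, and $p=0$ when $k$ is the nearest neighbor of $j$ (by Thales, the diameter disk on $[\bx_j,\bx_{nn(j)}]$ contains no other particle). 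What you have written is not a proof of the lemma so much as a sketch of how one might \emph{re-derive} the cited theorem from scratch, and the sketch omits precisely the technical steps that make the cited theorem nontrivial.

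Two concrete gaps. First, your nearest-neighbor localization uses the ball $B(\bx_j, \tfrac12|\bx_j-\bx_{nn(j)}|)$, which does not contain $\bx_{nn(j)}$ and therefore cannot support a bound on the relative kinetic energy of the pair $(j,nn(j))$; the correct geometry (used in the cited theorem) is the diameter disk $B\bigl((\bx_j+\bx_{nn(j)})/2,\, \tfrac12|\bx_j-\bx_{nn(j)}|\bigr)$, which has $\bx_j$ and $\bx_{nn(j)}$ on its boundary and, by Thales' theorem, no other particle in its interior. Second, your counting argument ``each $D_l^*D_l$ is reused at most $N$ times, hence prefactor $2/N$'' presupposes an operator inequality $D_j^*D_j + D_{nn(j)}^*D_{nn(j)} \ge 2\alpha_2^2|\bx_j-\bx_{nn(j)}|^{-2}$, but this does not hold as a global operator inequality on $\R^{2N}$: the Aharonov--Bohm Hardy bound must be localized in the relative coordinate to a region depending on the configuration $\sx$, and making that localization rigorous (an IMS-type covering of configuration space with correct control of boundary terms and of the flux seen inside each local region) \emph{is} the content of \cite[Thm.~5.5]{LunQva-20} and cannot be replaced by naive slice-wise application plus counting. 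The holonomy picture you describe (effective flux $(2p+1)\alpha\pmod{2\Z}$, bosonic symmetry restricting angular modes to $2\Z$) is the correct intuition for why the constant $\alpha_N$ appears, so the conceptual skeleton is right, but the proposal as written does not close these two gaps, which you indeed flagged yourself as ``the main obstacle.'' The right move is simply to cite the improved Hardy inequality, as the paper does.
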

\begin{proof}
    By an improved many-anyon Hardy inequality, 
    \cite[Thm.~5.5]{LunQva-20} 
    (or by following the original proof in \cite{LunSol-13a}), 
    $$
        \int_{\R^{2N}} |(-i\nabla+\alpha\sA)\Psi|^2
        \ge 
        \frac{2}{N} \int_{\R^{2N}} 
        \sum_{\substack{1 \le j,k \le N \\ j \neq k}} 
        \frac{\hat{\alpha}_{\sx}((\bx_j+\bx_k)/2,|\bx_j-\bx_k|/2)^2}{|\bx_j-\bx_k|^2} |\Psi|^2,
    $$
    where
    $$
        \hat{\alpha}_{\sx}(\bX,r) := \inf_{q \in \Z} |(2p+1)\alpha - 2q|,
    $$
    and $p=p_{\sx}(\bX,r)$ is exactly the number of particles $\sx$ inside the disk $B(\bX,r)$. 
    Note that, relative to the particles $j \neq k$, 
    that does not include these two (or more) particles on the boundary of the disk.
    Hence, 
    $\hat{\alpha}_{\sx}(\bX,r) \ge \alpha_N$ for all $\bX \in \R^2$, $r>0$,
    and $\sx \in \R^{2N}$.
    Further, for any fixed $j$, if $k=\mathrm{nn}(j)$ is taken to be the nearest neighbor to $j$,
    then $p_{\sx}((\bx_j+\bx_k)/2,|\bx_j-\bx_k|/2) = 0$ and so $\hat{\alpha}_{\sx}((\bx_j+\bx_k)/2,|\bx_j-\bx_k|/2) = \alpha_2$.
    Thus, we obtain \eqref{eq:Hardy-nn}, and by
    $$
        |\bx_j-\bx_{\mathrm{nn}(j)}|^{-2} \ge (N-1)^{-1} \sum_{k \neq j} |\bx_j-\bx_k|^{-2},
    $$
    we also obtain \eqref{eq:Hardy-max}.
\end{proof}

\begin{lemma}[Ideal anyons form domain]
The form domain for 
$T_\alpha$ 
is exactly
$$
    U^{-2n} H^1_\sym
    \qquad \text{if} \ \alpha
    = 2n \in 2\Z,
$$
and 
$$
    H^1_\sym \cap L^2_{W_\sA} =  
    U^{\pm 1} H^1_\asym
    \qquad \text{if} \ \alpha = 2n+\alpha_2 \notin 2\Z, 
$$
where $U(\sx) := \prod_{j<k} \frac{z_j-z_k}{|z_j-z_k|}$ defined on $\C^{N} \setminus \bDelta$ maps $L^2_\sym \to L^2_\asym \to L^2_\sym$ isometrically, and $W_\sA := \frac{1}{2} (\sum_j \bA_j^2)_{\diag} = \sum_{j<k} |\bx_j-\bx_k|^{-2}$ controls the presence of exclusion on the diagonals.
\end{lemma}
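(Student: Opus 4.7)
My plan is to reduce, via the gauge identity $\nabla_j U = i\bA_j U$ on $\R^{2N} \setminus \bDelta$, the operator $T_\alpha$ to a free Laplacian acting on either symmetric or antisymmetric functions, and to use the many-anyon Hardy inequality \eqref{eq:Hardy-max} to extract the exclusion condition $L^2_{W_\sA}$ in the non-even case. For any integer $k$ the gauge identity gives the pointwise equation
\begin{equation*}
(-i\nabla_j + \alpha\bA_j)(U^{-k}\tilde\Psi) = U^{-k}\bigl(-i\nabla_j + (\alpha-k)\bA_j\bigr)\tilde\Psi,
\end{equation*}
so multiplication by $U^{-k}$ is a formal unitary equivalence between $T_\alpha$ and $T_{\alpha-k}$. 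Since $U$ flips sign under each pairwise exchange, this equivalence preserves the permutation parity precisely when $k$ is even. For $\alpha = 2n$ I would take $k = 2n$ to reduce to $T_0 = -\Delta$ on $L^2_\sym$, whose Friedrichs form domain is the standard $H^1_\sym$, yielding the form domain $U^{-2n}H^1_\sym$ for $T_{2n}$.

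For $\alpha \notin 2\Z$ we have $\alpha_2 > 0$, and \eqref{eq:Hardy-max} gives $T_\alpha \ge cW_\sA$ as quadratic forms for some $c > 0$, so any element of the form domain must lie in $L^2_{W_\sA}$. Combined with the pointwise bound $\sum_j|\bA_j|^2 \lesssim W_\sA$ and the $L^2$ triangle inequality applied to $(-i\nabla+\alpha\sA)\Psi$, this forces the form domain into $H^1_\sym \cap L^2_{W_\sA}$. For the matching identification with $U^{\pm 1}H^1_\asym$, I would show that multiplication by $U$ (equivalently by $\bar U = U^{-1}$) is an isometric bijection $H^1_\asym \to H^1_\sym \cap L^2_{W_\sA}$. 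In the forward direction, the fermionic Hardy inequality (the $\alpha = 1$ case of \eqref{eq:Hardy-max}) yields $\int W_\sA|\tilde\Psi|^2 < \infty$, and the identity $\nabla(U\tilde\Psi) = iU\sA\tilde\Psi + U\nabla\tilde\Psi$ places both summands in $L^2$. The backward direction is the mirror computation $\nabla(U^{-1}\Psi) = U^{-1}(\nabla\Psi - i\sA\Psi)$, whose two summands are in $L^2$ by the hypothesis on $\Psi$.

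It remains to check that the Friedrichs form closure of the minimal domain $C^\infty_c(\R^{2N}\setminus\bDelta)\cap L^2_\sym$ exactly coincides with the claimed subspace rather than a proper subspace of it. Transferring to the asymmetric side through the $U^{\pm 1}$ isomorphism, this question reduces to the well-known density of $C^\infty_c(\R^{2N}\setminus\bDelta) \cap L^2_\asym$ in $H^1_\asym$, which follows from the codimension-two nature of $\bDelta$ in $\R^{2N}$ together with the automatic vanishing of antisymmetric functions on the coincidence set, via standard mollification and diagonal-cutoff arguments. The main obstacle is to make this transfer precise: one must verify that the smooth compactly supported antisymmetric approximants pull back under multiplication by $U$ to elements of the minimal form domain in the symmetric representation, with convergence in the full form norm of $T_\alpha$. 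Here the smoothness of $U$ on $\R^{2N}\setminus\bDelta$ and the local boundedness of $\sA$ away from $\bDelta$ are the key technical ingredients, and some extra care is required to control the $L^2_{W_\sA}$-part of the norm uniformly along the approximating sequence, which is ultimately handled by applying the fermionic Hardy inequality to the approximants themselves.
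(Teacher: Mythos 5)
Your proposal matches the paper's strategy: the many-anyon Hardy inequality gives the $L^2_{W_\sA}$ constraint, the bound $\sum_j \bA_j^2 \lesssim W_\sA$ upgrades this to $H^1$ membership, and the gauge map $U$ identifies the domain with $U^{\pm 1}H^1_\asym$. The only divergence is that you explicitly sketch the minimal-equals-maximal form domain argument via codimension-two removability of $\bDelta$, while the paper handles that point at the outset by citing \cite{LunSol-14} rather than reproving it.
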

\begin{proof}
Let $\Psi$ be in the form domain of $T_\alpha$, i.e.
$\Psi \in L^2_\sym$ and 
$(-i\nabla+\alpha\sA)\Psi \in L^2(\R^{2N};\C^{2N})$.
Then we have that
$|\Psi| \in H^1_\sym$
by the diamagnetic inequality; 
see \cite[Lem.~4]{LunSol-14}.
Further, by the above improved many-anyon Hardy inequality \eqref{eq:Hardy-max},
$$
    \frac{2\alpha_2^2}{N(N-1)} \int_{\R^{2N}} \frac{|\Psi|^2}{|\bx_j-\bx_k|^2} d\sx < \infty
    \qquad \forall j \neq k.
$$
Also note that $\sA \cdot \sA$ and $W_{\sA}$ are equivalent in the sense that
$$
    2W_{\sA} \le \sum_j \bA_j^2 = 2W_{\sA} + \sum_{j<k<l} R_{jkl}^{-2}
    \le C_N W_{\sA},
$$
by non-negativity of the three-body term (where $R_{jkl} \in [0,\infty]$ denotes the radius of circle defined by the points $\bx_j,\bx_k,\bx_l$; see \cite{HofLapTid-08}), and Cauchy-Schwarz.
Hence, if $\alpha_2 \neq 0$, then $\Psi \in L^2_{W_\sA}$, and
$\sA\Psi \in L^2(\R^{2N};\C^{2N})$, so $\Psi \in H^1_\sym(\R^{2N};\C)$.
In the case that $\alpha=0$, we trivially have $\Psi \in H^1_\sym$ (by definition),
and if $\alpha = 2n$, then we can use the symmetric gauge transformation $U^{-2n}$ (a.e. in $\R^{2N}$) to remove the gauge potential and reduce to $\alpha=0$.

In the case of ideal fermions ($\alpha=1$; $\Phi = U\Psi \in H^1_\asym$), we have for their bosonic representation $\Psi \in U^{-1} H^1_\asym$ with the gauge potential $\sA$, and thus, by the same Hardy inequality, 
actually $\Psi \in H^1_\sym \cap L^2_{W_\sA}$. 
We also note that $U^{2n}\Psi \in H^1_\sym \cap L^2_{W_\sA}$ for any $n \in \N$.

Conversely, given $\Psi \in H^1_\sym \cap L^2_{W_\sA}$, we have that $\Phi := U\Psi \in L^2_\asym \cap L^2_{W_\sA}$, and thus $\nabla\Phi = U(\nabla + \sA)\Psi \in L^2$, so $\Phi \in H^1_\asym$.
We also note that $U^{2n}\Phi \in H^1_\asym$ for any $n \in \N$.
\end{proof}

Finally, for $\Psi \in H^1$ we also have $\varrho_\Psi^{1/2} \in H^1$ by the Hoffmann-Ostenhof inequality \cite{Hof-77},
\begin{equation}\label{eq:HO-ineq}
    \int_{\R^{2N}} |\nabla \Psi|^2 \ge \int_{\R^2} \bigl|\nabla\sqrt{ \varrho_\Psi}\bigr|^2,
\end{equation}
where in general
\begin{equation}\label{eq:def-density}
    \varrho_\Psi(\bx) := 
    \sum_{j=1}^N \int_{\R^{2(N-1)}} |\Psi(\bx_1, \ldots, 
	\bx_{j-1}, \bx, \bx_{j+1}, \ldots, \bx_N)|^2 \prod_{k \neq j}d\bx_k.
\end{equation}

\subsection{Useful estimates}\label{sec:appendix-estimates}

Here we repeat some useful generic bounds, well known in the literature.

\begin{lemma}
\label{lem:productinequality}
Let $0 \leq a_i \leq 1$ for $i=1,\cdot \cdot \cdot,N$ and $N    
 \in \N$. Then,
\begin{align*}
    \prod_{i=1}^N a_i \leq 1 - \sum_{i=1}^N (1-a_i).
\end{align*}
\end{lemma}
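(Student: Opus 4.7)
My plan is to proceed by induction on $N$. The base case $N=1$ is the equality $a_1 = 1-(1-a_1)$. For the inductive step, setting $T := \sum_{i=1}^{N-1}(1-a_i) \geq 0$, applying the inductive hypothesis and multiplying by $a_N \in [0,1]$ reduces the question to whether $a_N(1-T) \leq a_N - T$, i.e.\ whether $T(1-a_N) \leq 0$. Since $T \geq 0$ and $1-a_N \geq 0$, this fails outside degenerate cases; already $N=2$ with $a_1=a_2=\tfrac12$ gives $\prod a_i = \tfrac14$ and $1-\sum(1-a_i) = 0$, so the $\leq$ direction as typeset cannot hold for $N \geq 2$.

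Tracing where the lemma is invoked, namely the chain
$$\prod_{1 \leq k<m \leq N} f(\bx_k-\bx_m)^2 \geq 1 - \sum_{1\leq k<m\leq N} (1-f(\bx_k-\bx_m)^2)$$
in the proof of Lemma~\ref{lem:propjastrow}, the statement actually required is the Weierstrass product inequality in the opposite direction, $\prod_{i=1}^N a_i \geq 1 - \sum_{i=1}^N(1-a_i)$. The same induction then closes immediately: under the inductive hypothesis and multiplication by $a_N \geq 0$, one has $\prod_{i=1}^N a_i \geq a_N(1-T)$, and
$$a_N(1-T) - \bigl(a_N - T\bigr) = T(1-a_N) \geq 0,$$
which gives the required bound. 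A brief case split handles $1-T < 0$: there $\prod_{i=1}^N a_i \geq 0 \geq a_N - T$ follows directly, bypassing the hypothesis.

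The only substantive obstacle is recognizing the direction of the inequality from its usage in the body of the paper; once the sign is corrected to the Weierstrass form, the analytical content is a one-line monotonicity argument and no further ingredients are needed.
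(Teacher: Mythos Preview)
Your analysis is correct: the inequality as typeset is false, your counterexample $a_1=a_2=\tfrac12$ demonstrates this, and the usage in Lemma~\ref{lem:propjastrow} confirms that the intended direction is the Weierstrass bound $\prod a_i \ge 1-\sum(1-a_i)$. Your inductive proof of the corrected statement is complete; the case split for $1-T<0$ is a nice touch that makes the argument self-contained without needing the inductive hypothesis there.

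For comparison, the paper also argues by induction after the substitution $b_i=1-a_i$, but the sign error in the statement propagates through the proof, and the claimed equivalence at $N=2$ (that $(1-b_1)(1-b_2)\le 1-b_1-b_2$ is equivalent to $b_1+b_2\ge 2b_1b_2$) is itself incorrect: expanding gives $b_1b_2\le 0$, not the AM--GM-type bound the paper then verifies. So the paper's proof is doubly flawed, while your version both diagnoses the typo and supplies a clean argument for the correct inequality. The underlying strategy (induction) is the same; your execution is the one that actually works.
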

\begin{proof}
    By considering $b_i = 1-a_i$ for every $i=1,\cdot \cdot \cdot, N$, it is equivalent to prove that 
    \begin{align}
    \label{eq:Ninequa}
        \prod_{i=1}^N (1-b_i) \leq 1- \sum_{i=1}^N b_i.
    \end{align}
To see this, we use an induction on $N$. For $N=1$, it is trivial. For $N=2$, we need to show that 
\begin{align*}
        \prod_{i=1}^2 (1-b_i) \leq 1- \sum_{i=1}^2 b_i,
    \end{align*}
   which is equivalent with 
   \begin{align}
   \label{eq:twoineq}
      b_1 + b_2 \geq 2 b_1 b_2.
    \end{align}
Since $b_1 + b_2 \geq 2 \sqrt{b_1 b_2}$ and $b_1, b_2 \in [0,1]$, we derive \eqref{eq:twoineq}. Finally, if \eqref{eq:Ninequa} holds for $N$, we obtain
\begin{align*}
        \prod_{i=1}^{N+1} (1-b_i) \leq (1-b_{N+1}) (1- \sum_{i=1}^N b_i) \leq 1- \sum_{i=1}^{N+1} b_i,
    \end{align*}
where we used \eqref{eq:Ninequa} for $N=2$ on the last inequality above.
\end{proof}

\begin{lemma}[Hardy bound on the magnetic curvature term]\label{lem:af_three_body} 
	We have for any $u \in L^2(\R^2)$ and $R\geq 0$, that
	\[
	\int_{\R^2} \left| \bAR[|u|^2] \right|^2 |u|^2 
	\le \frac{3}{2} \|u\|_{2}^4 \int_{\R^2} \left| \nabla |u| \right|^2
	\le \frac{3}{2} \|u\|_{2}^4 \int_{\R^2} \left| \nabla |u| \right|^2.
	\]
\end{lemma}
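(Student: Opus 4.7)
The plan is first to reduce the estimate to its pointlike ($R=0$) counterpart, and then to combine the two-dimensional Hardy--Littlewood--Sobolev (weak Young) inequality with Gagliardo--Nirenberg interpolation. Since $|\bx|_R := \max(R,|\bx|) \ge |\bx|$, one has the pointwise bound
\[
    \bigl|\bAR[\varrho](\bx)\bigr| \;\le\; \int_{\R^2} \frac{|\bx-\by|}{|\bx-\by|_R^2}\, \varrho(\by)\, d\by \;\le\; \int_{\R^2} \frac{\varrho(\by)}{|\bx-\by|}\, d\by \;=:\; \bigl(\varrho * |\cdot|^{-1}\bigr)(\bx),
\]
for any non-negative $\varrho$, which we take to be $\varrho := |u|^2$. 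It therefore suffices to control $\int_{\R^2} (\varrho * |\cdot|^{-1})^2\, \varrho$ by the claimed right-hand side.

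Next I would apply Cauchy--Schwarz to the outer integral to split it as
\[
    \int_{\R^2} \bigl(\varrho * |\cdot|^{-1}\bigr)^2 \varrho \;\le\; \bigl\|\varrho * |\cdot|^{-1}\bigr\|_4^2\, \|\varrho\|_2,
\]
and then invoke the weak Young convolution inequality (used repeatedly in Section~\ref{sec:manybody}, with $|\cdot|^{-1} \in L^{2,w}(\R^2)$) to obtain $\bigl\|\varrho * |\cdot|^{-1}\bigr\|_4 \lesssim \|\varrho\|_{4/3}$; equivalently, the 2D Hardy--Littlewood--Sobolev bound for the Riesz potential of order one. Hence
\[
    \int_{\R^2} \bigl|\bAR[|u|^2]\bigr|^2 |u|^2 \;\lesssim\; \|\varrho\|_{4/3}^{2}\, \|\varrho\|_2 \;=\; \||u|\|_{8/3}^{4}\, \||u|\|_4^{2}.
\]

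The remaining step is to convert these $L^p$ norms of $|u|$ into $\|u\|_2$ and $\|\nabla|u|\|_2$ via Gagliardo--Nirenberg interpolation in $\R^2$, applied to $v := |u| \in H^1(\R^2)$: for every $p \ge 2$,
\[
    \|v\|_p^{2} \;\le\; C_p\, \|v\|_2^{4/p}\, \|\nabla v\|_2^{\,2-4/p}.
\]
Taking $p = 8/3$ and $p = 4$ and multiplying the resulting bounds produces the clean right-hand side $\|u\|_2^{4}\, \|\nabla|u|\|_2^{\,2}$, establishing the inequality up to some finite constant.

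The main obstacle will be tracking the constants sharply enough to recover the precise prefactor $\tfrac{3}{2}$: the generic HLS and Gagliardo--Nirenberg constants do not collapse cleanly to $\tfrac{3}{2}$ when composed. If that specific numerical factor is actually needed for later use, the natural alternative is to apply Riesz symmetric rearrangement first, which does not decrease the left-hand side (for the cubic integral against $|\cdot|^{-1}$) and does not increase the right-hand side (by P\'olya--Szeg\H{o}), thereby reducing to a radially decreasing $\varrho^{*}$. For such $\varrho^{*}$ the potential is explicit, $\bA[\varrho^{*}](\bx) = (\bx^{\perp}/|\bx|^2)\,M(|\bx|)$ with $M(r) = \int_{B(0,r)} \varrho^{*}$, so the bound reduces to a one-dimensional optimization in the variable $M$, from which the sharp constant can be read off. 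Since the inequality is only used qualitatively (to absorb the three-body contribution against the kinetic energy), the HLS+Gagliardo--Nirenberg route with a non-sharp constant should already suffice for all downstream applications in the paper.
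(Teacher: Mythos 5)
Your proof is correct in structure and reaches the right inequality up to a universal constant, but it takes a genuinely different route from the paper. The paper's proof (following \cite[Lemma~A.1]{LunRou-15}) expands the left-hand side as a triple integral over $\R^6$, symmetrizes over cyclic permutations of $(\bx,\by,\bz)$, applies the three-body pointwise inequality \eqref{ine:HAR} to pass to $C/\rho^2$ with $\rho^2 = |\bx-\by|^2+|\by-\bz|^2+|\bz-\bx|^2$, and then invokes the Hardy inequality in $\R^6$ (where it holds with constant $(2/(d-2))^2$, $d=6$). This is exactly what makes the constant $3/2$ drop out cleanly. You instead stay on $\R^2$: pointwise bound $|\bAR[\varrho]|\le \varrho*|\cdot|^{-1}$, then Cauchy--Schwarz to $\|\varrho*|\cdot|^{-1}\|_4^2\|\varrho\|_2$, weak Young ($|\cdot|^{-1}\in L^{2,w}$) to get $\|\varrho\|_{4/3}^2\|\varrho\|_2$, and Gagliardo--Nirenberg to convert into $\|u\|_2^4\|\nabla|u|\|_2^2$. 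Each step is valid, and since the lemma is invoked in Proposition~\ref{prop:minimizers} and Lemma~\ref{lem:s3b} only through ``$\lesssim$'', a non-sharp constant suffices. You correctly flag that your composition of constants will not recover $3/2$, which the literal statement asserts. Two remarks on your proposed rearrangement repair: (i) as written it is only a sketch, and the naive step of bounding $M(r)\le\|u\|_2^2$ and invoking a 2D Hardy inequality would \emph{fail}, since there is no $\int|v|^2/|\bx|^2\lesssim\int|\nabla v|^2$ in dimension two; what saves the day is that $M(r)\sim \pi r^2\varrho^*(0)$ near the origin, so the quantity $\int M(r)^2\varrho^*/r^2$ is genuinely better than a Hardy term — but this structure would still need to be exploited carefully. (ii) Far more economical is precisely the paper's symmetrization-to-$\R^6$ trick, which is what turns an obstructed 2D Hardy problem into an unobstructed 6D one. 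It is worth internalizing this device: it is the reason the paper goes ``up'' to configuration space rather than staying on the one-body side.
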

\begin{proof}
    The proof follows the one of \cite[Lemma A.1]{LunRou-15}, but with $\nabla^{\perp}w_R$ instead of $\nabla^{\perp}w_0$.
	We have
	\begin{align*}
		& \int_{\R^2} \left| \bAR[|u|^2](\bx) \right|^2 |u(\bx)|^2 \,\mathrm{d}\bx 
		= \iiint_{\R^6} \frac{\bx-\by}{|\bx-\by|_R^2} \cdot \frac{\bx-\bz}{|\bx-\bz|_R^2} |u(\bx)|^2 |u(\by)|^2 |u(\bz)|^2 \,\mathrm{d}\bx\mathrm{d}\by\mathrm{d}\bz \\
		&\leq \frac{1}{6} \int_{\R^6} \frac{1}{\rho(X)^2} \left| |u|^{\otimes 3} \right|^2 \mathrm{d}X 
		\le \frac{1}{2} \int_{\R^6} \left| \nabla_X |u|^{\otimes 3} \right|^2 \mathrm{d}X
		= \frac{3}{2} \int_{\R^2} \bigl| \nabla|u(\bx)| \bigr|^2 \mathrm{d}\bx
		\left( \int_{\R^2} |u(\bx)|^2 \mathrm{d}\bx \right)^2.
	\end{align*}
	Here, we used the symmetry and that 
	\begin{equation}\label{ine:HAR}
		0\leq\frac{\bx-\by}{|\bx-\by|_R^2} \cdot \frac{\bx-\bz}{|\bx-\bz|_R^2}+\frac{\by-\bx}{|\by-\bx|_R^2} \cdot \frac{\by-\bz}{|\by-\bz|_R^2}+\frac{\bz-\by}{|\bz-\by|_R^2} \cdot \frac{\bz-\bx}{|\bz-\bx|_R^2}\leq \frac{C}{\rho^2(\bx,\by,\bz)},
	\end{equation}
	as shown in \cite[Proof of Lemma 2.4, inequality (2.13)]{LunRou-15} with 
    $$\rho^2(\bx,\by,\bz):=|\bx-\by|^2+|\by-\bz|^2+|\bz-\bx|^2$$ 
    and where we concluded applying Hardy's inequality \cite[Lemma 2.5]{LunRou-15}.
\end{proof}
\begin{lemma}[Three-body term]\label{lem:3body}
    We have that, as operators on $L^{2}_{\mathrm{sym}}(\mathbb{R}^{6})$,
	\begin{equation}\label{eq:W3C}
		0\leq \nabla^{\perp}w_{R}(\bx-\by)\cdot\nabla^{\perp}w_{R}(\bx-\bz)
            \le -C\Delta_\sx,
	\end{equation}
    for a univeral constant $C>0$.
\end{lemma}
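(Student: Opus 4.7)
The plan is to treat the statement as a quadratic form inequality on $L^2_{\mathrm{sym}}(\R^6)$ and to combine two ingredients: the two-sided pointwise bound on the symmetrized three-body kernel already recorded as inequality~\eqref{ine:HAR} in the proof of Lemma~\ref{lem:af_three_body}, together with a three-body Hardy inequality that dominates $\rho^{-2}$ by $-\Delta_\sx$.

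First I would rewrite the integrand using $\ba^{\perp}\cdot\bb^{\perp}=\ba\cdot\bb$ for $\ba,\bb\in\R^2$ as
\[
V(\bx,\by,\bz) := \nabla^{\perp}w_R(\bx-\by)\cdot\nabla^{\perp}w_R(\bx-\bz)=\frac{(\bx-\by)\cdot(\bx-\bz)}{|\bx-\by|_R^2\,|\bx-\bz|_R^2},
\]
which is manifestly invariant under the transposition $\by\leftrightarrow\bz$. Since $|\Psi|^2$ is fully $S_3$-invariant for every $\Psi\in L^2_{\mathrm{sym}}(\R^6)$, one has $\langle\Psi,V\Psi\rangle=\langle\Psi,V_{\mathrm{sym}}\Psi\rangle$, where $V_{\mathrm{sym}}$ denotes the cyclic-symmetrization of $V$ over the three choices of ``root'' particle. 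The pointwise estimate~\eqref{ine:HAR} then reads exactly $0\le 3\,V_{\mathrm{sym}}(\bx,\by,\bz)\le C_0/\rho(\bx,\by,\bz)^2$ with $\rho^2=|\bx-\by|^2+|\by-\bz|^2+|\bz-\bx|^2$, which immediately yields the lower bound of the lemma on $L^2_{\mathrm{sym}}(\R^6)$.

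For the upper bound I would invoke the three-body Hardy inequality~\cite[Lemma~2.5]{LunRou-15}, namely $\int_{\R^6} |\Psi|^2/\rho^2 \le C_1 \int_{\R^6} |\nabla_\sx\Psi|^2$ for all $\Psi\in H^1(\R^6)$, and combine it with the pointwise bound $V_{\mathrm{sym}}\le C_0/(3\rho^2)$ to conclude that $V\le C(-\Delta_\sx)$ as quadratic forms on $L^2_{\mathrm{sym}}(\R^6)$ with $C=C_0 C_1/3$. The one delicate point I anticipate is verifying that inequality~\eqref{ine:HAR} does persist with the regularized denominators $|\cdot|_R^{-2}$ in place of $|\cdot|^{-2}$; I expect the argument of~\cite[Proof of Lemma~2.4]{LunRou-15} to carry over essentially verbatim, since the sign structure of the three-body kernel is preserved by the $R$-regularization and any bookkeeping of constants in the upper estimate only uses that $|\cdot|_R^{-2}\le|\cdot|^{-2}$ off the diagonal. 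This verification is the one step that I think requires genuine computation rather than a direct invocation of already established results.
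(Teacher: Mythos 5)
Your proposal is correct and takes essentially the same route as the paper, which simply cites~\cite[Lemma~2.4]{LunRou-15} and notes that the argument parallels that of Lemma~\ref{lem:af_three_body}: exploit the $S_3$-invariance of $|\Psi|^2$ to replace the kernel by its symmetrization, invoke the two-sided pointwise bound~\eqref{ine:HAR}, and close the upper bound with the three-body Hardy inequality~\cite[Lemma~2.5]{LunRou-15}. One correction to your closing remark: inequality~\eqref{ine:HAR} is already stated in the appendix with the regularized denominators $|\cdot|_R^{-2}$, attributed to~\cite[Proof of Lemma~2.4, inequality~(2.13)]{LunRou-15}, so there is nothing left for you to verify on this point; moreover, your proposed heuristic that ``the bookkeeping only uses $|\cdot|_R^{-2}\le|\cdot|^{-2}$'' would not suffice on its own, since the three summands of the symmetrized kernel do not all share the same sign (one is negative whenever the triangle $\{\bx,\by,\bz\}$ is obtuse), so a monotone replacement of the denominators controls neither the nonnegativity nor the upper estimate in~\eqref{ine:HAR} term-by-term. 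A self-contained verification would have to redo the computation of~\cite{LunRou-15} rather than argue by monotonicity.
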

The proof can be found in \cite[Lemma 2.4]{LunRou-15}. It is follows the same lines as the proof of Lemma \ref{lem:af_three_body}.

\subsection{The CSS/afP model}\label{sec:appendix-CSS}

Some fundamental questions concerning stability for the CSS/afP functional \eqref{def:Eaf} have been answered in
\cite{AtaLunNgu-24}, 
and for convenience we gather the results referred to here.

For any $\beta,\gamma \in \R$, $V \in L^1_\loc(\R^2;\R)$,
we define the ground-state energy (per particle)
\begin{equation}\label{eq:AFP-gse}
    E_{\beta,\gamma,V} := 
		\inf \left\{ \cE_{\beta,\gamma,V}[u] : 
		u \in H^1(\R^2;\C), 
		\int_{\R^2} |V| |u|^2 < \infty,
		\int_{\R^2} |u|^2 = 1 \right\}.
\end{equation}

The Euler--Lagrange equation for a ground state $u$ of \eqref{eq:AFP-gse} 
with finite energy is
\begin{equation}\label{eq:CSS-EL}
\Big[-\left(\nabla + i\beta\bA[|u|^2]\right)^2 
- 2\beta \nabla^\perp w_0 * \bigl[\beta\bA[|u|^2]|u|^2 + \bJ[u]\bigr] 
+ 2\gamma|u|^2 + V\Big] u = \lambda u,
\end{equation}
where
	$\bJ[u] := \frac{i}{2}(u\nabla \bar{u} - \bar{u}\nabla u)$
denotes the current of $u$, and $\lambda = \lambda(u) \in \R$ is a constant:
\begin{equation}\label{eq:CSS-EL-lambda}
    \lambda = 2E_{\beta, \gamma, V} - \int_{\R^2} \left[|\nabla u|^2 + V|u|^2 - \beta^2 \left|\bA\left[|u|^2\right]\right|^2|u|^2\right];
\end{equation}
see \cite[Appendix]{CorLunRou-17} and \cite[Remark~3.11]{AtaLunNgu-24}.
For $\beta=0$, \eqref{eq:CSS-EL} reduces to the \emph{local}, cubic nonlinear Schr\"odinger (NLS) equation 
associated to the well-known Gross-Pitaevskii (NLS) functional. For $\gamma > 0$ it is defocusing and for $\gamma < 0$ focusing.

For any $\beta \in \R$, 
we define the corresponding critical coupling
\begin{align}\label{eq:defgamma}
	\gamma_*(\beta) := \inf \left\{ \frac{ \cE_{\beta,0,0}[u]}{\int_{\R^2} |u|^4}
	: u \in H^1(\R^2;\C), \int_{\R^2} |u|^2 = 1 \right\},
\end{align}
and $\CLGN := \gamma_*(0)$ the Ladyzhenskaya--Gagliardo--Nirenberg $H^1 \hookrightarrow L^4$ embedding constant.
We also define the ``nonlinear Landau level'' (NLL) for any $\beta \neq 0$:
\begin{equation} \label{eq:defNLL}
		\NLL(\beta) := \left\{ u \in H^1(\R^2;\C) :
			\cE_{\beta,0,0}[u] = 2\pi|\beta|\int_{\R^2} |u|^4,
			\int_{\R^2} |u|^2 = 1 \right\}.
\end{equation}

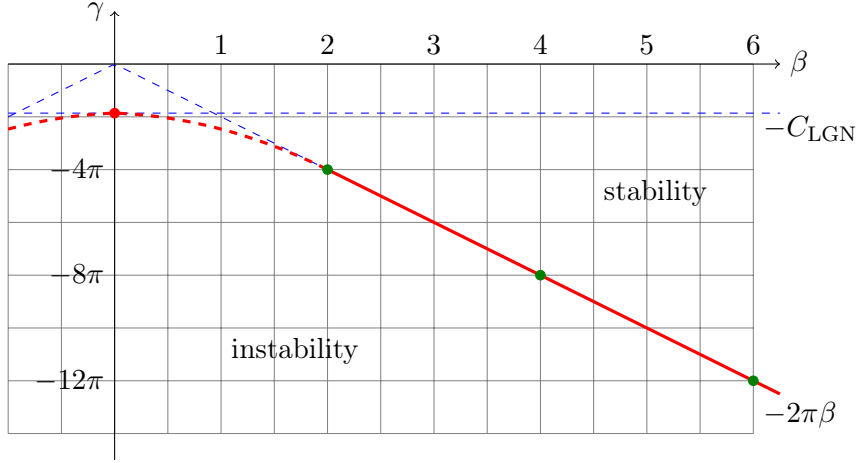
\begin{figure}
    \begin{tikzpicture}[domain=-2:12,scale=0.7]
	    \draw[very thin,color=gray] (-2,-7) grid (12,0);
	    \draw[->] (-2,0) -- (12.5,0) node[right] {$\beta$};
	    \draw[->] (0,-7.5) -- (0,1) node[left] {$\gamma$};
	    \draw[thin,dashed,color=blue,domain=-2:12.5,samples=50] plot (\x,{-0.93});
	    \draw[thin,dashed,color=blue,domain=-2:4,samples=50] plot (\x,{-0.5*abs(\x)});
	    \draw[very thick,dashed,color=red,domain=-2:4,samples=50] plot (\x,{-0.5*abs(\x) - 0.93*(4-abs(\x))^2/16});
	    \draw[very thick,color=red,domain=4:12.5,samples=50] plot (\x,-\x/2);
		\draw [fill,color=red] (0,-0.93) circle [radius=0.09];
		\draw [fill,color=darkgreen] (4,-2) circle [radius=0.09];
		\draw [fill,color=darkgreen] (8,-4) circle [radius=0.09];
		\draw [fill,color=darkgreen] (12,-6) circle [radius=0.09];
	    \node[above] at (2,0) {$1$};
	    \node[above] at (4,0) {$2$};
	    \node[above] at (6,0) {$3$};
	    \node[above] at (8,0) {$4$};
	    \node[above] at (10,0) {$5$};
	    \node[above] at (12,0) {$6$};
	    \node[left] at (0,-2) {$-4\pi$};
	    \node[left] at (0,-4) {$-8\pi$};
	    \node[left] at (0,-6) {$-12\pi$};
	    \node[below right] at (12,-0.8) {$-\CLGN$};
	    \node[below right] at (12,-6.2) {$-2\pi\beta$};
		\node [below right] at (9,-2) {stability};
		\node [below right] at (2,-5) {instability};
    \end{tikzpicture}
\caption{Sketch of $-\gamma_*(\beta)$ with exact $\NLL$s at $\beta \in 2\Z \setminus \{0\}$.}
\label{fig:stability}
\end{figure}

The following two theorems summarize our present knowledge concerning the stability of the problem \eqref{eq:AFP-gse}.

\begin{theorem}[Magnetic stability; {\cite[Theorem~2]{AtaLunNgu-24}}] \label{thm:magneticstability}
The following holds:
\begin{enumerate} 
\item\label{itm:mstab-gamma}
	We have that $\beta \mapsto \gamma_*(\beta)$ is a Lipschitz function and satisfies
	\begin{equation}\label{eq:mag-bermuda}
        \gamma_*(\beta) > \max\{\CLGN, 2\pi\beta\}
        \quad \text{for every $0<\beta <2$,}
	\end{equation}
    and 
	\begin{equation}\label{eq:mag-susy}
        \gamma_*(\beta) = 2\pi\beta 
        \quad \text{for every $\beta \geq 2$.}
	\end{equation}
\item\label{itm:mstab-mini}
	Any minimizer of \eqref{eq:defgamma}, if it exists, is smooth. For small enough $0 < \beta < 2$, there exists a minimizer. For $\beta \geq 2$,
    minimizers exist if and only if $\beta \in 2 \N$, 
	and are of the form 
	\begin{equation}\label{eq:mag-solution}
		u = u_{P,Q} := \sqrt{\frac{2 }{\pi \beta}} \, \frac{\overline{P' Q - P Q'}}{|P|^2 + |Q|^2} ,
	\end{equation}
	where $P,Q$ are two coprime and linearly independent complex polynomials satisfying 
	$$
		\max (\deg(P),\deg(Q)) = \frac{\beta}{2}.
	$$
\item\label{itm:mstab-symm}
	Finally, $u_{P,Q} = u_{\tilde{P},\tilde{Q}}$ for two such pairs of polynomials $(P,Q),(\tilde{P},\tilde{Q})$ if and only if
	$(P,Q)=\Lambda(\tilde{P},\tilde{Q})$ for some constant $\Lambda \in \R^+ \times \sSU(2)$.
\end{enumerate}
\end{theorem}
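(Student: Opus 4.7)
The plan is to prove the three parts in sequence by exploiting the self-dual (Bogomol'nyi) structure of the functional and a Liouville-type PDE reduction.

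\medskip
\noindent \textbf{Bogomol'nyi factorization.} The starting point is the 2D Pauli identity: for any magnetic potential $\bA$ with field $B = \Curl\bA$, and with $D^\pm_\bA := (-\im\partial_1 + A_1) \pm \im(-\im\partial_2 + A_2)$, one has
\begin{equation*}
    \int_{\R^2} |(-\im\nabla + \bA)u|^2 = \int_{\R^2} |D^+_\bA u|^2 - \int_{\R^2} B|u|^2.
\end{equation*}
Applied with $\bA = \beta\bA[|u|^2]$, where $B = 2\pi\beta|u|^2$, this gives
$$\cE_{\beta,0,0}[u] = \int_{\R^2} |D^+_{\beta\bA[|u|^2]} u|^2 + 2\pi\beta \int_{\R^2} |u|^4,$$
so $\gamma_*(\beta) \geq 2\pi\beta$ for all $\beta \geq 0$, with saturation iff $u$ is self-dual, $D^+ u = 0$. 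This is what yields \eqref{eq:mag-susy} as a lower bound and provides the structural identity around which the rest is organized.

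\medskip
\noindent \textbf{Self-dual solutions via Liouville reduction.} Choosing the Coulomb gauge $\bA[|u|^2] = \nabla^\perp h$, with $h = \log|\cdot| * |u|^2$ satisfying $\Delta h = 2\pi|u|^2$, the equation $D^+ u = 0$ reads $\partial_{\bar z}(u e^{-\beta h}) = 0$; hence $u = e^{\beta h} v$ with $v$ holomorphic. Plugging back into $\Delta(\beta h) = 2\pi\beta |u|^2$ gives the Liouville-type equation
\begin{equation*}
    \Delta(\beta h) = 2\pi\beta e^{2\beta h}|v|^2, \qquad \int_{\R^2} e^{2\beta h}|v|^2 = 1.
\end{equation*}
Classical Liouville theory (rational maps $\mathbb{CP}^1 \to \mathbb{CP}^1$) together with finite-mass normalization forces $v$ to come from a pair of coprime, linearly independent polynomials $(P,Q)$ of max degree $\beta/2$, and one verifies by direct substitution that $u_{P,Q}$ in \eqref{eq:mag-solution} is a smooth $L^2$-normalized solution. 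For $\beta \geq 2$ and $\beta \in 2\N$ we thereby obtain both the infimum $\gamma_*(\beta) = 2\pi\beta$ and the explicit minimizer family; for $\beta \notin 2\N$ there is no admissible polynomial degree, so the bound is attained only in the limit (justifying the ``iff $\beta \in 2\N$'' claim for existence). For $0 < \beta < 2$ the degree condition forces $v$ constant, but a Gauss--Bonnet/integration-by-parts computation shows no normalized solution exists, so the bound is strict: $\gamma_*(\beta) > 2\pi\beta$.

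\medskip
\noindent \textbf{Lipschitz regularity and the strict LGN inequality.} For each fixed $u$ with $\int|u|^2 = 1$, the map $\beta \mapsto \cE_{\beta,0,0}[u]/\int |u|^4$ is quadratic in $\beta$ with coefficients controlled by $\int|\nabla u|^2$, $\|\bA[|u|^2]\|_\infty\|u\|_2$, and $\int|\bA[|u|^2]|^2|u|^2$; the latter is uniformly bounded on a minimizing sequence by Lemma~\ref{lem:af_three_body}. Infimizing preserves 1-Lipschitzness on compact intervals. For $\gamma_*(\beta) > \CLGN$ when $\beta > 0$, observe that the LGN constant is attained on the unique (up to symmetries) Aubin--Talenti-type profile $u_0$ at $\beta=0$; we compute $\cE_{\beta,0,0}[u_0]/\int |u_0|^4 = \CLGN + \beta \cdot c_1[u_0] + \beta^2 c_2[u_0]$ and use that the angular-momentum term $c_1[u_0] = 2\Re\int \bar u_0 (-\im\nabla u_0)\cdot \bA[|u_0|^2] = 0$ by radial symmetry but $c_2[u_0] > 0$ strictly, giving a strictly positive lift. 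A stability/compactness argument then shows $\gamma_*(\beta) > \CLGN$ for all $\beta > 0$.

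\medskip
\noindent \textbf{Existence for small $\beta$ and symmetry classification.} With $\gamma_*(\beta)$ lying strictly above the LGN threshold, a minimizing sequence $(u_n)$ cannot vanish or dichotomize in the sense of Lions' concentration-compactness; the only remaining obstruction is translation, removed by normalization. Smoothness of minimizers follows by standard bootstrap on the Euler--Lagrange equation \eqref{eq:CSS-EL}. For the symmetry statement, invariance $u_{P,Q} = u_{\Lambda(P,Q)}$ under $\Lambda \in \R_+ \times \sSU(2)$ is a direct calculation using the Wronskian $P'Q-PQ'$ and the Hermitian denominator $|P|^2+|Q|^2$. The converse (rigidity) recovers $(P,Q)$ up to $\Lambda$ from $u$ by reading off the zeros of the numerator and the normalization of the denominator, leveraging unique continuation. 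The main obstacle I expect is the strict lift $\gamma_*(\beta) > \CLGN$, which is delicate precisely because the magnetic self-interaction vanishes at leading order on radial profiles; the argument must carefully extract a strictly positive second-order contribution and combine it with concentration-compactness to preserve strictness in the passage to the infimum.
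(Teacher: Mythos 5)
This theorem is imported verbatim from \cite[Theorem~2]{AtaLunNgu-24} and is not proved in the present paper, so there is no internal proof against which to compare. That said, your sketch correctly identifies the architecture one would use (Bogomol'nyi factorization of the magnetic kinetic energy followed by a Liouville-type reduction of the self-dual equation), and the forward direction of part~(3) (the $\R^+\times\sSU(2)$ invariance, via the Wronskian and the $|P|^2+|Q|^2$ denominator) is on target. But as written the proposal has real gaps. First, a sign inconsistency: from your stated identity $\int|(-\im\nabla+\bA)u|^2 = \int|D^+_\bA u|^2 - \int B|u|^2$, substituting $B=2\pi\beta|u|^2$ produces a \emph{minus}, not the plus you then display; the lower bound $\cE_{\beta,0,0}[u]\ge 2\pi\beta\int|u|^4$ for $\beta>0$ comes from the other supersymmetric square $D^-_\bA$, for which the curvature enters with a plus sign. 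Second, and more seriously, your Liouville argument only produces exact minimizers when $\beta/2\in\N$. For $\beta\ge 2$ with $\beta/2\notin\N$ the statement still asserts $\gamma_*(\beta)=2\pi\beta$ (with the infimum not attained), and this requires exhibiting an explicit trial sequence $u_n$ with $\cE_{\beta,0,0}[u_n]/\int|u_n|^4\to 2\pi\beta$. The phrase ``the bound is attained only in the limit'' asserts exactly this without supplying it; that upper-bound construction is where the actual work is.

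The same issue afflicts the strict inequality in \eqref{eq:mag-bermuda}: non-existence of normalized self-dual solutions for $0<\beta<2$ does not by itself rule out $\gamma_*(\beta)=2\pi\beta$ being approached without attainment. The quotient in \eqref{eq:defgamma} is invariant under the mass-preserving dilation $u\mapsto\lambda u(\lambda\,\cdot)$, so a compactness argument is needed to turn ``no minimizer'' into ``strict inequality,'' and it must be carried out on both fronts (the $\CLGN$ threshold near $\beta=0$ and the $2\pi\beta$ threshold near $\beta=2$). You flag the strict LGN lift as ``the main obstacle,'' which is fair self-assessment, but the proposal does not close it; nor does the observation that the linear-in-$\beta$ correction on the radial LGN profile vanishes suffice, since that only controls one trial function, not the infimum. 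Finally, ``infimizing preserves Lipschitzness'' is not automatic for an infimum of quadratics in $\beta$ unless the quadratic coefficients are uniformly bounded over admissible $u$, and those coefficients involve $\int|\nabla u|^2/\int|u|^4$, which is not a priori bounded; one must restrict to (near-)minimizing sequences on which the Hardy-type bound of Lemma~\ref{lem:af_three_body} provides control before concluding.
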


\begin{theorem}[Stability for the almost-bosonic anyon gas; {\cite[Theorem~3]{AtaLunNgu-24}}]\label{thm:stability}
    Let $\beta \in \R$ and the potential $V$ be smooth and bounded from below. 
    The critical coupling $\gamma_*(\beta)$ defined in \eqref{eq:defgamma}
    is exactly the critical value for stability of 
    the system \eqref{eq:AFP-gse} in the sense that 
    $E_{\beta,\gamma,V} > -\infty$ if $\gamma \ge -\gamma_*(\beta)$
    and $E_{\beta,\gamma,V} = -\infty$ if $\gamma < -\gamma_*(\beta)$.
    If $V=0$ 
    then $E_{\beta,-\gamma_*(\beta),0}=0$ for all $\beta$, 
    and if furthermore
    $\beta\ge 2$ and $-\gamma=\gamma_*(\beta)$, 
    which in this case equals $2\pi\beta$,
    then zero-energy ground states exist if and only if $\beta \in 2\N$ 
    and are then given exactly by the $2\beta$-dimensional soliton manifold
	\begin{multline}\label{eq:NLLthm}
		\NLL(\beta=2n) = \Biggl\{ u = \frac{1}{\sqrt{\pi n}} \, \frac{\overline{P' Q - P Q'}}{|P|^2 + |Q|^2} : \ 
			\text{$P,Q$ coprime and linearly independent}\\
			\text{complex polynomials s.t. $\max\{\deg P,\deg Q\}=n$} 
			\Biggr\},
	\end{multline}
	whereas $\NLL(\beta) = \emptyset$ if $\beta \notin 2\Z$.
	Finally, for any $\beta,\gamma \in \R$,
	\begin{equation}\label{eq:AFP-KLT}
		E_{\beta,\gamma,V} \ge E^{\rm KLT}_{\beta,\gamma,V} 
		:= \inf \left\{
		\int_{\R^2} \left[ (\gamma_*(\beta)+\gamma) \varrho^2 + V\varrho \right]
		: \varrho \in L^2(\R^2;\R_+), \ \int_{\R^2} \varrho = 1
		\right\}.
	\end{equation}
\end{theorem}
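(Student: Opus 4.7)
The plan is to deduce everything from the definition of $\gamma_*(\beta)$ in \eqref{eq:defgamma}, coupled with the scale covariance of $\cE_{\beta,0,0}$ under $u_\lambda(\bx):=\lambda u(\lambda\bx)$, and then invoke Theorem~\ref{thm:magneticstability} for the characterization of zero-energy minimizers.

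First I would derive the KLT-type lower bound \eqref{eq:AFP-KLT}, which simultaneously yields the ``stable side'' $E_{\beta,\gamma,V}>-\infty$ whenever $\gamma\ge -\gamma_*(\beta)$. By definition of $\gamma_*(\beta)$, for every normalized $u\in H^1(\R^2;\C)$ we have
\begin{equation*}
\cE_{\beta,0,0}[u]\ \ge\ \gamma_*(\beta)\int_{\R^2}|u|^4,
\end{equation*}
hence
\begin{equation*}
\cE_{\beta,\gamma,V}[u]\ \ge\ \bigl(\gamma_*(\beta)+\gamma\bigr)\int_{\R^2}|u|^4+\int_{\R^2}V|u|^2\ \ge\ E^{\rm KLT}_{\beta,\gamma,V},
\end{equation*}
by optimizing the right-hand side over $\varrho=|u|^2\in L^2(\R^2;\R_+)$ normalized to one. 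Taking the infimum over $u$ proves \eqref{eq:AFP-KLT}, and since $V$ is bounded below and $\gamma_*(\beta)+\gamma\ge 0$ the right-hand side is finite.

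Next I would handle the ``unstable side'' $\gamma<-\gamma_*(\beta)\Rightarrow E_{\beta,\gamma,V}=-\infty$ by a rescaling argument. A direct change of variables shows that both the magnetic kinetic energy $\int|(-\im\nabla+\beta\bA[|u|^2])u|^2$ and the quartic term $\int|u|^4$ scale as $\lambda^2$ under $u\mapsto u_\lambda$, while $\int V|u_\lambda|^2=\int V(\bx/\lambda)|u|^2$ stays bounded (by smoothness of $V$ and dominated convergence) as $\lambda\to\infty$. Pick $u$ normalized with $\cE_{\beta,0,0}[u]+\gamma\int|u|^4<0$, whose existence follows from the strict inequality $-\gamma>\gamma_*(\beta)$ and the definition of $\gamma_*$ as an infimum. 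Then
\begin{equation*}
\cE_{\beta,\gamma,V}[u_\lambda]\ =\ \lambda^2\Bigl(\cE_{\beta,0,0}[u]+\gamma\int_{\R^2}|u|^4\Bigr)+\int_{\R^2}V(\bx/\lambda)|u(\bx)|^2\,\dd\bx\ \xrightarrow{\lambda\to\infty}\ -\infty.
\end{equation*}
The equality $E_{\beta,-\gamma_*(\beta),0}=0$ then follows: the KLT bound gives $\ge 0$, while a minimizing sequence for $\gamma_*(\beta)$ (or minimizers of Theorem~\ref{thm:magneticstability}~\eqref{itm:mstab-mini} when they exist) realizes zero.

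Finally, for the characterization of zero-energy ground states in the critical regime $\beta\ge 2$, $\gamma=-2\pi\beta$ (where $\gamma_*(\beta)=2\pi\beta$ by Theorem~\ref{thm:magneticstability}~\eqref{itm:mstab-gamma}), I would observe that $u$ is a zero-energy ground state of $\cE_{\beta,-\gamma_*(\beta),0}$ if and only if $u$ is a normalized minimizer of the ratio \eqref{eq:defgamma} defining $\gamma_*(\beta)$. Theorem~\ref{thm:magneticstability}~\eqref{itm:mstab-mini}--\eqref{itm:mstab-symm} then asserts that such minimizers exist exactly when $\beta\in 2\N$ and are precisely the functions $u_{P,Q}$ in \eqref{eq:mag-solution} with $\max\{\deg P,\deg Q\}=\beta/2$, giving the soliton manifold \eqref{eq:NLLthm}; when $\beta\notin 2\Z$, no minimizer exists and so $\mathrm{NLL}(\beta)=\emptyset$. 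The main (non-routine) obstacle sits inside Theorem~\ref{thm:magneticstability} which we are allowed to cite; on top of it, the present theorem reduces to the scaling/definitional manipulations above.
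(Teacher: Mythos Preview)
The paper does not prove this theorem; it is quoted verbatim from \cite[Theorem~3]{AtaLunNgu-24} and stated in the appendix without proof, so there is no ``paper's own proof'' to compare against. Your sketch is the natural one and is essentially correct: the KLT bound \eqref{eq:AFP-KLT} is immediate from the definition of $\gamma_*(\beta)$, instability for $\gamma<-\gamma_*(\beta)$ follows from the $\lambda^2$-scaling of $\cE_{\beta,0,0}$ and $\int|u|^4$, and the characterization of zero-energy states reduces to Theorem~\ref{thm:magneticstability}.

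Two small points worth tightening. First, for $E_{\beta,-\gamma_*(\beta),0}\le 0$ you do not need a minimizing sequence for the ratio: simply take any fixed normalized $u$ and send $\lambda\to 0^+$ in $u_\lambda$, since $\cE_{\beta,-\gamma_*(\beta),0}[u_\lambda]=\lambda^2\cE_{\beta,-\gamma_*(\beta),0}[u]\ge 0$. Second, your argument that $\NLL(\beta)=\emptyset$ for $\beta\notin 2\Z$ only covers $|\beta|\ge 2$ (via non-existence of ratio minimizers from Theorem~\ref{thm:magneticstability}\eqref{itm:mstab-mini}). For $0<|\beta|<2$ the mechanism is different: by Theorem~\ref{thm:magneticstability}\eqref{itm:mstab-gamma} one has $\gamma_*(\beta)>2\pi|\beta|$, so any $u\in\NLL(\beta)$ would satisfy $\cE_{\beta,0,0}[u]=2\pi|\beta|\int|u|^4<\gamma_*(\beta)\int|u|^4$, contradicting the definition of $\gamma_*(\beta)$.
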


Finally, the following compactness result was originally proved for $\gamma = 0$ in \cite[Proposition~3.7]{LunRou-15}.

\begin{proposition}[Existence of minimizers]\label{prop:minimizers}
    Let $\beta \in \R$
    and $\gamma > -\gamma_*(\beta)$, 
    and assume $V \ge 0$ sufficiently trapping in the sense that the Schr\"odinger operator $-\Delta + V$ has compact resolvent.
    Then there exists $u \in H^1(\R^2) \cap L^2_V$ with 
    $\int_{\mathbb{R}^2}\left|u\right|^2=1$ and 
    $\cE_{\beta,\gamma,V}[u]=E_{\beta,\gamma,V}$.
\end{proposition}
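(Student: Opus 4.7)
The plan is to apply the direct method of the calculus of variations, following the $\gamma=0$ argument of \cite[Proposition~3.7]{LunRou-15} and using the sharp stability constant $\gamma_*(\beta)$ of Theorem~\ref{thm:stability} to cover the full admissible range $\gamma > -\gamma_*(\beta)$. Let $(u_n) \subset H^1(\R^2) \cap L^2_V$ with $\|u_n\|_2 = 1$ be a minimizing sequence for $E_{\beta,\gamma,V}$. Since $\gamma+\gamma_*(\beta)>0$ and $\gamma_*(\beta)>0$, pick $\delta \in (0, 1+\gamma/\gamma_*(\beta))$. Applying the defining inequality \eqref{eq:defgamma} of $\gamma_*(\beta)$ to the normalized $u_n$ yields
\begin{equation*}
\cE_{\beta,\gamma,V}[u_n] \geq \delta\,\cE_{\beta,0,0}[u_n] + \bigl((1-\delta)\gamma_*(\beta)+\gamma\bigr)\|u_n\|_4^4 + \int_{\R^2} V|u_n|^2 \geq \delta\,\cE_{\beta,0,0}[u_n] + \int_{\R^2} V|u_n|^2,
\end{equation*}
while the diamagnetic inequality together with Lemma~\ref{lem:af_three_body} gives the a priori bound $\|\nabla u_n\|_2^2 \leq (2+3\beta^2)\,\cE_{\beta,0,0}[u_n]$ for normalized $u_n$. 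Hence $(u_n)$ is uniformly bounded in the form domain $Q := H^1(\R^2) \cap L^2_V$.

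The compact-resolvent hypothesis on $-\Delta+V$ makes $Q \hookrightarrow L^2(\R^2)$ compact, so along a subsequence $u_n \rightharpoonup u$ weakly in $Q$ and strongly in $L^2$; Gagliardo--Nirenberg interpolation with the uniform $H^1$-bound in 2D upgrades this to strong convergence in every $L^p(\R^2)$, $p \in [2,\infty)$. In particular $\|u\|_2 = 1$ (so $u$ is admissible), $\int|u_n|^4 \to \int|u|^4$, and by Fatou $\int V|u|^2 \leq \liminf \int V|u_n|^2$. For the magnetic kinetic term I expand
\begin{equation*}
    \cE_{\beta,0,0}[u_n] = \int|\nabla u_n|^2 + 2\beta\int \bA[|u_n|^2]\cdot\bJ[u_n] + \beta^2\int|\bA[|u_n|^2]|^2|u_n|^2.
\end{equation*}
The first summand is weakly lower semicontinuous. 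For the other two, the Hardy--Littlewood--Sobolev inequality applied to $\nabla^\perp\log|\cdot|\ast(|u_n|^2-|u|^2)$ combined with the strong $L^p$-convergence of $|u_n|^2$ yields strong convergence of $\bA[|u_n|^2]$ in suitable Lebesgue spaces; together with $\nabla u_n \rightharpoonup \nabla u$ weakly in $L^2$ and $u_n \to u$ strongly in $L^p$, this passes to the limit in both the cross-term and the quadratic self-energy, following the argument of \cite{LunRou-15} essentially verbatim. Taking $\liminf$ in $\cE_{\beta,\gamma,V}[u_n]$ then gives $\cE_{\beta,\gamma,V}[u] \leq E_{\beta,\gamma,V}$, so $u$ attains the infimum.

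The main obstacle is the weak lower semicontinuity of the magnetic kinetic term, which is neither convex nor local: $\bA[|u|^2]$ is itself a nonlocal quadratic functional of $u$. Without a trapping potential, a weakly convergent minimizing sequence could leak mass to infinity, so that $|u_n|^2$ and hence $\bA[|u_n|^2]$ would converge only weakly, breaking the argument. The compact-resolvent hypothesis on $-\Delta+V$ is exactly what is needed to upgrade weak $H^1$-compactness to strong $L^p$-compactness of the density $|u_n|^2$, and thence of the self-generated gauge field, allowing the nonlinear kinetic contribution to pass to the limit.
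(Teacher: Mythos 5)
Your proof is correct and follows the same direct-method strategy as the paper: use $\gamma_*(\beta)$ to secure an a priori $H^1\cap L^2_V$ bound on a minimizing sequence, use the compact resolvent of $-\Delta+V$ to obtain strong $L^2$ (hence $L^p$, $p<\infty$, by Gagliardo--Nirenberg) convergence of a subsequence, and pass to the limit. The only substantive difference is the treatment of lower semicontinuity for the non-local magnetic kinetic term: you expand $\cE_{\beta,0,0}[u_n]$ into Dirichlet, cross, and self-energy pieces and pass to the limit termwise, using Hardy--Littlewood--Sobolev to get strong $L^q$-convergence of $\bA[|u_n|^2]$ and weak--strong pairing for the cross term, whereas the paper keeps the magnetic Sobolev norm intact, establishes $\bA[|u_n|^2]u_n \to \bA[|u|^2]u$ strongly in $L^2$, and then invokes the duality formula $\|w\|_{L^2}=\sup_{\|v\|_{L^2}=1}|\langle w,v\rangle|$ to conclude $\|(\nabla+\im\beta\bA[|u|^2])u\|_{L^2}\le\liminf_n\|(\nabla+\im\beta\bA[|u_n|^2])u_n\|_{L^2}$. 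Both routes work; the duality trick avoids the exponent bookkeeping that your expansion requires. One cosmetic slip: the step $(1-\delta)\cE_{\beta,0,0}[u_n]\ge(1-\delta)\gamma_*(\beta)\|u_n\|_4^4$ needs $\delta\le 1$, so the admissible range should read $\delta\in\bigl(0,\min\{1,1+\gamma/\gamma_*(\beta)\}\bigr)$; this only matters when $\gamma>0$, where the coefficient $(1-\delta)\gamma_*(\beta)+\gamma$ is automatically nonnegative for $\delta\in(0,1]$ anyway, so nothing actually breaks.
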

\begin{proof}
    Take a minimizing sequence
    $$
	\left(u_n\right)_{n} \subset  H^1(\R^2) \cap L^2_V, \quad 
        \left\|u_n\right\|_{L^2}=1, \quad  
        \lim_{n \rightarrow \infty} \mathcal{E}_{\beta,\gamma,V}\left[u_n\right] = E_{\beta,\gamma,V}.
    $$
    Then, with the given conditions, 
    $$	\mathcal{E}_{\beta,\gamma,V}[u_n] \ge \min\left\{1,1 - \frac{\gamma}{\gamma_*(\beta)} \right\} \int_{\R^2} \left| (\nabla + \im\beta\bA[|u_n|^2]) u_n\right|^2
		\ge 0
    $$
    is uniformly bounded.
    It then follows from Lemma~\ref{lem:af_three_body}
    and the diamagnetic inequality that
    $$
	\int_{\R^2} |\bA[|u_n|^2]|^2 |u_n|^2
		\lesssim \int_{\R^2} \bigl|\nabla |u_n|\bigr|^2 \lesssim \int_{\mathbb{R}^2}\left|\left(\nabla+ \im\beta \mathbf{A}[|u_n|^2]\right) u_n\right|^2
    $$
    is uniformly bounded.
    Therefore, $\{u_n\}$ is uniformly bounded in both $L^2(\mathbb{R}^2)$, $L_V^2$, and $H^1(\mathbb{R}^2)$ (and hence in $L^p(\mathbb{R}^2)$ for any $p \in[2, \infty)$, by interpolation). By the Banach--Alaoglu theorem, there exists $u \in H^1(\R^2) \cap L^2_V$ and a weakly convergent subsequence (still denoted $u_n$) such that
    $$
        u_n \rightarrow u \text{ weakly in } L^2(\mathbb{R}^2) \cap L_V^2 \cap L^p(\mathbb{R}^2), \quad \nabla u_n \rightarrow \nabla u \text{ weakly in } L^2(\mathbb{R}^2) .
    $$
    Moreover, since $(-\Delta+V+1)^{-1 / 2}$ is compact we have that
    $$
        u_n = (-\Delta+V+1)^{-1 / 2}(-\Delta+V+1)^{1 / 2} u_n
    $$
    is actually strongly convergent (again extracting a subsequence). Hence
    \begin{align}
    \label{eq:strongL2convergence}
        u_n \rightarrow u \text { strongly in } L^2(\mathbb{R}^2).
    \end{align}  
    Also, $\mathbf{A}[|u_n|^2]$ converges pointwise a.e. to $\mathbf{A}\left[|u|^2\right]$ by weak convergence of $u_n$ in $L^p$, and
    $$
	\lim_{n\to\infty}\left\|\mathbf{A}[|u_n|^2] u_n\right\|_{L^2}^2 = \left\|\mathbf{A}\left[|u|^2\right] u\right\|_{L^2}^2
    $$
    by dominated convergence. It then follows that
    \begin{align*}
			\left\|\left(\nabla + \im\beta \mathbf{A}\left[|u|^2\right]\right) u\right\|_{L^2} & =\sup _{\|v\|=1}\left|\left\langle\nabla u + \im\beta \mathbf{A}\left[|u|^2\right] u, v\right\rangle\right| \\
			& =\sup _{\|v\|=1} \lim _{n \rightarrow \infty}\left|\left\langle\nabla u_n + \im\beta \mathbf{A}[|u_n|^2] u_n, v\right\rangle\right| \\
			& \leq \liminf _{n \rightarrow \infty} \sup _{\|v\|=1}\left|\left\langle\nabla u_n + \im\beta \mathbf{A}[|u_n|^2] u_n, v\right\rangle\right| \\
			& =\liminf _{n \rightarrow \infty}\left\|\left(\nabla + \im\beta \mathbf{A}[|u_n|^2]\right) u_n\right\|_{L^2}.
    \end{align*}
    Since $\|\cdot\|_{L_V^2}$ is weakly lower semicontinuous, 
    and $\int_{\R^2} |u_n|^4 \to \int_{\R^2} |u|^4$, by the Ladyzhenskaya--Gagliardo--Nirenberg interpolation inequality on $\R^2$ and \eqref{eq:strongL2convergence}, we have 
    $$
		E_{\beta,\gamma,V} = \liminf _{n \rightarrow \infty} \mathcal{E}_{\beta,\gamma,V}\bigl[u_n\bigr] \geq \mathcal{E}_{\beta,\gamma,V}[u] \geq E_{\beta,\gamma,V},
    $$
    which shows that $u$ is indeed the desired minimizer.
\end{proof}

%
%
\addcontentsline{toc}{section}{References}
\def\MR#1{} 
\newcommand{\etalchar}[1]{$^{#1}$}


\end{document}